\documentclass[11pt]{article}
\setlength{\textheight}{8.5in}
\setlength{\headheight}{0in}
\setlength{\headsep}{0in}
\setlength{\oddsidemargin}{0in}
\setlength{\textwidth}{6.5in}

\usepackage{amsmath,amsthm}
\usepackage{amssymb}
\usepackage{graphicx}

\newcommand{\pa}{\partial}

\newcommand{\be}{\begin{equation}}
\newcommand{\ee}{\end{equation}}

\usepackage{color}

\newcommand{\Ik}{\mathcal{I}_k}

\newtheorem{theorem}{Theorem}[section]
\newtheorem{corollary}{Corollary}[section]

\begin{document}

\title{Turing conditions for pattern forming systems on evolving manifolds}

\author{Robert A. Van Gorder\thanks{Department of Mathematics and Statistics, University of Otago, P.O. Box 56, Dunedin 9054, New Zealand ({rvangorder@maths.otago.ac.nz})} \and V{\'a}clav Klika\thanks{Department of Mathematics FNSPE, Czech Technical University in Prague, Trojanova 13, 12000 Prague, Czech Republic} \and Andrew L. Krause\thanks{Mathematical Institute, University of Oxford, Andrew Wiles Building, Radcliffe Observatory Quarter, Woodstock Road, Oxford, OX2 6GG, UK}}
\maketitle

\begin{abstract}
The study of pattern-forming instabilities in reaction-diffusion systems on growing or otherwise time-dependent domains arises in a variety of settings, including applications in developmental biology, spatial ecology, and experimental chemistry. Analyzing such instabilities is complicated, as there is a strong dependence of any spatially homogeneous base states on time, and the resulting structure of the linearized perturbations used to determine the onset of instability is inherently non-autonomous. We obtain general conditions for the onset and structure of diffusion driven instabilities in reaction-diffusion systems on domains which evolve in time, in terms of the time-evolution of the Laplace-Beltrami spectrum for the domain and functions which specify the domain evolution. Our results give sufficient conditions for diffusive instabilities phrased in terms of differential inequalities which are both versatile and straightforward to implement, despite the generality of the studied problem. These conditions generalize a large number of results known in the literature, such as the algebraic inequalities commonly used as a sufficient criterion for the Turing instability on static domains, and approximate asymptotic results valid for specific types of growth, or specific domains. We demonstrate our general Turing conditions on a variety of domains with different evolution laws, and in particular show how insight can be gained even when the domain changes rapidly in time, or when the homogeneous state is oscillatory, such as in the case of Turing-Hopf instabilities. Extensions to higher-order spatial systems are also included as a way of demonstrating the generality of the approach.\\
\\
keywords: pattern formation, Turing instability, evolving spatial domains, reaction-diffusion\\
\\
AMS Subject Classifications: 35B36, 92C15, 70K50, 58C40, 58J32
\end{abstract}

\section{Introduction}
Since Turing first proposed reaction--diffusion systems as a model for pattern formation \cite{Turing1952} much work has been done to understand the theoretical, as well as chemical and biological aspects of this mechanism \cite{baker2008partial, marcon2012turing}. Many authors have elucidated the importance of domain size and shape on the formation of patterns, and the impact of geometry on the kinds of admissible patterns that can arise due to a Turing instability \cite{Murray,seul1995domain}. While Turing's original reaction--diffusion theory postulated the existence of a pre-pattern before growth occurs, in the past few decades biological and theoretical evidence has suggested that growth of the patterning field itself influences the pattern forming potential of a system, and modulates the emergent patterns \cite{binder2008modeling, coen2004genetics, liu2006two, maini2002implications, miura2006mixed, raspopovic2014digit}. Since reaction--diffusion systems are more difficult to analyze on growing domains, pattern formation has been typically considered on different--sized static domains to simulate very slow growth \cite{Varea1997}. This requires the reaction and diffusion of the chemical species to occur on a much faster timescale than the growth \cite{klika2017history, madzvamuse2010stability}, and to be independent of the growth, although this assumption is certainly not valid for all systems in biology and chemistry. 

There have been a variety of studies connecting growth and pattern formation in reaction-diffusion systems. Uniform and isotropic domain growth in one--dimensional reaction--diffusion systems in slow and fast growth regimes was considered by \cite{Crampin}, and frequency-doubling of the emergent Turing patterns was demonstrated. This frequency-doubling was discussed with the aim to help resolve the (lack of) robustness of pattern formation in reaction-diffusion systems \cite{baker2008partial, barrass2006mode}. More recently, it was shown that such frequency-doubling can depend somewhat sensitively on the kind of growth rates involved, even in a 1-D domain growing isotropically \cite{ueda2012mathematical}. Turing and Turing-Hopf instabilities of the FitzHugh-Nagumo system in an exponentially and isotropically growing square were studied in \cite{castillo2016turing}, and this work suggested that anisotropy and curvature are important considerations for extending their analysis. Such instabilities were also studied on isotropically growing spherical and toroidal domains \cite{sanchez2018turing}. A general formulation of reaction--diffusion theory on isotropically evolving one and two--dimensional manifolds was given in \cite{Maini}, with motivation from biological settings where growth and curvature both play a role in organism development. Corrections to the classical conditions for Turing instabilities in the case of slow isotropic growth were derived by \cite{madzvamuse2010stability}, while \cite{hetzer2012characterization} considered a type of quasi-asymptotic stability, although large deviations from an approximately homogeneous state at finite time were not considered. In contrast, \cite{klika2017history} considered domain growth based on Lyapunov stability, which captures large deviations from the reference state at finite time before growth saturates, thereby capturing some of the history dependence inherent in the growth dynamics. The study of \cite{klika2017history} was able to relax a number of restrictive assumptions made in \cite{hetzer2012characterization} and \cite{madzvamuse2010stability}, although the results were obtained for fairly specific special cases. Beyond computing linear instability criteria, \cite{comanici2008patterns} analytically explored how patterns change and evolve under growth by exploiting the framework of amplitude equations. While analytical results on mode competition and selection can be valuable, these are often extremely limited as they only apply near the bifurcation boundary in the parameter space, and they become computationally intractable in many cases of interest \cite{krechetnikov2017stability}. While systems with time-dependent diffusion coefficients have been studied via asymptotic and Floquet-theoretic methods \cite{mendez2010reaction}, such results break down for domain evolution due to the dynamic nature of the base state.

Most of the above models of reaction--diffusion systems on growing domains only analyzed the case of isotropic (or apical) growth, which is unable to recapitulate the full range of complex biological structures found in developing organisms \cite{corson2009turning, peaucelle2015control, striegel2009chemically, ubeda2008root}. Investigating arbitrary anisotropic growth in the context of biological patterning is a natural extension to reaction--diffusion theories of pattern formation, and has been considered in biomechanical models of growth across a range of tissues and organisms \cite{amar2013anisotropic, bittig2008dynamics, menzel2005modelling, saez2007rigidity}. Additionally, contraction and other complex tissue movements have been observed in embryogenesis, suggesting the need to further generalize models of growth and domain restructuring over time beyond monotonically increasing domains \cite{amar2013anisotropic, nechaeva2002rhythmic, toyama2008apoptotic, wang2015three}. The influence of non--uniform domain growth on one--dimensional reaction--diffusion systems, including apical or boundary growth, was considered in \cite{crampin2002pattern}, while \cite{rossi2016control} studied concentration-dependent growth of a scalar reaction-diffusion equation on a time--dependent manifold. Anisotropic growth, consisting of independent dilations of an underlying manifold in each orthogonal Euclidean coordinate, was recently studied in \cite{krause2018influence}. Some experimental models of reaction--diffusion processes on growing domains (using, for instance, photosensitive reactions) have been explored, although these typically involve apical or otherwise spatially-dependent forms of growth \cite{miguez2006effect,konow2019turing}. Recent experiments have also explored hydrodynamic instabilities in time-dependent domains, finding important impacts of the growth on the nature of such instabilities and subsequent pattern evolution \cite{ghadiri_krechetnikov_2019}.

Difficulties arise when the local rate of volume expansion or contraction depends on the spatial coordinates (or more generally, the morphogen concentrations themselves), which induces an advection term from mass conservation of the respective chemical species. Such systems are exceptionally difficult to analyze due to spatial heterogeneity in diffusion and advection, in addition to their non-autonomous nature. Still, as we shall later discuss, some forms of anisotropic growth permit volume expansion or contraction which is global, depending only on time and not on the spatial coordinates. It is this class of growing domains for which we provide a general method to compute the instability of a spatially homogeneous equilibrium. This generalizes much of the above literature, and provides a way to compute a time-dependent instability criterion for a large class of growth functions for reaction--diffusion systems posed on smooth, compact, and simply connected, but otherwise arbitrary, manifolds.

The remainder of this paper is organized as follows. In Sec.~\ref{sec2} we outline the derivation of a general reaction-diffusion model on time-evolving domains. We give the precise mathematical formulation of component-wise dilational growth considered throughout the paper, and outline the general spectral problem. We also discuss several difficulties in the analysis of such systems, necessitating the need for new approaches from those often employed in the literature. In Sec.~\ref{sec3}, we present the main theoretical results for systems of two reaction-diffusion equations on evolving domains. After first obtaining a general instability result for second-order ODE through a comparison principle, we derive an instability criterion which generalizes the Turing conditions for diffusive instabilities to the non-autonomous case. We also discuss various reductions of these conditions, highlighting that they collapse to the standard Turing conditions on static domains in the appropriate limits. In addition to systems of reaction-diffusion equations, we also discuss the extension of our results to systems with higher-order space derivatives which are also heavily studied in the pattern formation literature, and capture non-local effects in a variety of models. In Sec. \ref{sec4}-\ref{shsec} we provide a number of applications of the theory, and compare our results with direct numerical simulations of various reaction diffusion systems on growing domains in one, two, and three spatial dimensions.  We generalize several examples from the literature without restriction to asymptotic regimes, and consider novel classes of domain evolution which have not yet been considered. In particular, applications of our approach for evolving one-dimensional intervals are described in Sec. \ref{sec4} (including the situation where the evolution is non-monotone), whereas the more complicated applications to evolving manifolds in two or more dimensions are given in Sec. \ref{mfds}. In addition to growing domains, our approach allows us to consider domains which evolve yet preserve area or volume (as discussed in Sec. \ref{sec44}), which has seemingly not been considered previously. We also give some examples related to higher-order systems on evolving manifolds in Sec. \ref{shsec}. We finally discuss and summarize our findings in Sec.~\ref{sec5}.

\section{General model and diffusive instability framework}\label{sec2}

\subsection{Reaction-diffusion systems on evolving domains}\label{formulation}
We consider a manifold $\Omega(t)\subseteq \mathbb{R}^N$ which grows in a dilational manner along each Euclidean coordinate axis. We also assume that the domain $\Omega (t)$ is compact, simply connected, smooth, and Riemannian, in order to ensure that the spectrum of the Laplace-Beltrami operator is discrete and non-negative for all time. Concentrations on manifolds with boundary will be subject to Neumann data at the boundary. We shall restrict our attention to growth for which the time evolution of $\Omega(t)$ results in spatially homogeneous volume expansion or contraction, and shall make this statement more precise later. The case of locally varying volume expansion or contraction results in strongly non-uniform growth which we do not consider.
 
Let $\hat{\Omega}(t)$ be a volume element of the manifold, such that $\hat{\Omega}(t) \subset \Omega(t)$.
Let $\mathbf{u}(\mathbf{X},t)$, $\mathbf{u}: \Omega(t)\times [0,\infty) \rightarrow \mathbb{R}^n$ be a concentration function defined on the manifold $\Omega(t)$. Here $\mathbf{u}$ may describe the concentration of $n\geq 2$ chemical species, or morphogens, on the manifold $\Omega (t)$, though other interpretations such as cells or effective genetic circuits use the same mathematical formulations \cite{kondo2010reaction}. We assume that $\mathbf{u}$ is $C^1([0,\infty))$ in time and $C^2(\Omega(t))$ in the spatial coordinates. We note that formalizing this space of functions is easier to do after mapping back to a static domain, which we will also do for analytic and numerical convenience shortly.

The conservation of mass equation reads
\begin{equation}
\label{CoM_eqn}
\frac{d}{dt} \int_{\hat{\Omega}(t)} \mathbf{u} \, d\Omega = \int_{\hat{\Omega}(t)} \left( - \nabla \cdot  \mathbf{j} + \mathbf{f}(\mathbf{u}) \right) d\Omega\,,
\end{equation}
where $\mathbf{j}$ denotes the fluxes of concentrations $\mathbf{u}$, $\mathbf{f}\in C^1(\mathbb{R}^n)$ is the function denoting the reaction kinetics, and $d\Omega$ is the local volume element on the manifold.
Using Reynold's transport theorem on the left hand side of Eq. \eqref{CoM_eqn}, we have
\begin{equation}
\label{RTT_eqn}
\frac{d}{dt}\int_{\hat{\Omega}(t)} \mathbf{u} \, d\Omega = \int_{\hat{\Omega}(t)} \left( \frac{\partial \mathbf{u}}{\partial t} + \nabla_{\Omega(t)}\cdot (\mathbf{Q}\mathbf{u}) \right) d\Omega\,,
\end{equation}
where $\mathbf{Q}$ is the local velocity vector field generated by changes in the manifold $\Omega(t)$. We denote $\nabla_{\Omega(t)}\cdot$ as the  divergence operator on $\Omega(t)$ and $\nabla_{\Omega(t)}^2$ to be the Laplace--Beltrami operator on $\Omega(t)$.

By applying Eq.~\eqref{RTT_eqn} to Eq.~\eqref{CoM_eqn} and using Fick's law of diffusion, we have the reaction--diffusion--advection system 
\begin{equation} \label{RD_system:eqn}
\frac{\partial \mathbf{u}}{\partial t} + \nabla_{\Omega(t)} \cdot (\mathbf{Q}\mathbf{u}) = D \nabla_{\Omega(t)}^2 \mathbf{u} + \mathbf{f}(\mathbf{u}) .
\end{equation}
Here $D = \text{diag}(d_1,\dots , d_n)$ is the matrix of diffusion parameters. The term $\nabla_{\Omega(t)} \cdot (\mathbf{Q}\mathbf{u})$ can be written as $ \mathbf{Q} \cdot \nabla_{\Omega(t)} \mathbf{u} + \mathbf{u} \cdot \nabla_{\Omega(t)}\mathbf{Q}$. We note that the term $\mathbf{Q} \cdot \nabla_{\Omega(t)} \mathbf{u}$ corresponds to advection due to local growth of the manifold, whereas the $\mathbf{u} \cdot \nabla_{\Omega(t)}\mathbf{Q}$ term corresponds to dilution of the concentrations $\mathbf{u}$ due to local volume changes. If $\Omega(t)$ is a manifold with boundary $\pa \Omega(t)$, we assume no flux conditions $\frac{\pa \mathbf{u}}{\pa \mathbf{n}} = \mathbf{0}$ for $\mathbf{X}\in \pa \Omega(t)$.

\subsection{Domain evolution as dilations in each orthogonal direction}\label{seccurve}
We consider the case where the evolution of the manifold is such that volume expansion or contraction does not vary locally, in other words such that $\nabla_{\Omega(t)}\cdot \mathbf{Q}$ depends strictly on time and never on spatial coordinates. Other kinds of growth, such as apical growth or anisotropic growth of surfaces, may result in spatially dependent volume expansion \cite{krause2018influence}, and while interesting, we do not consider this manner of growth. 

Consider moving coordinates $\mathbf{X}$ written as 
\begin{equation}\label{coords}
\mathbf{X} = \left(r_1(t)\chi_1(\mathbf{x}), \dots , r_N(t)\chi_N(\mathbf{x}) \right),
\end{equation}
for stationary coordinates on the initial manifold $\mathbf{x}=(x_1,\dots ,x_N)\in \Omega^*=\Omega(t=0)$. In addition to covering all cases of general dilational growth (where the dilation may be different along each coordinate), this assumption will ensure that the metric tensor for these coordinates, $G$, will have the property that $\det G$ is multiplicatively separable in time and space. Here each coordinate again has an independent dilation function $r_j(t)$, though each $X_j$ depends possibly on multiple stationary coordinates. When $r_j(t)=r(t)$ for all $j=1,\dots ,N$, we have isotropic evolution of the manifold $\Omega(t)$. When at least two $r_j(t)$'s differ, then we have anisotropic evolution which is still dilational in the individual orthogonal Cartesian directions.

In order to remove the advection term induced by the growing manifold, $\nabla_{\Omega(t)} \cdot (\mathbf{Q}\mathbf{u})$,  we apply a change of variables to a moving coordinate system. As the space and time variables in $\mathbf{X}$ are separable, and noting $\nabla_{\Omega(t)} \cdot (\mathbf{Q}\mathbf{u}) = (\nabla_{\Omega(t)}\cdot \mathbf{Q})\mathbf{u} + \mathbf{Q}\cdot (\nabla_{\Omega(t)}\mathbf{u})$, 
the change of variable will contribute a term $-\mathbf{Q}\cdot (\nabla_{\Omega(t)}\mathbf{u})$, canceling the latter term. After the coordinate change, we will have a contributing advection term of the form $(\nabla_{\Omega(t)}\cdot \mathbf{Q})\mathbf{u}$.
We find \cite{krause2018influence}
\begin{equation} \label{adgrowth:eqn}
\nabla_{\Omega(t)}\cdot \mathbf{Q} = \frac{\partial}{\partial t}\left( \log\left( |\det {G}|^{\frac{1}{2}}\right) \right) \,.
\end{equation}

From \eqref{adgrowth:eqn}, we see that the manner of growth for which volume expansion or contraction is spatially homogeneous is equivalent to considering a coordinate chart such that the time derivative of $\log (|\det G|)$ is independent of space, i.e.~a coordinate chart for which $\det G$ is multiplicatively separable in space and time variables. Considering only such moving coordinates \eqref{coords}, we then have that \eqref{RD_system:eqn} becomes \cite{krause2018influence}
\begin{equation} \label{growthgen}
\frac{\partial \mathbf{u}}{\partial t} =  \frac{D}{|\det G|^{\frac{1}{2}}}\sum_{i,j=1}^N \frac{\pa}{\pa x_i}\left( |\det G|^{\frac{1}{2}}G_{ij}^{-1}\frac{\pa \mathbf{u}}{\pa x_j}\right)  - \frac{\partial}{\partial t}\left( \log\left( |\det {G}|^{\frac{1}{2}}\right) \right)\mathbf{u}+ \mathbf{f}(\mathbf{u}) .
\end{equation}
The Laplace-Beltrami operator on the fixed reference manifold $\Omega^*$ is time-dependent, as the coordinates \eqref{coords} depend explicitly on time.

\subsection{General linear instability analysis} \label{sec.GenLinStabAnal}
We now consider diffusion-driven instabilities arising from systems of the form \eqref{growthgen}. Consider first the eigenvalue problem 
\begin{equation}\label{evp1}
\nabla_{\Omega(t)}^2 \Psi = - \rho \Psi ,
\end{equation}
which is held subject to $\frac{\pa \Psi}{\pa \mathbf{n}} = 0$ for $\mathbf{X} \in \pa \Omega(t)$ when $\Omega(t)$ has boundary. From the assumptions made earlier on $\Omega(t)$, for any fixed time $t\geq 0$, we have that a non-negative spectrum $\rho_k(t)$ exists, where $0=\rho_0(t) < \rho_1(t) \leq \rho_2(t) \leq \cdots \rightarrow \infty$. In general, for manifolds of dimension greater than one, $k$ denotes a multi-index. As the growth functions are assumed smooth, and $\Omega(t)$ is assumed a simply connected Riemannian manifold with smooth boundary for all $t\geq 0$, then we shall assume that $\Omega(t)$ is such that the spectrum can be continued as a smooth function of time, with $\rho_k(t)> 0$ for all $k\geq 1$. For our purposes, we assume any given $\Omega(t)$ permits such a construction, as we are concerned with dynamics on a prescribed $\Omega(t)$. We denote the corresponding eigenfunctions by $\Psi_k(\mathbf{X})$. Constructing such eigenvalues and eigenfunctions can be very difficult, and although our results only require existence rather than explicit construction, we will give examples later for domains where such constructions can be carried out. 

If we carry out the change of coordinates \eqref{coords}, and note that the stationary form of each eigenfunction is
\begin{equation}\label{efxn}
\psi_k(\mathbf{x}) = \Psi_k\left( \frac{X_1}{r_1(t)}, \dots, \frac{X_N}{r_N(t)}\right) = \Psi_k\left( \chi_1(\mathbf{x}),\dots, \chi_N(\mathbf{x})\right), 
\end{equation}
the eigenvalue problem \eqref{evp1} is put into the form
\begin{equation}\label{evp2}
\frac{1}{|\det G|^{\frac{1}{2}}}\sum_{i,j=1}^N \frac{\pa}{\pa x_i}\left( |\det G|^{\frac{1}{2}}G_{ij}^{-1}\frac{\pa \psi_k(\mathbf{x})}{\pa x_j}\right)  = - \rho_k(t) \psi_k(\mathbf{x})\,.
\end{equation} 
In the special case where domain evolution is isotropic, that is $r_1(t)=\cdots =r_N(t) = r(t)$, the Laplace-Beltrami operator simplifies so that we have 
$\frac{1}{r(t)^2}\nabla_{\Omega(0)}^2 \psi_k(\mathbf{x})  = - \frac{\rho_k(0)}{r(t)^2} \psi_k(\mathbf{x})$, and hence $\rho_k(t) = \frac{\rho_k(0)}{r(t)^2}$. 
This is of course not true in general, and for more complicated growth finding $\rho_k(t)$ can be involved. However, under growth of the form \eqref{coords}, each eigenfunction is stationary, and each eigenvalue is a function of time as smooth as the dilations $r_i(t)$, granting existence of the $\rho_k(t)$. 

From the form of growth assumed, we have that volume expansion is not dependent on space, so we can write 
\begin{equation}\label{vol}
\frac{\dot{\mu}(t)}{\mu(t)} = \frac{\partial}{\partial t}\left( \log\left( |\det {G}|^{1/2}\right) \right)
\end{equation}
for some function $\mu(t)$. 
As $|\det {G}| = |\mathcal{G}_1(t)\mathcal{G}_2(\mathbf{x})|$, then $\mu(t) = |\mathcal{G}_1(t)|^{1/2}$. 

A spatially uniform solution to \eqref{growthgen}, $\mathbf{u}(\mathbf{x},t) = \mathbf{U}(t)$, is governed by the equation
\begin{equation} \label{uniform}
\frac{d \mathbf{U}}{d t} =  - \frac{\dot{\mu}(t)}{\mu(t)}\mathbf{U}+ \mathbf{f}(\mathbf{U}) , \quad \mathbf{U}(0) = \mathbf{U}^*,
\end{equation}
where we choose $\mathbf{U}^*$ to satisfy $\mathbf{f}(\mathbf{U}^*)=\mathbf{0}$. We choose this initial condition so that the dynamics will initially agree with those of a time-independent steady state in the absence of growth. In this way, if there is no growth (or, more generally, when there is no net volume expansion or contraction so that $\dot{\mu} \equiv 0$ for all $t\geq 0$), then the exact solution to \eqref{uniform} is $\mathbf{U}(t) \equiv \mathbf{U}^*$, which is what is assumed when deriving the standard Turing conditions on a static domain. Therefore, \eqref{uniform} generalizes the static uniform base state to account for dilution due to growth.\footnote{We note that for complex reaction-diffusion systems spatio-temporal base states consisting of plane waves are also possible \cite{knobloch2014stability}. However, as our concern is with generalising the Turing conditions to account for evolving domains, we only consider spatially uniform base states. Additionally, plane waves do not generalize well to manifolds which are not flat.}

We consider a perturbation of this spatially homogeneous solution in order to determine its stability. Although the solution to \eqref{uniform} may tend to a steady state, this is not required, and for many examples will not occur. We choose a general spatial perturbation of the form 
\begin{equation}\label{pert2}
\mathbf{u}(\mathbf{x},t) = \mathbf{U}(t) + \epsilon \frac{\psi_k(\mathbf{x})}{\mu(t)} \mathbf{V}_k(t),
\end{equation}
where $\psi_k(\mathbf{x})$ is the $k$th scaled eigenfunction in \eqref{efxn} with corresponding Laplace-Beltrami eigenvalue $\rho_k(t)$. For each $k = 1,2,\dots,$ we then have the following linearized problem for the growth or decay of the perturbation
\begin{equation}\label{ode4}
\frac{d\mathbf{V}_k}{dt} = - \rho_k(t) D\mathbf{V}_k + J(\mathbf{U})\mathbf{V}_k,
\end{equation} 
which is an ODE for the unknown function $\mathbf{V}_k(t)$, the long-time asymptotic behavior of which determines the stability or instability of the perturbation \eqref{pert2}. The matrix $J(\mathbf{U})$ denotes the (in general, time-dependent) Jacobian matrix corresponding to the linearization of $\mathbf{f}$ at $\mathbf{u}=\mathbf{U}(t)$.

Equation \eqref{ode4} results in a solution $\mathbf{V}_k(t)$, and we say that a perturbation \eqref{pert2} is asymptotically stable for a given $k\in\mathbb{N}$ provided $\mu(t)^{-1}|\mathbf{V}_k(t)|\rightarrow 0$ as $t\rightarrow \infty$. We say that the perturbation \eqref{pert2} is asymptotically unstable for a given $k\in\mathbb{N}$ provided $\mu(t)^{-1}|\mathbf{V}_k(t)|\rightarrow \infty$ as $t\rightarrow \infty$. If $V_k(t)$ satisfies neither of these, then we might say that the perturbation is neutrally stable or unstable, depending upon the context. This is akin to the classical Turing perturbation for which $\mathbf{V}_k(t) = \mathbf{C} \exp(\lambda(k) t)$, where $\mathbf{C}\in \mathbb{R}^N$ is a constant vector, and $\lambda(k) \in \mathbb{C}$, in which case the perturbation is stable if $\text{Re}(\lambda(k)) <0$ and unstable if $\text{Re}(\lambda(k)) >0$. We will also discuss conditions for \emph{transient} stability or instability, wherein a perturbation may decay or grow for some set of time, as this can generate a pattern in the fully nonlinear setting (though such linear analysis cannot guarantee that an instability leads to a patterned state in finite time periods). As we shall see, transient instabilities play a much larger role in evolving domains than any asymptotic stability criterion.

\subsection{Systems with higher-order spatial derivatives}\label{hoformulation}
There are various applications for which higher-order spatial derivatives are used, with the Swift-Hohenberg equation \cite{swift1977hydrodynamic, cross1993pattern}, Cahn-Hilliard equation \cite{cahn1958free, kielhofer1997pattern, novick1984nonlinear}, and Kuramoto-Sivashinsky \cite{kuramoto1978diffusion, sivashinsky1977nonlinear, facsko2004dissipative, hyman1986kuramoto, rost1995anisotropic} equations some common examples of pattern forming systems with fourth-order spatial derivatives, with related models of even higher-order arising in applications \cite{korzec2008stationary, pawlow2011sixth}. Such higher-order derivatives often represent nonlocal interactions, and this has been extensively applied in biological applications such as cellular signalling and ecological interactions; see \cite{cohen1981generalized, ochoa1984generalized} and Chapter 11 of \cite{MurrayI}. We show here how to carry out similar analysis outlined above for such higher-order equations on evolving spatial domains. Similar to what we have done in Section \ref{formulation}, we may write such higher-order systems in the form 
\begin{equation}\label{higher1}
\frac{\partial \mathbf{u}}{\partial t} + \nabla_{\Omega(t)} \cdot (\mathbf{Q}\mathbf{u}) = D \mathcal{P}\left(\nabla_{\Omega(t)}^2\right) \mathbf{u} + \mathbf{f}(\mathbf{u}),
\end{equation}
where $\mathcal{P}$ is a polynomial of degree $p\geq 1$ satisfying $\mathcal{P}(0)=0$, and the second term involving $\mathbf{Q}$ again arises from a conservation principle on the evolving manifold. On manifolds with boundary, we consider a generalisation of no-flux boundary conditions $\frac{\partial^{2\ell -1} \mathbf{u}}{\partial^{2\ell -1} \mathbf{n}}=\mathbf{0}$ for $\mathbf{X}\in \partial \Omega(t)$, where $\ell = 1,2,\dots, p$. All other quantities are as defined earlier, with the only difference between equation \eqref{higher1} and the earlier discussed equation \eqref{RD_system:eqn} being the more general operator involving the spatial derivatives. Dynamics from the complex Swift-Hohenberg equation, with $u^3$ replaced by $|u|^2u$, were considered on an evolving domain by \cite{knobloch2014stability} and \cite{krechetnikov2017stability}, with the domain being a time-dependent interval. 

Mapping the problem \eqref{higher1} to the stationary frame (assuming that the manifold $\Omega(t)$ obeys all properties required in Section \ref{seccurve}), we find that \eqref{higher1} is put into the form
\begin{equation} \label{higher2}
\frac{\partial \mathbf{u}}{\partial t} =  D \mathcal{P}\left(\frac{1}{|\det G|^{\frac{1}{2}}}\sum_{i,j=1}^N \frac{\pa}{\pa x_i}\left( |\det G|^{\frac{1}{2}}G_{ij}^{-1}\frac{\pa \mathbf{u}}{\pa x_j}\right) \right)  - \frac{\partial}{\partial t}\left( \log\left( |\det {G}|^{\frac{1}{2}}\right) \right)\mathbf{u}+ \mathbf{f}(\mathbf{u}) .
\end{equation}
Note that the equation governing a spatially uniform state is the same as in \eqref{uniform}. 

Regarding the spectral problem, as the operator $\mathcal{P}$ is a polynomial of the Laplace-Beltrami operator, we have 
\begin{equation}\label{higher3}
\mathcal{P}\left(\nabla_{\Omega(t)}^2\right)\psi_k = \sum_{\ell=1}^p \left(\nabla_{\Omega(t)}^2\right)^\ell \psi_k  = \sum_{\ell=1}^p \left(-\rho_k(t)\right)^\ell \psi_k = \mathcal{P}\left(-\rho_k(t)\right)\psi_k
\end{equation}
for any spatial eigenfunction $\psi_k$ satisfying \eqref{evp1} and \eqref{efxn}. Therefore, assuming a spatial perturbation $\psi_k$ involving the $k$th spatial eigenfunction as in \eqref{pert2}, we obtain the problem 
\begin{equation}\label{higher4}
\frac{d\mathbf{V}_k}{dt} = D\mathcal{P} \left( -\rho_k(t) \right) \mathbf{V}_k + J(\mathbf{U})\mathbf{V}_k.
\end{equation} 
The qualitative analysis for \eqref{higher4} is the same as that discussed for \eqref{ode4}. 

Of course, depending upon the form of the boundary conditions, there can be other spatial eigenfunctions for higher-order problems; that is to say, the Laplace-Beltrami eigenfunctions $\psi_k$ described in \eqref{evp1} and \eqref{efxn} are in general a subset of the eigenfunctions possible for a given higher-order eigenvalue problem such as \eqref{higher3}. The study of such cases would involve the classification of eigenfunctions and eigenvalues of the spatial problem $\mathcal{P}\left(\nabla_{\Omega(t)}^2\right)\hat{\psi}_k = -\zeta_k \hat{\psi}_k$, where $\hat{\psi}_k$ need not be an eigenfunction of the standard Laplace-Beltrami problem $\nabla_{\Omega(t)}^2 \psi_k = \rho_k \psi_k$. For the purpose of this paper, we shall primarily consider only the relatively simple case of Laplace-Beltrami eigenfunctions, which will be sufficient for studying the instability problem. Of course, given a desired operator, $\mathcal{P}\left(\nabla_{\Omega(t)}^2\right)$, one may perform similar calculations and scaling to what we do for the standard Laplace-Beltrami case, in order to find the time-dependent spectrum $\zeta_k=\zeta_k(t)$. For most higher-order elliptic operators of practical importance, $\zeta_k$ will be bounded away from $-\infty$ and non-decreasing, $-\infty < \zeta_0 \leq \zeta_1 < \cdots$ with $\zeta_k \rightarrow \infty$ as $k\rightarrow \infty$ \cite{colbois2019neumann}, although for many commonly studied operators on bounded domains the spectrum is non-negative \cite{pleijel1950eigenvalues, levine1985unrestricted, laptev1997dirichlet}. We avoid a discussion on the classification of such problems, noting that for many higher-order problems the behavior of the spectrum appears to be an open problem. Assuming one can find the spectrum $\zeta_k$, we briefly comment on how to make use of this in Section \ref{instabhigher}.

\subsection{Difficulties arising in the study of such systems}\label{difficult}
The study of the asymptotic stability of systems of the form \eqref{ode4} is made difficult for a number of reasons, which we now outline.

The base states governed by \eqref{uniform} depend on the global rate of volume expansion or contraction, and hence are time-varying. Therefore, we expand and linearize the reaction-diffusion system about a spatially uniform yet temporally varying base state, resulting in a non-autonomous Jacobian matrix. This non-autonomy is in addition to the non-autonomy due to the spectrum of the evolving domain, hence the systems for the linearization $\mathbf{V}_k(t)$ given in \eqref{ode4} are non-autonomous in all components rather than just in diagonal components. As the Jacobian $J$ depends on the specific form of the nonlinear reaction kinetics $\mathbf{f}$, non-autonomous entries in $J$ may be non-monotone, even for monotone growth functions. Compared to their autonomous counterparts, there is very little general theory for the dynamics of such systems. 

We note that many works in this area (see, for instance, \cite{madzvamuse2008stability}) will attempt to overcome this complication by assuming a time-independent steady state solution of the ODE system governing the reaction kinetics in the presence of growth. This would be equivalent to obtaining a fixed point of the right hand side of \eqref{uniform}, i.e., to finding an algebraic solution $\mathbf{U}^*$ of the algebraic equation $\frac{\dot{\mu}(t)}{\mu(t)}\mathbf{U} = \mathbf{f}(\mathbf{U})$. There are two problems with this, one regarding feasibility and one more philosophical. Regarding feasibility, a time-independent steady state $\mathbf{U}^*$ exists only if $\frac{\dot{\mu}(t)}{\mu(t)}$ is identically equal to a constant for all time, which is restrictive of the kinds of growth considered. One exception is to consider a state which is identically zero, provided that the reaction kinetics $\mathbf{f}$ permit this. For a zero state, the volume expansion or contraction will still permit a zero state. Of course, this is then fairly restrictive on the form of the reaction kinetics, particularly in light of the fact that for many physical or biological systems, loss of stability of a positive steady state is useful for applications. Unlike what is done in the static-domain case, shifting an equilibrium to the zero state would influence the dynamics due to the dilution term, as \eqref{ode4} would then no longer be a homogeneous system.

In order to remedy this, one may be tempted to instead consider the limit $t \rightarrow \infty$, for which taking either of these quantities to be constant (at least in the case of growth no more rapid than exponential) is seemingly more sensible. This leads to the second, more philosophical, problem. If one is interested in understanding how both growth and diffusion interact to induce the Turing instability and resulting pattern formation, then as pointed out in \cite{klika2017history}, history of the domain growth must factor into the Turing conditions in some manner. If one neglects growth in the base state, then one is arriving at the final spatially uniform state after growth has occurred and effectively obtaining Turing conditions for the final configuration of the problem domain. Depending on the properties of the growth function, mass conservation may result in drastic changes in the spatially uniform state over time, and the changes will become more drastic with an increased number of spatial dimensions. As such, we maintain this dependence on growth in the base states despite the added mathematical difficulties and complications. We shall later show that there is indeed one natural scenario for which the base state can be assumed time-independent, corresponding to domains which evolve in such a way that preserves volume and hence mass. 

Regarding a second difficulty, we remark that eigenvalues are not the appropriate criterion to employ for determining the long-time dynamics of such non-autonomous systems \cite{josic2008unstable, madzvamuse2010stability, mierczynski2017instability}. Signing the real part of eigenvalues of an appropriate Jacobian matrix is the standard approach for determining the stability of an autonomous ODE system, and is the approach commonly used to deduce conditions for the Turing instability. For non-autonomous systems of the form $\dot{\mathbf{Y}}=A(t)\mathbf{Y}$, this approach is neither informative nor appropriate. For sake of demonstration, \cite{vinograd1952criterion} provide an example of a time-dependent matrix $A(t)$ with strictly negative eigenvalues admitting a solution which grows without bound as $t\rightarrow \infty$, while \cite{wu1974note} give an example of $A(t)$ with one positive eigenvalue that results in bounded solutions. These two counterexamples demonstrate that eigenvalues are predictive of neither stability nor instability of non-autonomous ODE systems. Furthermore, employing time-dependent eigenvalues is perhaps more dubious, and we avoid making use of eigenvalues in this manner.

A final difficulty, which is more prominent in non-autonomous systems, is the transient growth and hence the limitation in the correspondence between the asymptotic stability of the linearized system and the actual long-time evolution. In the simpler autonomous case (a static domain) it can be shown that the significance of this transient effect is limited only to a fine parameter tuning (at the fringe of the classical Turing space) \cite{klika2017significance}. However, in non-autonomous systems these transient effects can become more frequent, dependent on the wavenumber and initial conditions. In particular, it was shown that for (exponential) growth with a characteristic time-scale comparable to the characteristic time-scale of reaction kinetics, all wavenumbers above certain threshold grow in initial times yielding a breakdown of the continuum description in finite time \cite{klika2017history}. Therefore, in order to understand the emergence of patterns when the reaction kinetics are on a compatible timescale with evolution of the domain, it is necessary to consider transient pattern formation, and hence necessary to consider a criterion for the emergence of transient instabilities. This will entail the consideration of spatial modes which may be unstable over some finite interval, before again becoming stable for large time, since by such a large time the pattern has already been selected (the actual patterning process occurs within finite time). 

In light of the above, we will develop a criterion for the \textit{transient} instability of specific spatial perturbations, akin to the Turing criteria for static domains. This will allow us to understand both which spatial modes result in instability and lead to possible patterning, but also the duration over which such instabilities persist. We shall later show that there is good agreement between full numerical simulations of the reaction-diffusion dynamics and the selection of spatial perturbations which are transiently unstable under our criterion.  

\section{Turing instability criteria on evolving domains}\label{sec3}
As pointed out in Section \ref{difficult}, the perturbation problem is non-autonomous and hence we can no longer rely on eigenvalues. Furthermore, it is transient dynamics, rather than dynamics as $t\rightarrow \infty$, which lead to patterning in reaction-diffusion systems on evolving domains. As such, we consider a comparison principle for the growth of a spatial perturbation, over some interval of time. This will allow us to determine when a perturbation grows with a certain rate and leads to a transient instability. As we shall later show with numerical simulations, these transiently unstable spatial modes do indeed correspond to patterns formed under the full nonlinear dynamics, subject to the restrictions of a linear analysis. Once a nonlinear pattern has developed, our results are no longer formally valid, though they can give some insight into pattern evolution as we will show later.

The long-time behavior of generic non-autonomous systems such as \eqref{ode4} are too complicated to consider in full generality (as even in the autonomous case, one would appeal to the Routh-Hurwitz stability criterion which becomes cumbersome for large systems \cite{satnoianu2000turing}). In what follows, we restrict our attention to cases commonly arising in the literature. We consider the $n=2$ case for reaction-diffusion systems in detail, as this is the standard case considered in the literature for activator-inhibitor Turing systems. Still, if one is concerned with particular reaction kinetics with $n\geq 3$, then \eqref{ode4} can be solved numerically. We also consider systems which are higher-order in space. Even scalar systems of such a form can permit spatial patterning, and we consider both the scalar case as well as the case of two coupled equations. In all of these results, a dot over a quantity denotes a time derivative, two dots over a quantity denote the second time derivative.

\subsection{Comparison principle}\label{cp}
Growth rates for a scalar first-order non-autonomous problem are trivial to obtain, and we do not discuss this case. The corresponding second-order problem is not as simple, and we establish some growth bounds on general second-order non-autonomous ODE of the form
\begin{equation}\label{gen1}
\ddot{Y} + F(t) \dot{Y} + G(t) Y = 0\,.
\end{equation}
There have been a variety of results for second-order non-autonomous ODE systems \cite{gerber2003riccati, grigoryan2015stability, ince, josic2008unstable}. Due to the breakdown of oscillating solutions, determining conditions for stability can be quite involved and can depend strongly on the properties of non-autonomous terms. On the other hand, obtaining conditions which are sufficient for instability can be viewed as somewhat easier. We begin with a result which gives sufficient conditions for a solution $Y(t)$ to \eqref{gen1} to grow on an arbitrary time interval, which we shall denote $\mathcal{I}\subseteq [0,\infty)$. In particular we can choose $\mathcal{I}$ unbounded to satisfy $|Y(t)| \rightarrow \infty$ as $t\rightarrow \infty$ at a prescribed rate of growth, or $\mathcal{I}$ bounded to only denote regions of transient instability. While such results will be sufficient rather than necessary for instability, as we shall later see, these results will provide the most natural generalization of the standard algebraic Turing instability conditions.

\begin{theorem}\label{Thm1}
Let $\Phi\in C^2(\mathbb{R})$ such that $\Phi(t) > 0$ for all $t\in\mathcal{I}$.  Consider the ODE \eqref{gen1} and suppose that 
\begin{equation}\label{genbound}
G(t) \leq - \frac{\ddot{\Phi}}{\Phi} - \frac{\dot{\Phi}}{\Phi}F(t), \quad t\in\mathcal{I}.
\end{equation}
Then, \eqref{gen1} has a fundamental solution $Y(t)$ with $|Y(t)|\geq\Phi(t)$ for all $t\in\mathcal{I}$.
\end{theorem}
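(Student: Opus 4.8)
The plan is to use the fact that hypothesis \eqref{genbound}, multiplied through by $\Phi>0$, asserts exactly that $\Phi$ is a subsolution of the operator in \eqref{gen1}, i.e.\ $\mathcal L\Phi := \ddot\Phi + F\dot\Phi + G\Phi \le 0$ on $\mathcal I$, and then to compare a suitable solution of \eqref{gen1} against $\Phi$ via the substitution $Y = \Phi Z$. Inserting this into \eqref{gen1} and using $\Phi\in C^2$ with $\Phi>0$ on $\mathcal I$, a direct computation gives
\[
\Phi\ddot Z + \bigl(2\dot\Phi + F\Phi\bigr)\dot Z + \bigl(\ddot\Phi + F\dot\Phi + G\Phi\bigr)Z = 0,
\]
so dividing by $\Phi$ yields $\ddot Z + P(t)\dot Z + Q(t)Z = 0$ with $P = F + 2\dot\Phi/\Phi$ continuous and, crucially, $Q = (\mathcal L\Phi)/\Phi \le 0$ on $\mathcal I$. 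Since $\Phi>0$, proving $|Y|\ge\Phi$ on $\mathcal I$ reduces to producing a solution $Z$ of this reduced equation with $Z\ge 1$ on $\mathcal I$.

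I would then single out the solution $Z$ with $Z(t_0)=1$ and $\dot Z(t_0)=0$ at the left endpoint $t_0$ of $\mathcal I$ (taking $\mathcal I$ to be an interval, as in all the applications; if $t_0\notin\mathcal I$ one uses an interior base point and runs the argument in both time directions, which is symmetric under time reversal). With the integrating factor $\mu(t)=\exp\!\bigl(\int_{t_0}^t P(s)\,ds\bigr)>0$ the equation reads $(\mu\dot Z)' = -\mu Q Z$; integrating twice and using $\dot Z(t_0)=0$ converts it into the Volterra integral equation
\[
Z(t) = 1 + \int_{t_0}^t K(t,r)\,Z(r)\,dr, \qquad K(t,r) = \bigl(-\mu(r)Q(r)\bigr)\int_r^t \frac{ds}{\mu(s)} \ge 0,
\]
the non-negativity of the kernel being precisely the content of $Q\le 0$ and $\mu>0$.

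Solving this by Picard iteration --- equivalently via the Neumann series $Z = \sum_{n\ge 0} Z_n$ with $Z_0\equiv 1$ and $Z_{n+1}(t) = \int_{t_0}^t K(t,r)\,Z_n(r)\,dr$, which converges uniformly on compact subintervals because $K$ is continuous and bounded there --- yields a series all of whose terms are non-negative. Hence $Z = 1 + \sum_{n\ge 1} Z_n \ge 1$ on $\mathcal I$, so $Y = \Phi Z$ satisfies $|Y(t)| = \Phi(t)Z(t) \ge \Phi(t)$; moreover $Y$ is nontrivial since $Y(t_0) = \Phi(t_0) > 0$, so it extends to a fundamental system. I expect the only real friction to lie in the endpoint bookkeeping and in verifying that $P$ and $Q$ are regular enough to set up the integrating factor and the iteration; the fact that \eqref{genbound} is a non-strict inequality ($Q\le 0$ rather than $Q<0$) costs nothing here, which is why the Volterra/positivity route is preferable to a continuity-and-bootstrap argument, in which the equality case would need separate treatment.
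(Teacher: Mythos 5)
Your proof is correct, and it takes a genuinely different route from the paper's. The paper fixes the same solution you do (initial data $Y(t_0)=\Phi(t_0)$, $\dot Y(t_0)=\dot\Phi(t_0)$), but analyzes it through the logarithmic-derivative substitution $Y(t)=Y(t_0)\exp\bigl(\int_{t_0}^t Z\,ds\bigr)$, which turns \eqref{gen1} into a Riccati equation; the hypothesis \eqref{genbound} then makes $Z_1=\dot\Phi/\Phi$ an explicit subsolution, and a scalar differential-inequality comparison followed by integration and exponentiation gives $Y\geq\Phi$ (the equality case is disposed of separately at the start, since then $\Phi$ itself solves the equation). Your reduction $Y=\Phi Z$ instead converts the hypothesis into non-positivity of the zeroth-order coefficient $Q=(\ddot\Phi+F\dot\Phi+G\Phi)/\Phi$, and the two integrations with $\dot Z(t_0)=0$ give a Volterra equation with non-negative kernel, so the Neumann series yields $Z\geq 1$ directly. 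What your route buys: the non-strict inequality and the equality case are handled uniformly, no Riccati comparison lemma is invoked, and you never need $Y$ to be nonvanishing a priori (the exponential substitution implicitly presupposes this, whereas you obtain $Z\geq 1>0$ as output); you also correctly note the two-sided/time-reversal bookkeeping when the base point is interior to $\mathcal{I}$, which the paper glosses over. What the paper's route buys: it is computationally shorter, exhibits the comparison function $\dot\Phi/\Phi$ explicitly, and the same Riccati machinery is reused almost verbatim with the reversed inequality to prove the complementary upper-bound result (their Theorem on decay rates), so it integrates more tightly with the rest of their argument.
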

\begin{proof}
We begin with the case where equality holds in the bound \eqref{genbound}. For this case, one may verify that $\Phi(t)$ is in the fundamental solution set of \eqref{ode4} and hence for general initial data \eqref{ode4} has one solution satisfying $|Y(t)| = \Phi(t)$ for any $t\in\mathcal{I}$, although there may be a second solution which grows faster. Therefore $|Y(t)|\geq\Phi(t)$ for all $t\in\mathcal{I}$, with at least equality holding.

Next, consider $G(t) = -\frac{\ddot{\Phi}}{\Phi} - \frac{\dot{\Phi}}{\Phi}F(t) - H(t)$ for some $H(t) \geq 0$ for all $t\in\mathcal{I}$. Then, \eqref{gen1} takes the form
\begin{equation} \label{thm1c}
\ddot{Y} + F(t) \dot{Y} - \left(\frac{\ddot{\Phi}}{\Phi} + \frac{\dot{\Phi}}{\Phi}F(t)+ H(t)\right) Y = 0.
\end{equation}
There are two fundamental solutions to this equation, and any solution will be a linear combination of these. 

We choose initial data $Y(t_0) = \Phi(t_0)$ and $\dot{Y}(t_0) = \dot{\Phi}(t_0)$, and make the change of variable $Y(t) = Y(t_0)\exp\left( \int_{t_0}^t Z(s)ds\right)$, which puts \eqref{thm1c} into the form of the Riccati equation
\begin{equation}
\dot{Z} = - Z^2 - F(t)Z + \frac{\ddot{\Phi}}{\Phi} + \frac{\dot{\Phi}}{\Phi}F(t) + H(t).
\end{equation}
Note that $Z(t_0) = \dot{\Phi}(t_0)/\Phi(t_0)$, so we have
\begin{equation}\label{thm1de}
\dot{Z}  = - Z^2 - F(t)Z + \frac{\ddot{\Phi}}{\Phi} + \frac{\dot{\Phi}}{\Phi}F(t) + H(t)\geq  -Z^2 - F(t)Z + \frac{\ddot{\Phi}}{\Phi} + \frac{\dot{\Phi}}{\Phi}F(t)
= \dot{Z}_1,
\end{equation}
where we define $Z_1(t)$ as a function satisfying
\begin{equation}
\dot{Z}_1 = - Z_1^2 - F(t)Z_1 + \frac{\ddot{\Phi}}{\Phi} + \frac{\dot{\Phi}}{\Phi}F(t),
\end{equation}
with initial data $Z_1(t_0) = \dot{\Phi}(t_0)/\Phi(t_0)$. One may verify that the exact solution reads $Z_1(t) = \dot{\Phi}(t)/\Phi(t)$. Now, by differential inequality \eqref{thm1de} and since $Z(t_0)=Z_1(t_0)$, we have $Z(t) \geq Z_1(t)$ for all $t\in\mathcal{I}$. Integration and exponentiation preserve this ordering, and yield
\begin{equation}
Y(t) = Y(t_0)\exp\left( \int_{t_0}^t Z(s)ds\right) \geq Y(t_0)\exp\left( \int_{t_0}^t Z_1(s)ds\right) = \Phi(t) ,
\end{equation}
since $Y(t_0)\exp\left( \int_{t_0}^t Z_1(s)ds\right) = Y(t_0)\exp\left( \int_{t_0}^t\frac{\dot{\Phi}(s)}{\Phi(s)}ds\right) = \Phi(t)$.
Then, for this choice of initial data, $|Y(t)|\geq \Phi(t)$ for all $t\in\mathcal{I}$. This completes the proof. $\blacksquare$
\end{proof}

We note that these inequalities for $G(t)$ are all sufficient conditions for the prescribed time interval including large-time asymptotic behavior if $\mathcal{I}$ is unbounded. There may be specific problems for which these conditions are not necessary. Classifying such dynamics would involve an advanced study of oscillation theory, and we do not address this here, as our goal is to show that these kinds of sufficient conditions are consistent with the standard Turing conditions for static domains. In linear stability theory, one is often interested in the onset of exponential growth of a small perturbation, and for this case we have the following corollary:

\begin{corollary}\label{cor1}
Consider $\Phi(t) = \mu(t)\exp(\delta t)$ for some $\delta >0$ (where we scale with $\mu(t)$ since the factor of $\mu(t)^{-1}$ in \eqref{pert2} will moderate any instability). From Theorem \ref{Thm1}, we have
\begin{equation}
G(t) \leq - \frac{\ddot{\mu}}{\mu} - 2\delta \frac{\dot{\mu}}{\mu} - \delta^2 - \left( \frac{\dot{\mu}}{\mu} + \delta \right) F(t).
\end{equation}
Taking $\delta \rightarrow 0^+$, and strict inequality, we recover the weakest bound for exponential growth over $t\in\mathcal{I}$,
\begin{equation}\label{gf}
G(t) < - \frac{\ddot{\mu}}{\mu} - \frac{\dot{\mu}}{\mu}F(t), \quad \text{for all} \quad  t\in\mathcal{I}.
\end{equation}
\end{corollary}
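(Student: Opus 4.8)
The plan is to obtain the corollary as a direct specialization of Theorem~\ref{Thm1} to the explicit test function $\Phi(t)=\mu(t)\exp(\delta t)$, followed by a limiting argument in the parameter $\delta$. First I would check the hypothesis of Theorem~\ref{Thm1}: since $\mu(t)=|\mathcal{G}_1(t)|^{1/2}>0$ by \eqref{vol} and $\exp(\delta t)>0$, this $\Phi$ is $C^2(\mathbb{R})$ and strictly positive on all of $\mathcal{I}$, so the theorem applies the moment \eqref{genbound} is verified for it.

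The next step is the (routine) computation of the two logarithmic derivatives appearing in \eqref{genbound}. One has $\dot\Phi/\Phi=\dot\mu/\mu+\delta$, and differentiating once more (or using $\ddot\Phi/\Phi=(\dot\Phi/\Phi)'+(\dot\Phi/\Phi)^2$) gives $\ddot\Phi/\Phi=\ddot\mu/\mu+2\delta\,\dot\mu/\mu+\delta^2$. Substituting these into the right-hand side of \eqref{genbound} collapses it exactly to $-\ddot\mu/\mu-2\delta\,\dot\mu/\mu-\delta^2-(\dot\mu/\mu+\delta)F(t)$, which is the first displayed inequality of the corollary. Theorem~\ref{Thm1} then produces a fundamental solution with $|Y(t)|\ge\mu(t)e^{\delta t}$ on $\mathcal{I}$; dividing by $\mu(t)$ shows $\mu(t)^{-1}|Y(t)|\ge e^{\delta t}$, i.e.\ genuine exponential growth of the perturbation \eqref{pert2} (transient on a bounded $\mathcal{I}$, asymptotically unstable in the sense defined after \eqref{pert2} when $\mathcal{I}$ is unbounded).

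Finally I would send $\delta\to0^+$. The correction terms $-2\delta\,\dot\mu/\mu-\delta^2-\delta F(t)$ that distinguish the $\delta$-dependent bound from \eqref{gf} all vanish in this limit, so the pointwise limit of the right-hand side is exactly $-\ddot\mu/\mu-(\dot\mu/\mu)F(t)$; this is precisely why replacing ``$\le$'' with the strict ``$<$'' in \eqref{gf} is the right formulation, since a strict margin leaves room to pick some $\delta>0$ for which the stronger bound still holds, and Theorem~\ref{Thm1} then yields growth at exponential rate $\delta$. The step I expect to require the most care — and the main obstacle to a fully rigorous reading — is the uniformity of this limit when $\mathcal{I}$ is unbounded: \eqref{gf} only controls the gap $-\ddot\mu/\mu-(\dot\mu/\mu)F(t)-G(t)$ pointwise in $t$, so without an extra uniform lower bound on that gap (or, say, boundedness of $\dot\mu/\mu$ and $F$ on $\mathcal{I}$) one cannot fix a single positive $\delta$ valid for all $t$. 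The faithful interpretation of the corollary is therefore that \eqref{gf} is the threshold condition recovered as $\delta\to0^+$, with any margin that is uniform on the time interval of interest upgrading to genuine exponential growth of the associated perturbation.
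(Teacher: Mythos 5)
Your proposal is correct and matches the paper's (essentially implicit) argument: the corollary is obtained exactly as you describe, by substituting $\Phi(t)=\mu(t)e^{\delta t}$ into the bound \eqref{genbound} of Theorem \ref{Thm1}, computing $\dot\Phi/\Phi$ and $\ddot\Phi/\Phi$, and letting $\delta\rightarrow 0^{+}$ with strict inequality. Your closing caveat about needing a margin uniform in $t$ to fix a single $\delta>0$ on an unbounded $\mathcal{I}$ is a fair observation, but it goes beyond what the paper addresses; the corollary is intended precisely as the threshold condition recovered in that limit.
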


While exponential instabilities are the standard for discussing the Turing instability and related patterning (as well as any other stability or instability criteria dependent upon temporal eigenvalues), it is tempting to weaken the strength of the instability, in order to further probe the boundary of stability and instability regions. This difference is only notable for fixed-strength bounds with $\delta>0$ (say, when comparing an instability of rate $\exp(\delta t)$ with an instability of rate $t^{\delta}$), as taking the $\delta \rightarrow 0^+$ limit again results in the same bound \eqref{gf}, as seen the the following corollary:

\begin{corollary}\label{alg}
Consider $\Phi(t) = \mu(t)t^{\delta}$ for some $\delta >0$. From Theorem \ref{Thm1}, we have
\begin{equation}
G(t) \leq - \frac{\ddot{\mu}}{\mu} - 2\delta \frac{\dot{\mu}}{\mu}t^{-1} - \delta(\delta -1)t^{-2} - \left( \frac{\dot{\mu}}{\mu} + \delta t^{-1} \right) F(t).
\end{equation}
Taking $\delta \rightarrow 0^+$ to capture the weakest possible growth rate, and strict inequality, we recover the weakest bound for algebraic growth over $t\in\mathcal{I}$,
\begin{equation}
G(t) < - \frac{\ddot{\mu}}{\mu} - \frac{\dot{\mu}}{\mu}F(t), \quad \text{for all} \quad  t\in\mathcal{I}.
\end{equation}
\end{corollary}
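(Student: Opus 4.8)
\textbf{Proof proposal for Corollary \ref{alg}.}

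The plan is to mirror exactly the argument of Corollary \ref{cor1}: insert the ansatz $\Phi(t) = \mu(t)t^{\delta}$ into the sufficient condition \eqref{genbound} of Theorem \ref{Thm1}, simplify the resulting logarithmic derivatives, and then pass to the limit $\delta\to 0^+$. Since $\mu(t)=|\mathcal{G}_1(t)|^{1/2}>0$ and $t^{\delta}>0$ on $(0,\infty)$, the function $\Phi$ is positive and $C^2$ there, so the hypotheses of Theorem \ref{Thm1} hold on any $\mathcal{I}\subseteq(0,\infty)$; for $\mathcal{I}$ that include a neighborhood of the origin one simply restricts attention to the subinterval on which growth is asserted.

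First I would compute the logarithmic derivative. From $\log\Phi=\log\mu+\delta\log t$ we get $\dot{\Phi}/\Phi=\dot{\mu}/\mu+\delta t^{-1}$. Differentiating once more and using $\ddot{\Phi}/\Phi=\frac{d}{dt}(\dot{\Phi}/\Phi)+(\dot{\Phi}/\Phi)^2$, a short calculation gives
\[
\frac{\ddot{\Phi}}{\Phi}=\frac{\ddot{\mu}}{\mu}+2\delta\frac{\dot{\mu}}{\mu}t^{-1}+\delta(\delta-1)t^{-2},
\]
where the $-\delta t^{-2}$ produced by differentiating $\delta t^{-1}$ combines with the $\delta^2 t^{-2}$ from squaring $\dot\Phi/\Phi$ to give the $\delta(\delta-1)t^{-2}$ term. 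Substituting this expression together with $\dot{\Phi}/\Phi=\dot{\mu}/\mu+\delta t^{-1}$ into \eqref{genbound} yields precisely the first displayed inequality for $G(t)$. This part is entirely routine differentiation.

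For the second display, I would observe that as $\delta\to 0^+$ the three correction terms $2\delta(\dot{\mu}/\mu)t^{-1}$, $\delta(\delta-1)t^{-2}$, and $\delta t^{-1}F(t)$ each tend to zero pointwise in $t$, so the right-hand side converges to $-\ddot{\mu}/\mu-(\dot{\mu}/\mu)F(t)$. Hence whenever $G$ satisfies the strict inequality $G(t)<-\ddot{\mu}/\mu-(\dot{\mu}/\mu)F(t)$ with a fixed margin on a compact subinterval of $\mathcal{I}$, there is a sufficiently small $\delta>0$ for which the $\delta$-dependent bound still holds, and Theorem \ref{Thm1} then supplies a fundamental solution growing at least like $\mu(t)t^{\delta}$ there. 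The main subtlety here — more a matter of interpretation than a genuine obstacle — is that Theorem \ref{Thm1} cannot be invoked at $\delta=0$ itself (that limiting case only returns $|Y|\ge\mu$, i.e.\ neutral stability of the scaled perturbation), so the statement must be read as: the strict inequality guarantees some genuinely growing algebraic rate $t^{\delta}$, with the admissible exponents $\delta$ shrinking as the margin in the inequality shrinks. On an unbounded $\mathcal{I}$ one should additionally note, as after Corollary \ref{cor1}, that the correction terms decay as $t\to\infty$ provided $\dot{\mu}/\mu$ and $F$ grow no faster than $t$ (which holds in the regimes of interest, namely growth no more rapid than exponential), so the limiting bound is asymptotically identical to the exponential one. $\blacksquare$
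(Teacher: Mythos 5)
Your proposal is correct and follows essentially the same route as the paper: substitute $\Phi(t)=\mu(t)t^{\delta}$ into the bound \eqref{genbound} of Theorem \ref{Thm1}, compute $\dot{\Phi}/\Phi=\dot{\mu}/\mu+\delta t^{-1}$ and $\ddot{\Phi}/\Phi=\ddot{\mu}/\mu+2\delta(\dot{\mu}/\mu)t^{-1}+\delta(\delta-1)t^{-2}$, and pass to the limit $\delta\rightarrow 0^{+}$ with strict inequality. Your added remark on how to interpret the limit (the strict inequality secures some small positive exponent $\delta$ rather than the case $\delta=0$ itself) is a reasonable clarification consistent with the paper's intent.
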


Therefore, we conclude that the bound in equation \eqref{gf} of Corollary \ref{cor1} is robust in terms of accounting for the possible rates of instability.

In light of Theorem \ref{Thm1} along with Corollaries \ref{cor1}-\ref{alg} which provide conditions granting a specific growth rate, it is worthwhile to obtain a complementary result on corresponding decay rates.

\begin{theorem}\label{UpperBound}
 Consider the ODE \eqref{gen1} and suppose that 
\begin{equation}\label{genboundUpper}
G(t) > - \frac{\ddot{\mu}}{\mu} - \frac{\dot{\mu}}{\mu}F(t), \quad \text{for all} \quad  t\in\mathcal{I}.
\end{equation}
Then, solutions $Y(t)$ to \eqref{gen1} do not grow faster than any exponential or algebraic function of $t\in\mathcal{I}$, namely $|Y(t)|\leq K \mu(t) e^{\delta t}$ for all $\delta>0$ and $t\in\mathcal{I}$ and similarly $|Y(t)|\leq K \mu(t) (1+t)^\delta$ where $K=\mathcal{O}(|Y(t_0)|)$.
\end{theorem}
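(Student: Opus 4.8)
The plan is to run the proof of Theorem~\ref{Thm1} in reverse, so that the comparison produces an \emph{upper} bound on the logarithmic derivative of a solution rather than a lower bound. Using \eqref{genboundUpper}, write $G(t) = -\ddot{\mu}/\mu - (\dot{\mu}/\mu)F(t) + H(t)$ with $H(t)>0$ on $\mathcal{I}$. For a solution $Y(t)$ of \eqref{gen1} with $Y(t_0)\neq 0$, the substitution $Y(t) = Y(t_0)\exp\!\left(\int_{t_0}^t Z(s)\,ds\right)$ converts \eqref{gen1} (on any subinterval of $\mathcal{I}$ where $Y$ does not vanish) into the Riccati equation
\begin{equation}
\dot{Z} = -Z^2 - F(t)Z - G(t) = -Z^2 - F(t)Z + \frac{\ddot{\mu}}{\mu} + \frac{\dot{\mu}}{\mu}F(t) - H(t),
\end{equation}
which, since $H(t)>0$, satisfies $\dot{Z} < -Z^2 - F(t)Z + \ddot{\mu}/\mu + (\dot{\mu}/\mu)F(t)$. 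Exactly as verified in the proof of Theorem~\ref{Thm1}, $Z_1(t) := \dot{\mu}(t)/\mu(t)$ is the solution of the comparison equation $\dot{Z}_1 = -Z_1^2 - F(t)Z_1 + \ddot{\mu}/\mu + (\dot{\mu}/\mu)F(t)$. Choosing the solution of \eqref{gen1} with $\dot{Y}(t_0) = \big(\dot{\mu}(t_0)/\mu(t_0)\big)Y(t_0)$ gives $Z(t_0) = Z_1(t_0)$, and since the Riccati vector field is locally Lipschitz in $Z$, the standard sub-solution comparison yields $Z(t) \le Z_1(t)$; integrating and exponentiating then gives $|Y(t)| \le \big(|Y(t_0)|/\mu(t_0)\big)\mu(t) =: K\mu(t)$ with $K = \mathcal{O}(|Y(t_0)|)$, hence \emph{a fortiori} $|Y(t)| \le K\mu(t)e^{\delta t}$ and $|Y(t)|\le K\mu(t)(1+t)^\delta$ for every $\delta>0$, since $e^{\delta t}\ge1$ and $(1+t)^\delta\ge1$ on $\mathcal{I}\subseteq[0,\infty)$.

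This gives the result for the distinguished solution so long as it stays away from zero; to obtain the full statement I would (i) continue the estimate across the isolated zeros of $Y$, and (ii) use linearity to pass to a general solution, writing it in terms of $Y$ and a second independent solution recovered by reduction of order, $Y_2(t) = Y(t)\int_{t_0}^t W(s)\,Y(s)^{-2}\,ds$ with Wronskian $W(t) = W(t_0)\exp\!\left(-\int_{t_0}^t F\right)$. The arbitrarily small factors $e^{\delta t}$ and $(1+t)^\delta$ in the statement are there precisely to absorb the growth of this second solution and of the reduction-of-order integral, which is why the conclusion is phrased for every $\delta>0$ rather than at $\delta=0$.

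I expect steps (i)--(ii) to be the main obstacle: bounding $Y_2$ needs a lower bound on $|Y|$, and near a zero of $Y$, or on an unbounded $\mathcal{I}$ with an oscillatory or sign-changing $F$, this requires care and is the reason the constant $K$ must be allowed to depend on $\delta$. In the reaction--diffusion setting of interest these complications do not arise, since the relevant coefficient $F$ equals $\rho_k(d_1+d_2)$ plus bounded terms and is therefore bounded below, making $e^{-\int F}$ controllable and the conclusion essentially immediate from the first paragraph; it is only in the stated generality that this point needs attention. For bounded $\mathcal{I}$ the bound $|Y(t)|\le K\mu(t)e^{\delta t}$ is in any case immediate, and the substantive content is the uniform-in-time estimate on unbounded intervals.
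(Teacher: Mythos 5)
Your first paragraph reproduces the paper's opening step: reversing the Riccati comparison of Theorem \ref{Thm1} controls the particular solution whose initial slope matches $\dot{\Phi}(t_0)/\Phi(t_0)$ (the paper does this with $\Phi=\mu e^{\delta t}$ and $\Phi=\mu(1+t)^{\delta}$ before sending $\delta\to 0^+$, you with $\Phi=\mu$ directly; that difference is immaterial). The genuine gap is in your steps (i)--(ii), which you leave as a plan and yourself flag as the main obstacle: passing from this single distinguished solution to an arbitrary solution of \eqref{gen1}. That passage is where the theorem's content actually lies, since the second fundamental solution has $Y_2(t_0)=0$ and cannot be reached by the substitution $Y=Y(t_0)\exp\left(\int_{t_0}^t Z\,ds\right)$ at all. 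Your proposed route via reduction of order, $Y_2=Y\int_{t_0}^t W\,Y^{-2}\,ds$, needs a pointwise lower bound on $|Y|$ over all of $\mathcal{I}$ to estimate that integral, and no such bound is available: you only established an upper bound on a solution that may vanish and change sign. Asserting that the factors $e^{\delta t}$ or $(1+t)^{\delta}$ "absorb" the growth of the reduction-of-order integral is precisely the statement requiring proof, and retreating to the reaction--diffusion setting or to bounded $\mathcal{I}$ is not what the theorem claims. As written, the proposal proves the bound only for the one-parameter family of solutions with the matched initial slope.

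The paper closes this gap by a different device that needs neither a lower bound on $|Y|$ nor continuation across zeros: it applies the exponential substitution to the solution with initial data $Y(t_0)=\Phi(t_0)$, $\dot{Y}(t_0)=\dot{\Phi}(t_0)+1/\Phi(t_0)$, i.e.\ the sum of the two fundamental solutions, and compares against the \emph{general} solution of the comparison Riccati equation, which is available in closed form because the particular solution $\dot{\Phi}/\Phi$ is known: $Z_2=\dot{\Phi}/\Phi+\Psi\bigl(C+\int_{t_0}^t\Psi\,d\tau\bigr)^{-1}$ with $\Psi=\Phi^{-2}\exp\left(-\int_{t_0}^t F\,d\tau\right)$ and $C$ fixed by the initial data. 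Integrating $Z_2$ explicitly bounds this combined solution, and linearity then controls the second fundamental solution and hence all solutions. Note that the quantity you would have to estimate in your reduction-of-order integral is exactly this $\Psi$, so completing your plan amounts to redoing the paper's computation; until that (or an equivalent estimate) is carried out, the proposal stops short of the theorem.
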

\begin{proof}
  First note that one can repeat the whole proof of Theorem \ref{Thm1} with the opposite inequality yielding that a solution, characterised by the initial condition $Y(t_0)=\Phi(t_0),~\dot{Y}(t_0)=\dot{\Phi}(t_0)$, satisfies an upper bound $|Y(t)|\leq \Phi(t)$ and then the claim follows for this fundamental solution from the two corollaries \ref{cor1}, \ref{alg}.

  To finish the proof we need to show a similar estimate for the second fundamental solution of the ODE \eqref{gen1}. We consider again a function $H\geq0$ so that equality in the bound \eqref{genboundUpper} is obtained $$G(t) - H(t) = - \frac{\ddot{\mu}}{\mu} - \frac{\dot{\mu}}{\mu}F(t), \quad \text{for all} \quad  t\in\mathcal{I},$$
  and the same change of variable $Y(t) = Y(t_0)\exp\left( \int_{t_0}^t Z(s)ds\right)$, which puts \eqref{thm1c} into the form of the Riccati equation
\begin{equation}
\dot{Z} = - Z^2 - F(t)Z + \frac{\ddot{\Phi}}{\Phi} + \frac{\dot{\Phi}}{\Phi}F(t) - H(t).
\end{equation}
Note, however, that one cannot capture the second fundamental solution $Y_2(t) = \Phi(t) \int_{t_0}^t \frac{\exp(-\int_{t_0}^\tau F(s) ds)}{\Phi^2(t)} d\tau$ as it corresponds to initial data $Y_2(t_0)=0$ being impossible to be captured by the aforementioned exponential change of variables. Hence we chose instead initial conditions $Y(t_0)=\Phi(t_0)$, $\dot{Y}(t_0)=\frac{1}{\Phi(t_0)}+\dot{\Phi}(t_0)$ corresponding to the sum of the two mentioned fundamental solutions.

The initial condition after the change of variables reads $Z(t_0) = \frac{\dot{Y}(t_0)}{Y(t_0)}=\frac{\dot{\Phi}(t_0)\Phi(t_0)+1}{\Phi(t_0)^2}$, so we have
\begin{equation}\label{thm1deUpper}
\dot{Z}  = - Z^2 - F(t)Z + \frac{\ddot{\Phi}}{\Phi} + \frac{\dot{\Phi}}{\Phi}F(t) - H(t) \leq  -Z^2 - F(t)Z + \frac{\ddot{\Phi}}{\Phi} + \frac{\dot{\Phi}}{\Phi}F(t)
= \dot{Z}_2,
\end{equation}
where we define $Z_2(t)$ as a function satisfying
\begin{equation*}
\dot{Z}_2 = - Z_2^2 - F(t)Z_2 + \frac{\ddot{\Phi}}{\Phi} + \frac{\dot{\Phi}}{\Phi}F(t),
\end{equation*}
with initial data $Z_2(t_0) = \frac{\dot{Y}(t_0)}{Y(t_0)}=\frac{\dot{\Phi}(t_0)\Phi(t_0)+1}{\Phi(t_0)^2}$. As we already know a solution to this Riccati equation we can identify a general solution to it
\begin{equation*}
Z_2 = \frac{\dot{\Phi}}{\Phi}+\frac{1}{\Phi^2} \exp\left(-\int_{t_0}^t F(\tau) d\tau\right)\left[C+\int_{t_0}^t \frac{1}{\Phi^2(\tau)}\exp\left(-\int_{t_0}^\tau F(s) ds\right)d\tau\right]^{-1},
\end{equation*}
while at $t_0$ its value is
\begin{equation*}
Z_2(t_0)=\frac{\dot{\Phi}(t_0)}{\Phi(t_0)}+\frac{1}{\Phi^2(t_0)} C^{-1},
\end{equation*}
and hence to satisfy the prescribed initial condition we set $C=1$.

Finally, by differential inequality \eqref{thm1deUpper} and since $Z(t_0)=Z_2(t_0)$, we have $Z(t) \leq Z_2(t)$ for all $t\in\mathcal{I}$. Integration and exponentiation preserve this ordering, and yield
\begin{equation*}
  Y(t) = Y(t_0)\exp\left( \int_{t_0}^t Z(s)ds\right) \leq \\
  Y(t_0)\exp\left( \int_{t_0}^t Z_2(s)ds\right) = \Phi(t) \exp\left(\frac{\Psi(t)}{(1+\int_{t_0}^t\Psi(\tau)d\tau)}\right) \leq e \Phi(t),
\end{equation*}
where $\Psi(t)=\frac{1}{\Phi(t)^2} \exp\left(-\int_{t_0}^t F(\tau) d\tau\right)\geq 0$ since $Y(t_0)=\Phi(t_0)$, $\exp\left( \int_{t_0}^t Z_2(s)ds\right)=\frac{\Phi(t)}{\Phi(t_0)}$, and as $\frac{\Psi(t)}{1+\int_{t_0}^t\Psi(\tau)d\tau}\leq 1$ due to nonnegativity of $\Psi$.

Then, for this choice of initial data, $|Y(t)|\leq e \Phi(t)$ for all $t\in\mathcal{I}$ and thus an arbitrary solution to the ODE \eqref{gen1} has to satisfy $|Y(t)|<K \Phi(t)$ for all $t\in\mathcal{I}$.

To complete the proof it suffices to choose $\Phi(t)=\mu(t) e^{\delta t}$ and $\Phi(t)=\mu(t) (1+t)^\delta$ followed by taking the limit $\delta\rightarrow 0^+$. $\blacksquare$
\end{proof}

In Section \ref{mainresult}, we will apply Theorem \ref{Thm1} and Corollary \ref{cor1} in order to obtain the natural analogue of the Turing conditions for a system of two reaction-diffusion equations on an evolving domain. In light of the result presented in Corollary \ref{alg}, the bound obtained in these results is sufficiently general to capture transient growth rates leading to instability. Furthermore, in light of Theorem \ref{UpperBound}, we do not expect a weaker bound to be useful, as the reverse strict inequality results in perturbations which are stable. Therefore, Theorem \ref{Thm1} and Corollary \ref{cor1} indeed provide the most general bounds one is likely to obtain.

\subsection{Instability conditions for systems of two reaction-diffusion equations}\label{mainresult}
Due to the time variability of the base state and the actual growth, the nature of our stability result will be time dependent (rather than for $t\rightarrow \infty$ as is true of the classical Turing conditions), with modes losing and perhaps gaining stability over time. This is exactly along the lines of the history dependence observed in \cite{klika2017history}. We shall then phrase the result in terms of a time interval over which the instability is observed. This interval becomes unbounded if the mode remains unstable as $t\rightarrow \infty$. Throughout the time interval on which an instability arises, given by $\Ik$ for each $\rho_k(t)$, we shall require $J_{12} \neq 0$ for all $t \in \Ik$. Otherwise, the equation for the first chemical species would decouple from the second, and either (i) the reaction kinetics would grow without bound for $J_{11}>0$ or (ii) the perturbation \eqref{pert2} can never give instability for $J_{11}<0$ for any arbitrary spatial perturbation, and hence pattern formation would be impossible. Hence, $J_{12}\neq 0$ is a reasonable assumption. Likewise, we shall assume $J_{21} \neq 0$. 

We now apply the results of Theorem \ref{Thm1} and Corollary \ref{cor1} to obtain conditions on the instability of spatial perturbations of the form \eqref{pert2}. As we mentioned above, due to Theorem \ref{UpperBound}, these are the best instability bounds one expects to obtain. In order to invoke Theorem \ref{Thm1} and Corollary \ref{cor1}, we first convert the non-autonomous first-order system into a scalar non-autonomous second-order scalar ODE. The generalization of the Turing conditions is as follows:

\begin{theorem}\label{Thm3}
Consider the evolution of a compact, simply connected, smooth Riemannian manifold $\Omega(t)\subset \mathbb{R}^N$ as in \eqref{coords}, with Laplace-Beltrami operator spectrum $\rho_k(t) \in C^1(\mathcal{I}_k)$ where $\mathcal{I}_k\subseteq (0,\infty)$, such that volume expansion or contraction $\mu(t) \in C^2(\mathcal{I}_k)$ given in \eqref{vol} is independent of space. Assume that $J\in C^1(\mathcal{I}_k)$ is the time-dependent Jacobian matrix of $\mathbf{f}$ evaluated at the spatially homogeneous solution $\mathbf{U}(t)$ to \eqref{uniform}, with $J_{12}, J_{21} \neq 0$ on $\mathcal{I}_k$. For $n=2$ species, the spatially homogeneous state $\mathbf{U}(t)$ for the reaction-diffusion system \eqref{RD_system:eqn} is linearly unstable under a perturbation of the form \eqref{pert2} corresponding to $\rho_k(t)$ for $t\in \mathcal{I}_k$, provided that the inequality
\begin{equation}\begin{aligned}\label{T3}
& \det(J) - \left( d_2 J_{11} + d_1 J_{22} \right)\rho_k + d_1d_2\rho_k^2 \\
& \qquad
 < - \frac{\ddot{\mu}}{\mu} - \frac{\dot{\mu}}{\mu}\left( \left( d_1 + d_2\right)\rho_k - \mathrm{tr}(J) \right) \\
& \qquad\qquad + \max\left\lbrace \frac{\dot{\mu}}{\mu} \frac{\dot{J}_{12}}{J_{12}} - J_{12}\frac{d}{dt}\left( \frac{d_1 \rho_k -J_{11}}{J_{12}}\right) , \frac{\dot{\mu}}{\mu} \frac{\dot{J}_{21}}{J_{21}} - J_{21}\frac{d}{dt}\left( \frac{d_2 \rho_k -J_{22}}{J_{21}}\right)\right\rbrace 
\end{aligned}\end{equation}
holds for all $t\in\mathcal{I}_k$.
\end{theorem}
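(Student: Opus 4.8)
The plan is to reduce the two-component linear system \eqref{ode4} to a single scalar second-order ODE of the form \eqref{gen1} and then invoke Theorem \ref{Thm1} together with Corollary \ref{cor1}. Writing $\mathbf{V}_k=(V_1,V_2)^{T}$ and setting $a_i(t)=d_i\rho_k(t)-J_{ii}(t)$, the system \eqref{ode4} reads $\dot{V}_1=-a_1V_1+J_{12}V_2$ and $\dot{V}_2=J_{21}V_1-a_2V_2$. Since $J_{12}\neq 0$ on $\mathcal{I}_k$, I would solve the first equation for $V_2=(\dot{V}_1+a_1V_1)/J_{12}$, differentiate once, and substitute into the second; because $\rho_k,J\in C^1(\mathcal{I}_k)$ and $\mu\in C^2(\mathcal{I}_k)$ this produces a genuine continuous-coefficient equation
\be
\ddot{V}_1+F(t)\dot{V}_1+G(t)V_1=0,
\ee
with $F=a_1+a_2-\dot{J}_{12}/J_{12}$ and $G=\dot{a}_1-a_1\dot{J}_{12}/J_{12}+a_1a_2-J_{12}J_{21}$.

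Next I would apply Corollary \ref{cor1} with $\Phi(t)=\mu(t)e^{\delta t}$: Theorem \ref{Thm1} then guarantees a fundamental solution $V_1$ with $|V_1(t)|\geq\mu(t)e^{\delta t}$ on $\mathcal{I}_k$, and hence $\mu(t)^{-1}|\mathbf{V}_k(t)|\geq\mu(t)^{-1}|V_1(t)|\geq e^{\delta t}$, i.e. a genuinely growing perturbation \eqref{pert2}, provided $G\leq -\ddot{\mu}/\mu-2\delta\dot{\mu}/\mu-\delta^2-(\dot{\mu}/\mu+\delta)F$. Working on each compact subinterval of $\mathcal{I}_k$ and using continuity of $F,G,\mu$, a strict inequality $G<-\ddot{\mu}/\mu-(\dot{\mu}/\mu)F$ upgrades to one of these $\delta$-inequalities for all sufficiently small $\delta>0$, so the operative instability criterion is exactly the bound \eqref{gf} (Corollary \ref{alg} shows that weakening to merely algebraic growth buys nothing here).

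It then remains to recast \eqref{gf} into the stated form. Expanding $a_1+a_2=(d_1+d_2)\rho_k-\mathrm{tr}(J)$ and $a_1a_2-J_{12}J_{21}=d_1d_2\rho_k^2-(d_2J_{11}+d_1J_{22})\rho_k+\det(J)$, and using the identity $\dot{a}_1-a_1\dot{J}_{12}/J_{12}=J_{12}\,\frac{d}{dt}(a_1/J_{12})$ with $a_1=d_1\rho_k-J_{11}$, the inequality \eqref{gf} rearranges precisely to the left side of \eqref{T3} lying strictly below the first of the two expressions inside the maximum. Carrying out the same elimination with the species interchanged — solving the second equation for $V_1$, which is licensed by $J_{21}\neq 0$ on $\mathcal{I}_k$ — gives the same reduced left side together with the second expression inside the maximum. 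Since a growing solution of \emph{either} reduced ODE already forces $\mu^{-1}|\mathbf{V}_k|\to\infty$, it suffices for the left side of \eqref{T3} to lie below at least one of the two bounds, i.e. below their maximum; this is exactly \eqref{T3}.

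The only real obstacle I foresee is bookkeeping: pushing the algebra through to the exact grouping of terms in \eqref{T3}, and in particular spotting the total-derivative combination $J_{12}\frac{d}{dt}\big((d_1\rho_k-J_{11})/J_{12}\big)$ hidden inside $\dot{a}_1-a_1\dot{J}_{12}/J_{12}$, without which the right side looks needlessly complicated. Minor care is also required in the $\delta\to 0^{+}$ passage when $\mathcal{I}_k$ is unbounded (one argues subinterval by subinterval), and in tracking which variable is eliminated, since the hypotheses $J_{12}\neq 0$ and $J_{21}\neq 0$ on $\mathcal{I}_k$ are precisely what make the two scalar reductions — and hence the maximum of the two bounds — available.
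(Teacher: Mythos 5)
Your proposal is correct and follows essentially the same route as the paper's proof: eliminate $V_2$ (respectively $V_1$) from \eqref{ode4} using $J_{12}\neq 0$ (respectively $J_{21}\neq 0$) to obtain the second-order scalar ODE \eqref{v1b}, apply the bound \eqref{gf} of Corollary \ref{cor1}, and take the less restrictive of the two resulting inequalities, which produces the maximum in \eqref{T3}. Your identification of $\dot{a}_1 - a_1\dot{J}_{12}/J_{12} = J_{12}\frac{d}{dt}\bigl((d_1\rho_k - J_{11})/J_{12}\bigr)$ and the resulting coefficients $F$, $G$ match the paper exactly.
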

\begin{proof}
For $n=2$ species, and for each $k=0,1,2,\dots$, \eqref{ode4} reads
\begin{align}
\frac{dV_1}{dt} & = - d_1 \rho_k(t) V_1 + J_{11}V_1 + J_{12} V_2 ,\label{v1a}\\
\frac{dV_2}{dt} & = - d_2 \rho_k(t) V_2 + J_{21}V_1 + J_{22} V_2 .\label{v2a}
\end{align}
Recall that $J=J(\mathbf{U})$, where $\mathbf{U}(t)$ is given by \eqref{uniform}, hence the components of $J$ are in general time-dependent. 

We start with $V_1(t)$. Since $J_{12}\neq 0$ for $t\in \mathcal{I}_k$, we isolate \eqref{v1a} for $V_2(t)$, and use it in \eqref{v2a} to obtain a single second-order ODE for $V_1(t)$, finding
\begin{equation}\begin{aligned}\label{v1b}
&\frac{d^2V_1}{dt^2} + \left\lbrace \left( d_1 + d_2\right)\rho_k - \text{tr}(J) - \frac{\dot{J}_{12}}{J_{12}} \right\rbrace\frac{dV_1}{dt}  \\
& \quad  + \left\lbrace \det(J) - \left( d_2 J_{11} + d_1 J_{22} \right)\rho_k + d_1d_2\rho_k^2  + J_{12}\frac{d}{dt}\left( \frac{d_1 \rho_k -J_{11}}{J_{12}}\right) \right\rbrace V_1 =0.
\end{aligned}\end{equation}
Applying \eqref{gf} of Corollary \ref{cor1} to \eqref{v1b}, we arrive at the sufficient condition
\begin{equation}\begin{aligned}\label{T31}
&\det(J) - \left( d_2 J_{11} + d_1 J_{22} \right)\rho_k + d_1d_2\rho_k^2\\
& \qquad <  - \frac{\ddot{\mu}}{\mu} - \frac{\dot{\mu}}{\mu}\left( \left( d_1 + d_2\right)\rho_k - \mathrm{tr}(J) - \frac{\dot{J}_{12}}{J_{12}} \right) - J_{12}\frac{d}{dt}\left( \frac{d_1 \rho_k -J_{11}}{J_{12}}\right),
\end{aligned}\end{equation}
which implies exponential growth of the $u_1$ component of the perturbation \eqref{pert2}.
We perform similar calculations using $J_{21}\neq 0$ for $t\in \mathcal{I}_k$, in order to obtain a second-order ODE for $V_2(t)$. Applying \eqref{gf} of Corollary \ref{cor1} to this ODE, we arrive at the sufficient condition 
\begin{equation}\begin{aligned}\label{T32}
& \det(J) - \left( d_2 J_{11} + d_1 J_{22} \right)\rho_k + d_1d_2\rho_k^2\\
& \qquad  <  - \frac{\ddot{\mu}}{\mu} - \frac{\dot{\mu}}{\mu}\left( \left( d_1 + d_2\right)\rho_k - \mathrm{tr}(J) - \frac{\dot{J}_{21}}{J_{21}} \right) - J_{21}\frac{d}{dt}\left( \frac{d_2 \rho_k -J_{22}}{J_{21}}\right),
\end{aligned}\end{equation}
which implies exponential growth of the $u_2$ component of the perturbation \eqref{pert2}.

As we only require one of \eqref{T31} or \eqref{T32} to hold for instability, we take the inequality corresponding to the more extreme inequality in \eqref{T31} or \eqref{T32}, resulting in the appearance of a max function in \eqref{T3}. This completes the proof. $\blacksquare$
\end{proof}

These are the conditions for the system \eqref{growthgen} to exhibit an instability corresponding to the $k$th spatial mode for $t\in \mathcal{I}_k$. In practice we shall consider $\mathcal{I}_k$ to be the largest interval on which the hypotheses of Theorem \ref{Thm3} hold, though for transient or sporadic growth periods there may be distinct intervals. These generalize the standard Turing conditions to corresponding conditions on a smoothly time-evolving manifold, though we remark that we do not yet incorporate a generalization of the standard stability of the homogeneous equilibrium in the absence of diffusion. Akin to what is done for classical Turing conditions, one may choose to group all modes which are unstable at time $t$, and the natural definition for this set will be: $\mathcal{K}_t = \{ k \in \mathbb{N} | \{ t \} \cap \Ik \neq \emptyset \}$. Similar generalizations hold when dealing with multi-indices. For higher-dimensional domains with spectra indexed like $\rho_{k_1, \dots, k_\ell}$, we define $\mathcal{I}_{k_1, \dots, k_\ell}$ and $\mathcal{K}_t$ accordingly.

Before continuing, we briefly connect our result to the standard Turing condition for instability on a static domain. We remark that in the case where the domain is static in time, the spectrum $\rho_k$ is constant, $\dot{\mu}=\ddot{\mu}=0$, and all entries in the matrix $J$ are constant. As such, the condition in Theorem \ref{Thm3} reduces to 
\begin{equation}\label{simplify1}
\det(J) - \left( d_2 J_{11} + d_1 J_{22} \right)\rho_k + d_1d_2\rho_k^2 <0\,,
\end{equation}
which is exactly the standard Turing condition on a static manifold. In the case where the manifold is flat and rectangular, or flat and unbounded, the spectrum $\rho_k = |\mathbf{k}|^2$ for some wavenumber vector $\mathbf{k}$ with dimension equal to the dimension of the space. For such a case, \eqref{simplify1} reduces further to
\begin{equation}\label{simplify2}
\det(J) - \left( d_2 J_{11} + d_1 J_{22} \right)|\mathbf{k}|^2 + d_1d_2|\mathbf{k}|^4 <0\,,
\end{equation}
and this is the Turing condition most commonly seen in the literature \cite{MurrayI}.

\subsection{Instability conditions for higher-order systems}\label{instabhigher}
Returning to the higher-order systems discussed in Section \ref{hoformulation}, we have the following analogue of Theorem \ref{Thm3} for coupled pairs ($n=2$) of systems taking the form \eqref{higher1}:

\begin{theorem}\label{Thm4}
Consider the evolution of a compact, simply connected, smooth Riemannian manifold $\Omega(t)\subset \mathbb{R}^N$ as in \eqref{coords}, with Laplace-Beltrami operator spectrum $\rho_k(t) \in C^1(\mathcal{I}_k)$ where $\mathcal{I}_k\subseteq (0,\infty)$, such that volume expansion or contraction $\mu(t) \in C^2(\mathcal{I}_k)$ given in \eqref{vol} is independent of space. Assume that $J\in C^1(\mathcal{I}_k)$ is the time-dependent Jacobian matrix of $\mathbf{f}$ evaluated at the spatially homogeneous solution $\mathbf{U}(t)$ to \eqref{uniform}, with $J_{12}, J_{21} \neq 0$ on $\mathcal{I}_k$. For $n=2$ species, the spatially homogeneous state $\mathbf{U}(t)$ for the system \eqref{higher1} is linearly unstable under a perturbation of the form \eqref{pert2} corresponding to $\rho_k(t)$ for $t\in \mathcal{I}_k$, provided that the inequality
\begin{equation}\begin{aligned}\label{T4}
& \det(J) + \left( d_2 J_{11} + d_1 J_{22} \right)\mathcal{P}(-\rho_k) + d_1d_2\left( \mathcal{P}(-\rho_k)\right)^2 \\
& \qquad
 < - \frac{\ddot{\mu}}{\mu} + \frac{\dot{\mu}}{\mu}\left( \left( d_1 + d_2\right)\mathcal{P}(-\rho_k) + \mathrm{tr}(J) \right) \\
& \qquad\qquad + \max\left\lbrace \frac{\dot{\mu}}{\mu} \frac{\dot{J}_{12}}{J_{12}} + J_{12}\frac{d}{dt}\left( \frac{d_1 \mathcal{P}(-\rho_k) +J_{11}}{J_{12}}\right) , \frac{\dot{\mu}}{\mu} \frac{\dot{J}_{21}}{J_{21}} + J_{21}\frac{d}{dt}\left( \frac{d_2 \mathcal{P}(-\rho_k) +J_{22}}{J_{21}}\right)\right\rbrace 
\end{aligned}\end{equation}
holds for all $t\in\mathcal{I}_k$.
\end{theorem}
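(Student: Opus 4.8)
The plan is to obtain Theorem~\ref{Thm4} as an immediate transcription of the proof of Theorem~\ref{Thm3}. The key observation is that the linearized ODE~\eqref{higher4} for $\mathbf{V}_k(t)$ has exactly the structure of~\eqref{ode4}, the sole change being that the diffusive eigenvalue $-\rho_k(t)$ in~\eqref{ode4} is replaced by $\mathcal{P}(-\rho_k(t))$; moreover, the spatially homogeneous base state is still governed by~\eqref{uniform}, so the Jacobian $J = J(\mathbf{U}(t))$ is the same object. Consequently the entire argument of Theorem~\ref{Thm3} applies after the formal substitution $\rho_k \mapsto -\mathcal{P}(-\rho_k)$ in every diffusive term, and one checks directly that inequality~\eqref{T4} is precisely inequality~\eqref{T3} under this substitution — the sign changes relative to~\eqref{T3} being exactly its bookkeeping consequences.

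In detail, first I would write~\eqref{higher4} componentwise for $n=2$ species, then use $J_{12}\neq 0$ on $\mathcal{I}_k$ to solve the $V_1$-equation for $V_2$, differentiate, and substitute into the $V_2$-equation, obtaining a scalar second-order ODE $\ddot V_1 + F(t)\dot V_1 + G(t) V_1 = 0$ of the form~\eqref{gen1} with
\begin{align*}
F &= -(d_1+d_2)\mathcal{P}(-\rho_k) - \mathrm{tr}(J) - \frac{\dot J_{12}}{J_{12}}, \\
G &= \det(J) + (d_2 J_{11} + d_1 J_{22})\mathcal{P}(-\rho_k) + d_1 d_2 \big(\mathcal{P}(-\rho_k)\big)^2 - J_{12}\frac{d}{dt}\!\left(\frac{d_1\mathcal{P}(-\rho_k)+J_{11}}{J_{12}}\right).
\end{align*}
Applying the bound~\eqref{gf} of Corollary~\ref{cor1}, namely $G < -\ddot\mu/\mu - (\dot\mu/\mu) F$ on $\mathcal{I}_k$, and rearranging, yields exponential growth of $V_1$ exactly when the first entry of the $\max$ in~\eqref{T4} holds. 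Repeating the reduction with $J_{21}\neq 0$ — solving for $V_1$ and obtaining a scalar ODE for $V_2$ — gives the second entry. Since instability of~\eqref{pert2} only needs one of the components $V_1, V_2$ to grow exponentially, taking the weaker of the two sufficient conditions produces the $\max$ in~\eqref{T4}, completing the proof.

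I do not expect a genuine obstacle here. The only care needed is the sign bookkeeping just described and checking that the hypotheses of Theorem~\ref{Thm1}/Corollary~\ref{cor1} still hold: since $\mathcal{P}$ is a polynomial and $\rho_k\in C^1(\mathcal{I}_k)$, the coefficient $\mathcal{P}(-\rho_k(t))$ is $C^1(\mathcal{I}_k)$, which together with $J\in C^1(\mathcal{I}_k)$, $\mu\in C^2(\mathcal{I}_k)$ and $J_{12},J_{21}\neq 0$ makes $F,G$ continuous on $\mathcal{I}_k$ and legitimizes the reduction and the comparison principle. Because the reduction to~\eqref{gen1} and Theorem~\ref{Thm1} are purely algebraic and ODE-theoretic, no positivity or monotonicity of the effective spectrum $\zeta_k := -\mathcal{P}(-\rho_k)$ is required — in contrast to the Laplace--Beltrami case, $\zeta_k$ may change sign or fail to be monotone, but this plays no role in the argument. (Should one wish to include eigenfunctions of $\mathcal{P}(\nabla_{\Omega(t)}^2)$ that are not Laplace--Beltrami eigenfunctions, one simply replaces $\mathcal{P}(-\rho_k)$ by the corresponding $-\zeta_k$ throughout.)
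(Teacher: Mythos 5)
Your proposal is correct and follows exactly the route the paper intends: the paper omits the proof of Theorem~\ref{Thm4} as ``similar to that of Theorem~\ref{Thm3},'' and your reduction of \eqref{higher4} to a scalar second-order ODE of the form \eqref{gen1} followed by Corollary~\ref{cor1}, with the substitution $\rho_k \mapsto -\mathcal{P}(-\rho_k)$ in every diffusive term, is precisely that argument, and your computed $F$, $G$ and the resulting inequality reproduce \eqref{T4} with the correct sign bookkeeping. Your remark that no sign or monotonicity condition on $\mathcal{P}(-\rho_k)$ is needed is also consistent with the paper's treatment.
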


The proof of Theorem \ref{Thm4} is similar to that of Theorem \ref{Thm3}, so we omit it. 

In the scalar case ($n=1$), it is also possible to have instability, provided that the degree of $\mathcal{P}$ is at least two. In the case where $\mathcal{P}(y)=y$, as for standard reaction-diffusion systems, the scalar reaction-diffusion system admits spatial perturbations \eqref{pert2} which grow or decay like
\begin{equation}\label{scalar1}
\frac{d{V}_k}{dt} = -d \rho_k(t) {V}_k + f'({U}){V}_k.
\end{equation} 
Equation \eqref{scalar1} is exactly solvable, and we have
\begin{equation}
|V_k(t)| = |C_k| \exp\left(\int_{t_0}^t f'(U(s))ds - d\int_{t_0}^t \rho_k(s)ds\right) \leq |C_k| \exp\left(\int_{t_0}^t f'(U(s))ds\right) = |V_0(t)|\,,
\end{equation}
since $d>0$ and $\rho_k(t)\geq 0$. Hence, diffusion is always stabilising in the scalar case of one reaction-diffusion equation without any outside forcing. However, in the higher-order case, the structure of the polynomial $\mathcal{P}$ can permit instability in the scalar ($n=1$) case of \eqref{higher1}, 
\begin{equation}\label{higher5}
\frac{\partial u}{\partial t} + \nabla_{\Omega(t)} \cdot (\mathbf{Q}u) = d \mathcal{P}\left(\nabla_{\Omega(t)}^2\right) u + f(u),
\end{equation}
and we give conditions for such an instability in the following Theorem:

\begin{theorem}\label{Thm5}
Consider the evolution of a compact, simply connected, smooth Riemannian manifold $\Omega(t)\subset \mathbb{R}^N$ as in \eqref{coords}, with Laplace-Beltrami operator spectrum $\rho_k(t) \in C^1(\mathcal{I}_k)$ where $\mathcal{I}_k\subseteq (0,\infty)$, such that volume expansion or contraction $\mu(t) \in C^2(\mathcal{I}_k)$ given in \eqref{vol} is independent of space. Assume that $f'(U)\in C^1(\mathcal{I}_k)$, where $U(t)$ is the spatially uniform state satisfying \eqref{uniform} which in the scalar case reads
$\frac{dU}{dt}=-\frac{\dot{\mu}}{\mu}U + f(U)$. For $\mathcal{P}$ of degree at least two, the spatially homogeneous state $U(t)$ for the system \eqref{higher5} is linearly unstable under a perturbation of the form \eqref{pert2} corresponding to $\rho_k(t)$ for $t\in \mathcal{I}_k$, provided that the inequality
\begin{equation}\label{T5}
d\mathcal{P}(-\rho_k(t)) + f'(U(t)) - \frac{\dot{\mu}}{\mu} >0 
\end{equation}
holds for all $t\in\mathcal{I}_k$.
\end{theorem}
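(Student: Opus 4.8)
The plan is to exploit that, in the scalar case, the linearized perturbation equation is a first-order \emph{scalar} non-autonomous ODE, which can be integrated in closed form; unlike the proofs of Theorems~\ref{Thm3} and \ref{Thm4}, no appeal to the comparison principle of Theorem~\ref{Thm1} is required (cf. the remark opening Section~\ref{cp}).

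First I would specialize \eqref{higher4} to $n=1$, obtaining the scalar linear ODE $\dot V_k = d\,\mathcal{P}(-\rho_k(t))\,V_k + f'(U(t))\,V_k$, the exact analogue of \eqref{scalar1} with $-\rho_k$ replaced by $\mathcal{P}(-\rho_k)$. Since the spatial perturbation in \eqref{pert2} has amplitude $W_k(t) := \mu(t)^{-1} V_k(t)$ (recall the stability notions in Section~\ref{sec.GenLinStabAnal} are phrased in terms of $\mu(t)^{-1}|V_k(t)|$), I would then compute the ODE satisfied by $W_k$: differentiating the product and substituting the equation for $V_k$ gives
\[
\frac{dW_k}{dt} = \left( d\,\mathcal{P}(-\rho_k(t)) + f'(U(t)) - \frac{\dot\mu}{\mu} \right) W_k ,
\]
where the term $\dot\mu/\mu^2$ arising from differentiating $\mu^{-1}$ supplies the dilution contribution $-\dot\mu/\mu$. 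This is again scalar and linear, hence integrates explicitly to
\[
W_k(t) = W_k(t_0)\exp\left( \int_{t_0}^t \left[ d\,\mathcal{P}(-\rho_k(s)) + f'(U(s)) - \frac{\dot\mu(s)}{\mu(s)} \right] ds \right) .
\]

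The hypothesis \eqref{T5} says precisely that the integrand here is strictly positive on $\mathcal{I}_k$, so $|W_k(t)|$ is strictly increasing there; the perturbation therefore grows throughout $\mathcal{I}_k$, which is the asserted transient instability, and if $\mathcal{I}_k$ is unbounded with the integral divergent then $\mu(t)^{-1}|V_k(t)| \to \infty$, i.e.\ asymptotic instability in the sense of Section~\ref{sec.GenLinStabAnal}. I do not anticipate a substantive obstacle: the only care needed is bookkeeping --- verifying that the factor $\mu(t)^{-1}$ in \eqref{pert2} is what generates the $-\dot\mu/\mu$ term (rather than it already having been absorbed in passing to \eqref{higher4}), and using the \emph{strict} inequality so that growth, not mere non-decay, is guaranteed. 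The degree-$\ge 2$ assumption on $\mathcal{P}$ plays no active role in the argument; it enters only in that it is what allows $\mathcal{P}(-\rho_k)$ to be positive and hence \eqref{T5} to be satisfiable, whereas for a first-order operator $\mathcal{P}(y)=y$ one has $\mathcal{P}(-\rho_k)=-\rho_k\le 0$ and the scalar system is stabilized by diffusion, as already observed before the statement.
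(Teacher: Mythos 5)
Your proposal is correct and follows essentially the same route as the paper: both exploit that the scalar linearized equation for $V_k$ is exactly solvable, and both reduce the instability criterion to positivity of the integrand $d\,\mathcal{P}(-\rho_k(t)) + f'(U(t)) - \dot{\mu}/\mu$, with your bookkeeping of the $\mu^{-1}$ factor (the dilution term appearing only when passing to $W_k = V_k/\mu$) matching the paper's requirement that $|V_k(t)|/\mu(t)$ grow. The only cosmetic difference is that the paper phrases growth relative to an exponential rate $e^{\delta(t-t_0)}$, differentiates the resulting integral inequality, and takes $\delta \rightarrow 0^{+}$, whereas you integrate the rescaled amplitude directly and read off monotone growth; the content is identical.
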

\begin{proof}
For the scalar equation \eqref{higher5}, the linear stability of the $k$th spatial perturbation is given by the scalar form of \eqref{ode4},
\begin{equation}\label{scalar2}
\frac{d\mathbf{V}_k}{dt} = d \mathcal{P}(-\rho_k(t)) \mathbf{V}_k + f'(\mathbf{U})\mathbf{V}_k,
\end{equation} 
and solving \eqref{scalar2} we obtain
\begin{equation}
V_k(t) = C_k \exp\left( d\int_{t_0}^t \mathcal{P}(-\rho_k(s))ds + \int_{t_0}^t f'(U(s))ds \right)\,.
\end{equation}
There is then growth of the perturbation \eqref{pert2} provided 
\begin{equation}
\frac{|V_k(t)|}{\mu(t)} \geq |C_k|e^{\delta (t-t_0) - \log(\mu(t_0))}
\end{equation}
for some $\delta >0$, noting that we have assumed an exponential growth rate. (As pointed out in Section \ref{cp}, an exponential growth rate is sufficiently general.) Rearranging, we find
\begin{equation}
d\int_{t_0}^t \mathcal{P}(-\rho_k(s))ds + \int_{t_0}^t f'(U(s))ds - \log(\mu(t))) \geq \delta (t-t_0) - \log(\mu(t_0))\,.
\end{equation}
Both sides of the inequality are equal at $t=t_0$, so we may differentiate the inequality since the left hand side must grow faster than the right hand side in order for there to be a transient instability over $t\in \mathcal{I}_k$, and we find
\begin{equation}
d\mathcal{P}(-\rho_k(t)) + f'(U(t)) - \frac{\dot{\mu}}{\mu} \geq \delta \,. 
\end{equation}
Taking strict inequality in the limit $\delta \rightarrow 0^+$, we have the stated inequality \eqref{T5}. This completes the proof. $\blacksquare$
\end{proof}

We will give explicit examples of transient instabilities and pattern formation in fourth-order scalar systems in Section \ref{shsec}.

In Section \ref{hoformulation}, we also commented that the most generqal higher-order spectral problem may not simply involve Laplace-Beltrami eigenfunctions $\psi_k$ but also other eigenfunctions $\hat{\psi}_k$, depending on the higher-order boundary conditions. In this case, upon considering the stationary coordinates \eqref{efxn}, one instead has the spectral problem
$\mathcal{P}\left(\nabla_{\Omega(t)}^2\right)\hat{\psi}_k = -\zeta_k(t) \hat{\psi}_k$. Through a similar approach to Theorems \ref{Thm4}-\ref{Thm5}, we find

\begin{theorem}\label{Thm6}
Consider the evolution of a compact, simply connected, smooth Riemannian manifold $\Omega(t)\subset \mathbb{R}^N$ as in \eqref{coords}. Let the differential operator $\mathcal{P}\left(\nabla_{\Omega(t)}^2\right)$ have corresponding spectrum $\zeta_k(t) \in C^1(\mathcal{I}_k)$ where $\mathcal{I}_k\subseteq (0,\infty)$, such that volume expansion or contraction $\mu(t) \in C^2(\mathcal{I}_k)$ given in \eqref{vol} is independent of space. Assume that $J\in C^1(\mathcal{I}_k)$ is the time-dependent Jacobian matrix of $\mathbf{f}$ evaluated at the spatially homogeneous solution $\mathbf{U}(t)$ to \eqref{uniform}, with $J_{12}, J_{21} \neq 0$ on $\mathcal{I}_k$. 
Then, the spatially homogeneous state $U(t)$ for the system \eqref{higher5} corresponding to $n=1$ species is linearly unstable under a perturbation of the form \eqref{pert2} corresponding to $\zeta_k(t)$ for $t\in \mathcal{I}_k$, provided that the inequality
\begin{equation}
-d\zeta_k(t) + f'(U(t)) - \frac{\dot{\mu}}{\mu} >0 
\end{equation}
holds for all $t\in\mathcal{I}_k$. Meanwhile, for $n=2$ species, the spatially homogeneous state $\mathbf{U}(t)$ for the system \eqref{higher1} is linearly unstable under a perturbation of the form \eqref{pert2} corresponding to $\rho_k(t)$ for $t\in \mathcal{I}_k$, provided that the inequality
\begin{equation}\begin{aligned}
& \det(J) - \left( d_2 J_{11} + d_1 J_{22} \right)\zeta_k(t) + d_1d_2\left( \zeta_k(t)\right)^2 \\
& \qquad
 < - \frac{\ddot{\mu}}{\mu} - \frac{\dot{\mu}}{\mu}\left( \left( d_1 + d_2\right)\zeta_k(t) + \mathrm{tr}(J) \right) \\
& \qquad\qquad + \max\left\lbrace \frac{\dot{\mu}}{\mu} \frac{\dot{J}_{12}}{J_{12}} + J_{12}\frac{d}{dt}\left( \frac{J_{11}-d_1\zeta_k(t)}{J_{12}}\right) , \frac{\dot{\mu}}{\mu} \frac{\dot{J}_{21}}{J_{21}} + J_{21}\frac{d}{dt}\left( \frac{J_{22}-d_2 \zeta_k(t)}{J_{21}}\right)\right\rbrace 
\end{aligned}\end{equation}
holds for all $t\in\mathcal{I}_k$.
\end{theorem}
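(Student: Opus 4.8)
The plan is to mirror the proofs of Theorems \ref{Thm4} and \ref{Thm5} verbatim, substituting the Laplace--Beltrami eigenvalue $\rho_k(t)$ with the general spectral value $\zeta_k(t)$ of the operator $\mathcal{P}\left(\nabla_{\Omega(t)}^2\right)$. The crucial observation that licenses this is that once we work with an eigenfunction $\hat{\psi}_k$ of $\mathcal{P}\left(\nabla_{\Omega(t)}^2\right)$ rather than of $\nabla_{\Omega(t)}^2$, the action of the spatial operator in \eqref{higher2} on the perturbation ansatz \eqref{pert2} is simply multiplication by $-\zeta_k(t)$; that is, $\mathcal{P}\left(\nabla_{\Omega(t)}^2\right)\hat{\psi}_k = -\zeta_k(t)\hat{\psi}_k$ plays exactly the role that $\mathcal{P}(-\rho_k(t))\psi_k$ played via \eqref{higher3}. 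Everything downstream of this identity in the earlier proofs used only that the spatial operator contributes a scalar time-dependent multiplier to each equation, so the argument transfers with $\mathcal{P}(-\rho_k(t))$ replaced by $-\zeta_k(t)$ throughout.

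First I would record the linearized perturbation system: inserting \eqref{pert2} with $\hat{\psi}_k$ into \eqref{higher2} and using the spectral identity yields, for $n=1$, the scalar ODE $\dot{V}_k = -d\,\zeta_k(t) V_k + f'(U) V_k$, and for $n=2$ the pair obtained from \eqref{higher4} with $D\mathcal{P}(-\rho_k)$ replaced by $-D\,\zeta_k(t)$. For the scalar case I would solve explicitly, $V_k(t) = C_k \exp\!\left(\int_{t_0}^t\left(-d\zeta_k(s)+f'(U(s))\right)ds\right)$, require $\mu(t)^{-1}|V_k(t)|$ to exceed $|C_k|e^{\delta(t-t_0)}/\mu(t_0)$, take logarithms, differentiate the resulting inequality at $t_0$ (where both sides agree), and send $\delta\to 0^+$ to get $-d\zeta_k(t)+f'(U(t))-\dot{\mu}/\mu>0$, exactly as in the proof of Theorem \ref{Thm5}. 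Note this works for any polynomial degree of $\mathcal{P}$, since $\zeta_k$ is now the genuine spectral value and no structural positivity of $-\zeta_k$ is assumed; the scalar instability simply requires $\zeta_k$ to be sufficiently negative relative to $f'(U)-\dot{\mu}/\mu$.

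For the $n=2$ case I would follow the proof of Theorem \ref{Thm3}: using $J_{12}\neq 0$, solve the first equation for $V_2$, substitute into the second, and obtain a scalar second-order non-autonomous ODE of the form \eqref{gen1} with $F(t) = (d_1+d_2)(-\zeta_k) - \mathrm{tr}(J) - \dot{J}_{12}/J_{12}$ — wait, more precisely with $-\zeta_k$ in place of $\rho_k$ — and $G(t) = \det(J) - (d_2 J_{11}+d_1 J_{22})(-\zeta_k) + d_1 d_2\zeta_k^2 - J_{12}\frac{d}{dt}\!\left(\frac{-d_1\zeta_k - J_{11}}{J_{12}}\right)$, then apply \eqref{gf} of Corollary \ref{cor1}; repeating with $J_{21}\neq 0$ gives the symmetric condition, and taking the more extreme of the two produces the stated $\max$. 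Care with signs is the only real subtlety: since $\mathcal{P}(-\rho_k)$ is replaced by $-\zeta_k$, every occurrence of $\mathcal{P}(-\rho_k)$ in \eqref{T4} becomes $-\zeta_k$, which accounts for the sign flips seen in the statement of Theorem \ref{Thm6} relative to Theorem \ref{Thm4} (e.g. $-\frac{\dot{\mu}}{\mu}\left((d_1+d_2)\zeta_k + \mathrm{tr}(J)\right)$ and $J_{12}\frac{d}{dt}\!\left(\frac{J_{11}-d_1\zeta_k}{J_{12}}\right)$).

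The main obstacle — really the only thing to be careful about — is bookkeeping the sign conventions when passing from $\mathcal{P}(-\rho_k)\geq 0$-type reasoning to a spectral quantity $\zeta_k$ that is defined via $\mathcal{P}(\nabla^2)\hat\psi_k = -\zeta_k\hat\psi_k$ and is only assumed bounded below, not necessarily non-negative; one must verify that the comparison principle (Theorem \ref{Thm1}) and Corollary \ref{cor1} are applied to the correct $F(t)$ and $G(t)$ so that the coefficient of $V_1$ in the second-order ODE matches the left-hand side minus the derivative-of-ratio term exactly. A secondary point worth a remark is that the hypotheses in Theorem \ref{Thm6}'s $n=1$ clause mention $J_{12},J_{21}$, which are vacuous for a scalar equation; I would either drop that from the scalar hypotheses or note it is inherited harmlessly from the shared statement. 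Beyond this, no new analytic ideas are needed — the theorem is a direct transcription, which is why omitting the full proof (as the paper does for Theorem \ref{Thm4}) is justified, and I would simply indicate the substitution $\mathcal{P}(-\rho_k)\mapsto -\zeta_k$ and refer back to the proofs of Theorems \ref{Thm3} and \ref{Thm5}.
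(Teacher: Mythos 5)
Your proposal is correct and takes essentially the same route as the paper, which justifies Theorem \ref{Thm6} only by the remark that it follows ``through a similar approach to Theorems \ref{Thm4}--\ref{Thm5}'' --- i.e.\ exactly the substitution $\mathcal{P}(-\rho_k)\mapsto-\zeta_k$ (equivalently, $\rho_k\mapsto\zeta_k$ in the reduction and comparison argument of Theorem \ref{Thm3}, and the explicit integration of Theorem \ref{Thm5} for the scalar case) that you carry out. One bookkeeping caveat: that substitution actually yields $-\frac{\dot{\mu}}{\mu}\left((d_1+d_2)\zeta_k-\mathrm{tr}(J)\right)$ on the right-hand side rather than the $+\mathrm{tr}(J)$ printed in the theorem, so your assertion that the sign flips ``account for'' the stated trace term is not quite accurate; the mismatch appears to be a sign typo in the paper's statement (compare \eqref{T3} and \eqref{T4}) rather than a defect in your method.
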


This is the most general bound on such higher-order problems. However, for the sake of simulations, we restrict our attention to simple higher-order problems with boundary conditions yielding a straightforward collection of eigenfunctions. Still, Theorem \ref{Thm6} is more general than Theorems \ref{Thm4}-\ref{Thm5}, and should be regarded as the primary result for such systems, as for the most general problems it will include spectral contributions that might be ignored by Theorems \ref{Thm4}-\ref{Thm5}.

\subsection{Asymptotic stability results for reaction-diffusion systems}
In this section we shall analyse the special case of asymptotic stability, i.e. $\Ik=(T,\infty)$.

\subsubsection{No pattern in the large-time limit for unbounded growth}
Consider first the case of unbounded growth when $\rho_k(t)\rightarrow 0$ for all $k$ (which is the case for purely dilatational growth, $\rho_k(t)=\rho_0(0)/r^2(t)$). Then the condition \eqref{T3} for instability of the $k$-th wavemode reads
\begin{equation}\label{nopattern}
\det(J) < - \frac{\ddot{\mu}}{\mu} + \frac{\dot{\mu}}{\mu} \mathrm{tr}(J)  + \max\left\lbrace \frac{\dot{\mu}}{\mu} \frac{\dot{J}_{12}}{J_{12}} + J_{12}\frac{d}{dt}\left( \frac{J_{11}}{J_{12}}\right) , \frac{\dot{\mu}}{\mu} \frac{\dot{J}_{21}}{J_{21}} + J_{21}\frac{d}{dt}\left( \frac{J_{22}}{J_{21}}\right)\right\rbrace \quad \text{as} \quad t\rightarrow \infty \,.
\end{equation}
Note the obvious that in this case the condition for instability is independent of the wavemode number. Thus, if this condition is satisfied, then asymptotically all spatial modes lead to instability. Therefore we expect that such a situation would not yield a reasonable pattern (not having arbitrary small lengthscale) and further note that this inequality is also the one satisfied by $\rho_0(t)=0$ at finite times, indicating spatially homogeneous instability. Hence, for asymptotically large times, spatial instabilities (of arbitrary mode number) coincide with homogeneous instabilities which are typically precluded.

Therefore, there is no diffusion driven patterning on domains undergoing unbounded growth for asymptotically large time. Any spatial patterning under such growth must therefore occur due to transient dynamics. This is quite distinct from the static, bounded domain case, where diffusive instabilities retain their dominance in the large-time limit.

\subsubsection{Saturating growth scenarios}
If the growth stopped at a finite time or is saturating at a finite size then the above observations about the asymptotic stability reduce exactly to classical diffusion-driven instability condition. Indeed, in this case we have $\lim_{t\rightarrow \infty}\rho_k(t)=\overline{\rho}_k>0$, while $\lim_{t\rightarrow \infty}\dot{\mu}=0$ and $\lim_{t\rightarrow \infty}\dot{J}_{ij}=0$, so $\mathbf{U}^*$ is the homogeneous steady state solution satisfying $f(\mathbf{U}^*)=0$ and the sufficient conditions read
\begin{equation}
  \det(J) - \left( d_2 J_{11} + d_1 J_{22} \right)\overline{\rho}_k + d_1d_2 \overline{\rho}_k^2 < 0,
\end{equation}
which is exactly the Turing condition which one would derive on the static domain $\overline{\Omega}$ \cite{dhillon2017bifurcation} with $\mathcal{I}_k$ unbounded. As $d_1d_2\overline{\rho}_k^2$ is dominant for large $k$, the spatial modes of high frequency are stable, in line with classical Turing conditions \cite{Murray}.

\subsubsection{Asymptotic stability of the base state}
It is worth briefly discussing the asymptotic stability of the base state solution of \eqref{uniform}, and to do so we consider two cases. 
 
First, suppose that $\lim_{t\rightarrow \infty}\frac{\dot{\mu}}{\mu} = \nu$, a constant. This is true, for example, in the case of exponential growth of a domain. The dynamics of the base state \eqref{uniform} then read
\begin{equation}\label{basestate1}
\dot{\mathbf{U}} = -\nu\mathbf{U} + \mathbf{f}(\mathbf{U})\quad \text{as} \quad t\rightarrow \infty\,.
\end{equation}
As this equation is autonomous in the large-time limit, to obtain a steady state we set $-\nu\mathbf{U} + \mathbf{f}(\mathbf{U})=\mathbf{0}$, and denote by $\mathbf{U}^\dag$ a solution of this algebraic equation. Note that $\mathbf{U}^\dag$ is not in general equal to $\mathbf{U}^*$, which is the solution of the algebraic equation $\mathbf{f}(\mathbf{U})=\mathbf{0}$ as discussed in Section \ref{sec.GenLinStabAnal}. In particular, $\mathbf{U}^\dag = \mathbf{U}^*$ when $\nu = 0$, i.e., when the change in volume expansion or contraction is zero in the asymptotic limit $t\rightarrow \infty$. The important thing to note here is that in the case where $\lim_{t\rightarrow \infty}\frac{\dot{\mu}}{\mu}$ exists and is equal to some constant $\nu \in \mathbb{R}$, the appropriate spatially uniform base state is $\mathbf{U}^\dag$ satisfying $-\nu\mathbf{U} + \mathbf{f}(\mathbf{U})=\mathbf{0}$ rather than the base state in the absence of domain evolution, $\mathbf{U}^*$, which satisfies $\mathbf{f}(\mathbf{U})=\mathbf{0}$. 

Assuming a linear perturbation of the form 
\begin{equation}\label{basestate2}
\mathbf{U}(t) = \mathbf{U}^\dag + \epsilon \mathbf{V}(t)\,,
\end{equation}
equation \eqref{basestate1} is linearized as
\begin{equation}\label{basestate3}
\dot{\mathbf{V}}=\left( -\nu I + J\right)\mathbf{V}\,,
\end{equation}
where $J = \frac{\partial \mathbf{f}}{\partial \mathbf{U}}$ evaluated at $\mathbf{U} = \mathbf{U}^\dag$. In particular, the Jacobian matrix is now constant as it is evaluated at the steady state $\mathbf{U}^\dag$. In the standard way, for the $n=2$ species case, we have from \eqref{basestate3} that the steady state $\mathbf{U}^\dag$ is asymptotically stable provided that the following necessary conditions are satisfied:
\begin{subequations}\begin{align}
& \text{tr}\left( -\nu I + J\right) = -2\nu + \text{tr}(J) < 0\,, \\
& \det \left( -\nu I + J\right) = \det(J) - \nu \text{tr}(J) + \nu^2 > 0\,.
\end{align}\end{subequations}
These are natural analogues of the standard necessary conditions for reaction kinetics to be stable on static domains, $\text{tr}(J)<0$ and $\det(J) >0$.

Of course, for more complicated domain evolution, the limit $\lim_{t\rightarrow \infty}\frac{\dot{\mu}}{\mu}$ need not exist, and this leads us to our second case. In this case, we perturb a time-dependent solution of \eqref{uniform} akin to what we did in \eqref{basestate2}. However, this is equivalent to a perturbation of the form \eqref{pert2} with a spatially homogeneous mode (the $k=0$ mode, which always exists for the Neumann problem on manifolds with boundary, as well as for manifolds without boundary). In light of Theorem \ref{UpperBound}, we anticipate a condition akin to that given in Theorem \ref{Thm3} only with a sign reversed. Indeed, carrying out a similar analysis, and invoking Theorem \ref{UpperBound}, we find that a necessary condition for the stability of a spatially uniform yet time varying base state solution to \eqref{uniform} reads:
\begin{equation}
\det(J) > - \frac{\ddot{\mu}}{\mu} + \frac{\dot{\mu}}{\mu}\mathrm{tr}(J) + \max\left\lbrace \frac{\dot{\mu}}{\mu} \frac{\dot{J}_{12}}{J_{12}} + J_{12}\frac{d}{dt}\left( \frac{J_{11}}{J_{12}}\right) , \frac{\dot{\mu}}{\mu} \frac{\dot{J}_{21}}{J_{21}} + J_{21}\frac{d}{dt}\left( \frac{J_{22}}{J_{21}}\right)\right\rbrace \,.
\end{equation}
This is the complement of the condition given in Theorem \ref{Thm3}, for the zeroth mode $k=0$ with $\rho_0(t) =0$. This condition is also complementary to that given in \eqref{nopattern}, which makes sense, as \eqref{nopattern} was the unrealistic sufficient condition for all modes (even the zeroth mode) to result in an instability.

\subsection{Equal diffusion coefficients}
To explore whether equal diffusion coefficients permit pattern formation it is advantageous to transform the evolution equations for the perturbation \eqref{ode4} via
\begin{equation} \label{eq.transf4AltPert}
  \mathbf{V} = \exp(-P_k(t) {D}) \mathbf{A}, \quad P_k(t) = \int_{t_0}^t \rho_k(s) ds
\end{equation}
into
\begin{equation} \label{eq.AltPert}
  \dot{\mathbf{A}} = \exp(P_k(t){D}) {J} \exp(-P_k(t){D}) \mathbf{A} = \begin{pmatrix} J_{11} & J_{12} \exp(P_k(t)(d_1-d_2) \\ J_{21} \exp(P_k(t)(d_2-d_1) & J_{22} \end{pmatrix} \mathbf{A}.
\end{equation}
Note that for finite time intervals or when $\rho_k(t)\rightarrow 0$ the transformation is such that $\mathbf{A}$ is stable iff $\mathbf{V}$ is stable. Otherwise, i.e. for finite positive limit $\lim_{t\rightarrow \infty}\rho_k(t)=\rho_k^*>0$, stability in the original variable $\mathbf{V}$ is guaranteed only if $\mathbf{A}$ grows at least as fast as the matrix exponential $\exp(\rho_k^* {D}t)$.

To use Theorem \ref{Thm1} and Corollary \ref{cor1}, we rewrite the alternative relation for perturbation evolution \eqref{eq.AltPert} as a second-order equation
\begin{equation}
  \ddot{A}_1 - \dot{A}_1 \underbrace{\left(\mathrm{tr} ({J}) +\frac{\dot{J}_{12}}{J_{12}} + (d_1-d_2) \rho_k(t)\right)}_{-F(t;k)} + A_1 \underbrace{\left[\det(J) + J_{11} \left(\frac{\dot{J}_{12}}{J_{12}} + (d_1-d_2) \rho_k(t)\right)-\dot{J}_{11}\right]}_{G(t;k)} = 0,
\end{equation}
where the equation for $A_2$ is the same with swapped indices $1\leftrightarrow 2$.

In the case where all diffusion coefficients are equal, $D=d_1I$, hence $\exp\left( P_k(t) D \right)= e^{d_1 P_k(t)}I$, and we have $\dot{\mathbf{A}}=J(\mathbf{U}) \mathbf{A}$. Then, $\mathbf{V}_k = \exp\left( - P_k(t) D \right)\mathbf{A}=e^{-d_1 P_k(t)}\mathbf{A}$, where $\mathbf{A}$ depends on reaction kinetics at the base state through $J(\mathbf{U})$. As $d_1>0$ and $P_k(t)>0$, the contribution of diffusion is stabilizing, with any instability arising only from a combination of domain evolution (through the $\mu(t)$ term in \eqref{ode4}) and reaction kinetics, precluding spatial patterning due to diffusive instabilities.

In addition, an application of Corollary \ref{cor1} reveals that no instability (pattern) can be expected for equal diffusion coefficients for finite time intervals or for unbounded growth with $\rho_k(t)\rightarrow 0$ as $\rho_k(t)$ vanishes from both $F(t;k)$ and $G(t;k)$. Finally, in the asymptotic case with $\lim_{t\rightarrow \infty} \rho_k(t)=\overline{\rho_k}$ the functions $F,~G$ are also independent of $k$, however, the requirement of exponential growth  at least as fast as the matrix exponential $\exp(\rho_k^* {D}t)$ results in a dependence of the upper and lower bounds on $k$ only via the exponential bound $\delta=d_j \overline{\rho_k}$. Because both terms in the bounds depending on wave number $k$ are negative, the boundary between the upper and lower bound (hence the threshold for instability) is more stringent than a sufficient condition for instability without diffusion. Therefore equal diffusion coefficients do not allow the emergence of spatial patterns (not only being diffusion driven) even on evolving domains.

\subsection{Transient breakdown of the continuum assumption}
As noted above, any understanding of transient dynamics is welcomed, especially on evolving domains. The sufficient condition (the threshold for instability) given in Theorem \ref{Thm3} also allows us to study such effects. Indeed, while \cite{klika2017history} explored transient growth, and showed that rapid growth can lead to arbitrarily-large wavemode excitation, their results can be found as a special case of the instability results presented here.

We shall use the above transformation \eqref{eq.transf4AltPert}. From Corollary \ref{cor1}, transient exponential growth then happens when
\begin{equation}
  \det ({J}) + J_{j,j} \left(\frac{\dot{J}_{j,{\neg j}}}{J_{j,\neg j}} + (d_j-d_{\neg j}) \rho_k(t)\right)-\dot{J}_{j,j} < -\frac{\ddot{\mu}}{\mu} +\frac{\dot{\mu}}{\mu} \left(\mathrm{tr} (J) +\frac{\dot{J}_{j, \neg j}}{J_{j, \neg j}} + (d_j-d_{\neg j}) \rho_k(t)\right)\,,
\end{equation}
where $\neg j$ denotes the index not being $j$ (e.g., if $j=1$ then $\neg j=2$). Focusing on large wavenumbers $k$, we find that they become unstable if
\begin{equation}
  J_{j,j} (d_j-d_{\neg j}) \rho_k(t) < \frac{\dot{\mu}}{\mu}  (d_j-d_{\neg j}) \rho_k(t)\,,
\end{equation}
or equivalently, when
\begin{equation}
  J_{j,j} (d_j-d_{\neg j}) < \frac{\dot{\mu}}{\mu}  (d_j-d_{\neg j}).
\end{equation}
One immediately observes that for sufficiently rapid growth, this inequality is satisfied (for one of the $j$, since we assume $d_1 \neq d_2$), and hence fast growth (measured by $\dot{\mu}/\mu$) always yields transient exponential growth for arbitrarily large wavenumbers, provided that the time-dependent Jacobian entries remain bounded. Such an instability entails a breakdown of the linear analysis as exemplified in \cite{klika2017history}. On the other hand, if the Jacobian entries also change rapidly (recall that they depend on the dynamics of the base state governed by \eqref{uniform}, and hence on the quantity $\dot{\mu}/\mu$), then this effect will be suppressed. In practice, we do not observe transient exponential growth for arbitrarily large wavenumbers in our simulations, at any time.

\section{Applications to reaction-diffusion systems in one space dimension}\label{sec4}
We illustrate the analytical instability conditions given in Theorem \ref{Thm3} by considering various case studies consisting of specific growth functions and domain geometries, some of which extend studies in the literature, and others of which have seemingly never been considered due to their complexity in the face of existing methods. We note that the conditions given in Theorem \ref{Thm3} are sufficient for an instability to grow over a specified time interval, but are insufficient to determine if a given time period is sufficient to observe a heterogeneity forming in a simulation of the full system, as this will depend on the specific nonlinearities involved. Nevertheless, we aim to show that the linear stability analysis captures a variety of solution features observed in numerical simulations.

We first consider systems in one spatial dimension in the present section, before moving onto more complicated configurations in the following sections. Before this, we provide a brief discussion of the reaction kinetics and numerical schemes used. 

\subsection{Reaction kinetics}
We consider the Schnakenberg, or activator-depleted, reaction kinetics as a very simple example which is used extensively in the literature \cite{gierer1972theory, schnakenberg1979simple}. The kinetics and homogeneous equilibria at $t=t_0$ read
\begin{equation}\label{Schnack}
\mathbf{f}(u_1,u_2) = \begin{pmatrix} a-u_1+u_1^2u_2 \\ b-u_1^2u_2 \end{pmatrix}, \quad \mathbf{U^*} = \begin{pmatrix} a+b \\ \frac{b}{(a+b)^2} \end{pmatrix},
\end{equation}
where $a,b$ will be taken as non-negative real parameters. We will also consider the FitzHugh-Nagumo kinetics to demonstrate the applicability of our results to an oscillatory base state giving rise to Hopf and Turing-Hopf bifurcations \cite{fitzhugh1955mathematical, keener1998mathematical, nagumo1962active}. The kinetics and homogeneous equilibria at $t=t_0$ read
\begin{equation}\label{FHN}
\mathbf{f}(u_1,u_2) = \begin{pmatrix} c\left (u_1-\frac{u_1^3}{3}+ u_2-i_0\right ) \\ \frac{a-u_1-bu_2}{c} \end{pmatrix},\quad \mathbf{U^*} = \begin{pmatrix} U_1^* \\ \frac{a-U_1^*}{b} \end{pmatrix},
\end{equation}
where $a,b,c$ and $i_0$ are taken as non-negative constants, and $U_1^*$ will be the root of $c(U_1^*-U_1^{*3}/3- (a-U_1^*)-i_0)=0$. For the parameters we will use, this equation will have a unique real root, and so the system will have a unique steady state solution.

\subsection{Numerical approach}
We simulate \eqref{growthgen} with the kinetics \eqref{Schnack} using the finite element solver COMSOL, version 5.3, with which we discretize the manifolds using second-order finite elements (which will be triangular and tetrahedral in the higher dimensional examples). We used Matlab to compute the evolution of the homogeneous state, and to generate $\Ik$ according to Theorem \ref{Thm3}. We verified simulations in various static domain cases (1-D intervals, 2-D spheres) using the Matlab package Chebfun \cite{townsend2013extension}, in addition to convergence checks in spatial and time discretizations. In all simulations, we used a relative tolerance of $10^{-5}$, and fixed an initial time step of $10^{-6}$. We used COMSOL's default backward difference formula method which then adaptively updated the time step beyond this initial value. In 1-D we used $10^4$ finite elements, and for higher-dimensional simulations used at least $10^4$ elements, though this varied for each geometry. Some restrictions were used on the maximum allowable time step to prevent behaviors such as the loss of modes in the initial perturbations. Convergence in time was checked by restricting the maximum time step, and convergence in space was determined via computing solutions across varied numbers of finite elements, and comparing the norm of solutions over time and space. 

We emphasize that in the cases of fast or non-monotonic domain growth, extreme care is needed due to the non-autonomous nature of the spatial operator. We note that one advantage of this choice of finite element software, as well as the restriction to dilational growth, is that it allows for simple implementations of growing manifolds where the growth is directed in particular directions in the ambient space. This is because the Laplace-Beltrami operator on a surface of dimension $N$ can be constructed from the Laplace operator in the ambient space $\mathbb{R}^{N+1}$, so that dilation of a particular coordinate in $\mathbb{R}^{N+1}$ allows a natural construction of the time-dependent Laplace-Beltrami operator on the surface. We note that there exist many other choices for numerical methods for such problems \cite{barreira2011surface, macdonald2013simple}.

Initial data is taken to be of the form $\mathbf{u}(0,\mathbf{x}) = (I+{\zeta}(\mathbf{x}))\mathbf{U^*}$, where $I$ is the identity matrix and ${\zeta} = \text{diag}(\zeta_1,\zeta_2)$ are normally distributed perturbations which are independent across space and for each morphogen. Specifically, for each $\mathbf{x}\in\Omega^*$ and $i=1,2$, we take $\zeta_i(\mathbf{x}) \sim \mathcal{N}(0,10^{-1})$. We have also considered smaller initial perturbations for each case, and note that whether or not a pattern persists despite transient periods of growth and decay is highly dependent on the size of the perturbation. For this reason, we use this reasonably large perturbation for all simulations, as the finite-time effects we study are intrinsically linked to observing growth of finite perturbations. For each geometry we show simulations using the same realization of the initial data throughout, though for a given size of perturbation (the variance of $\zeta$), we observe qualitatively similar dynamics for different realizations.

We consider two relevant sets to help visualize our instability criterion. We will consider these sets as functions of time. The first is a generalization of a time-dependent \emph{generalized Turing space} which is the set of all parameters for which Theorem \ref{Thm3} predicts an instability for some $k \geq 0$. Here we will consider as an example the non-negative parameters $(a,b) \in \mathbb{R}^2_+$ for the kinetics given by \eqref{Schnack}, but of course generalizing these definitions is straightforward. We then define such a space, for a given time $t$ as: $\mathcal{T}_t = \left \{(a,b) \in \mathbb{R}^2_+ | \cup_{k\geq 0} ( \{ t \} \cap \Ik )\neq \emptyset \right \}$. Of course one can generalize $\mathcal{T}_t$ to a set of times, say $\mathcal{S}([t_1,t_2])=\cup_{t_1 \leq t\leq t_2} \mathcal{T}_t$, rather than the singleton time, but for our purposes we prefer to think of these as sets parameterized by time. We separately plot the space corresponding to homogeneous instabilities, which are times $t\in \mathcal{I}_0$, so that one may consider Turing spaces which exclude these points. Similarly, for fixed parameters, we may be interested in plotting an analogue of the classical dispersion relation which indicates which wavenumbers $k$ are excited as a function of time $t$. We define this dispersion set to be: $\mathcal{K}_t = \{ k \in \mathbb{N} | \{ t \} \cap \Ik \neq \emptyset \}$.

We will compare these time-dependent sets to the quasi-static Turing space and dispersion relations. These are given by ignoring the non-autonomous nature of the system, and treating the domain length as a parameter in the classical static Turing conditions. While such quasi-static conditions are not formally valid, we will demonstrate cases where they do seem to capture the qualitative behaviour of the system, in addition to cases where they fail. We will also compare the observed modes from full numerical solutions, deduced via the Fast Fourier transform, with predictions from our linear stability theory in the 1-D setting to provide evidence of the applicability of our analysis. We will only plot the single Fourier mode with the largest absolute power, corresponding to the wavemode with the largest component of an expansion of the full spatial solution. If the variation of the solution across the domain is less than $1\%$ of its mean value, then we set the largest mode to $k=0$, essentially neglecting small variations from the homogeneous base state.

Finally, we again reiterate that the instability criterion given by Theorem \ref{Thm3} only tells us if the $k$th mode is growing or not at some time, but not directly the growth rate (though a bound on this can be inferred via Corollary \ref{cor1} which we will use to identify the fastest growing mode at a given time). Additionally, analysis of nonlinearities is necessary to determine conditions for whether or not a pattern fully develops and persists, or undergoes subsequent instabilities, such as peak-splitting. Nevertheless, we have exhaustively explored this condition numerically and confirmed that patterns typically develop if the parameter set is within the Turing space for a sufficiently long time, or equivalently that at least one mode remains unstable for a sufficient period. 

\subsection{Isotropic evolution of a line segment}\label{interval}
The simplest and most commonly studied example in the literature is a uniformly growing line segment. We define $\Omega(t) \subset \mathbb{R}$ by $\Omega(t) = [0,r(t)]$. The moving coordinate is $X=r(t)x$, for $x\in [0,1]$, and we find $\rho_k(t) = \frac{\pi^2 k^2}{r(t)^2}$ and $\mu(t)=r(t)$. We will use this simple geometric setting to explore various Turing spaces and dispersion relations for a variety of growth functions $r(t)$, to demonstrate how the instability regions change, particularly away from the well-studied case of slow growth. Our main aim is to show that the instability criterion in Theorem \ref{Thm3} can effectively capture instabilities in this time-dependent setting, and how it differs radically from either quasi--static approaches \cite{varea1999turing}, or the small corrections due to slow growth previously reported in the literature \cite{klika2017history, madzvamuse2010stability}. 

Under the criterion given in Theorem \ref{Thm3}, we find that the $k$th perturbation of the form \eqref{pert2} corresponding to $\rho_k(t)= \frac{\pi^2 k^2}{r(t)^2}$ is unstable over some interval $t\in \mathcal{I}_k$, provided that the inequality
\begin{equation}\begin{aligned}\label{excond1d}
& \det(J) - \left( d_2 J_{11} + d_1 J_{22} \right)\frac{\pi^2k^2}{r^2}+ d_1d_2\frac{\pi^4k^4}{r^4} \\
& \qquad
 < - \frac{\ddot{r}}{r} - \frac{\dot{r}}{r}\left( \left( d_1 + d_2\right)\frac{\pi^2 k^2}{r^2} - \mathrm{tr}(J) \right) \\
& \qquad\qquad + \max\left\lbrace \frac{\dot{r}}{r} \frac{\dot{J}_{12}}{J_{12}} - J_{12}\frac{d}{dt}\left( \frac{d_1 \pi^2 k^2}{r^2 J_{12}}-\frac{ J_{11}}{J_{12}}\right) , \frac{\dot{r}}{r} \frac{\dot{J}_{21}}{J_{21}} - J_{21}\frac{d}{dt}\left( \frac{d_2 \pi^2 k^2 }{r^2 J_{21}} - \frac{J_{22}}{J_{21}}\right)\right\rbrace 
\end{aligned}\end{equation}
holds for all $t\in\mathcal{I}_k$. In the special case where $r(t)=L$ for constant $L>0$, hence the domain is static, the condition \eqref{excond1d} reduces to 
\begin{equation}
\det(J) - \left( d_2 J_{11} + d_1 J_{22} \right)\frac{\pi^2k^2}{L^2}+ d_1d_2\frac{\pi^4k^4}{L^4} < 0\,,
\end{equation}
which is exactly the classical Turing condition for a static one dimensional domain $[0,L]$.

A specific form of growth which is somewhat popular in the literature is exponential growth, which takes the form $r(t) = r(0)\exp(s t)$, $s>0$. In addition to biological plausibility, another reason for the popularity of exponential growth is that it allows for the volume expansion term to take the form $\frac{\dot{\mu}}{\mu} = N s$ (where $N$ is the dimension of the space domain), a constant, which greatly simplifies the dynamics of the spatially uniform system. Exponential isotropic growth of surfaces in $\mathbb{R}^3$ was extensively studied in \cite{toole2014pattern}, albeit under the assumption of a time-independent base state for \eqref{uniform}.

\begin{figure}
\centering

\includegraphics[width=0.3\textwidth]{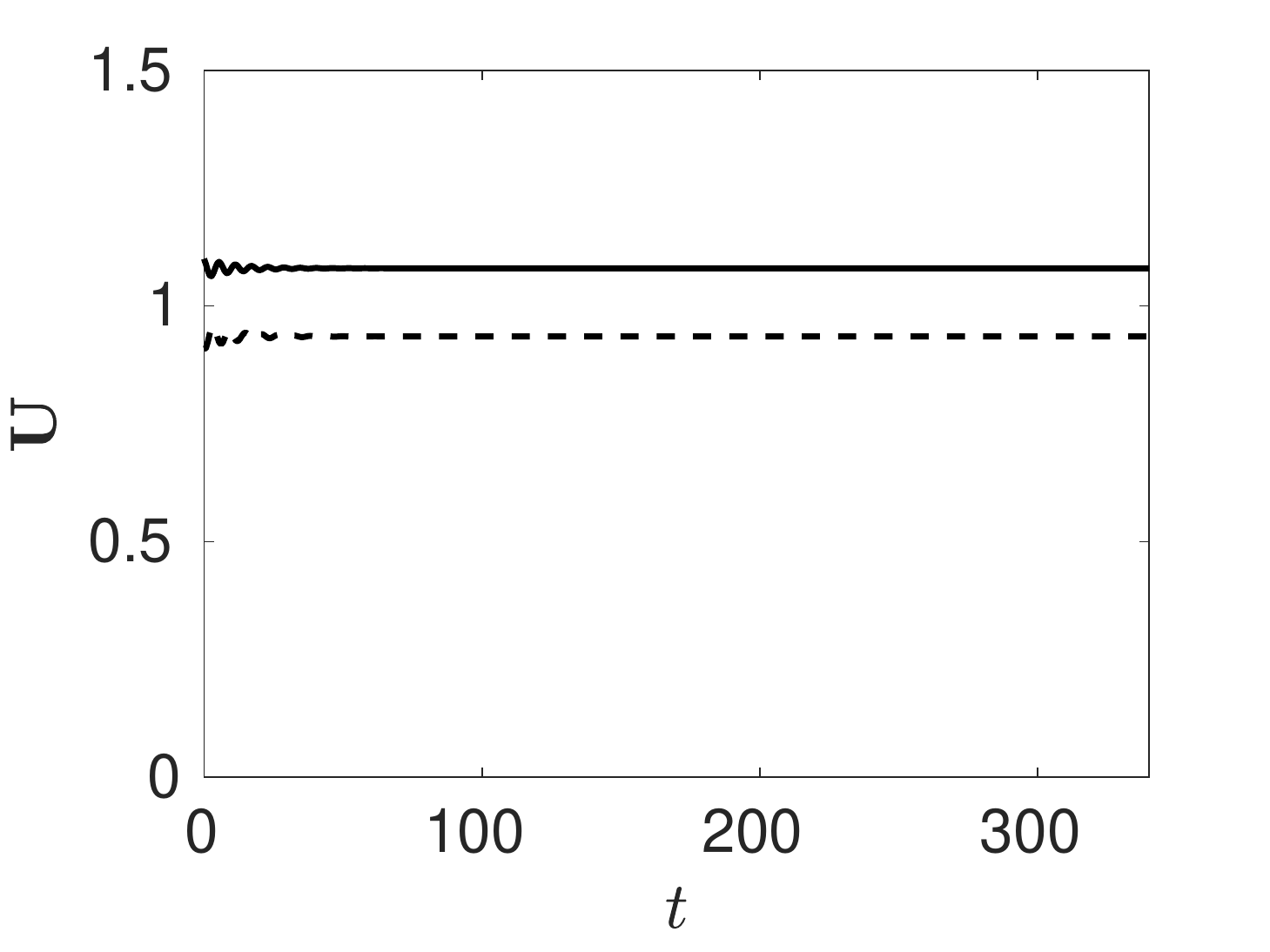}
\includegraphics[width=0.3\textwidth]{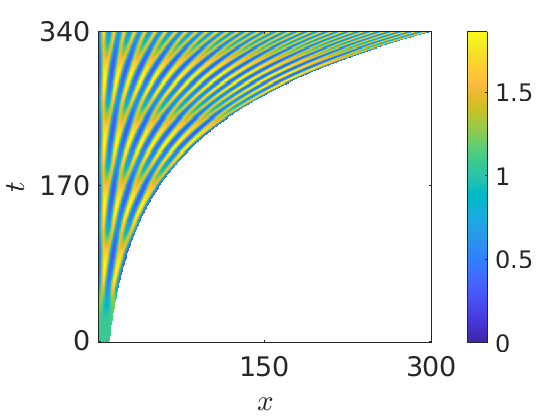}
\includegraphics[width=0.3\textwidth]{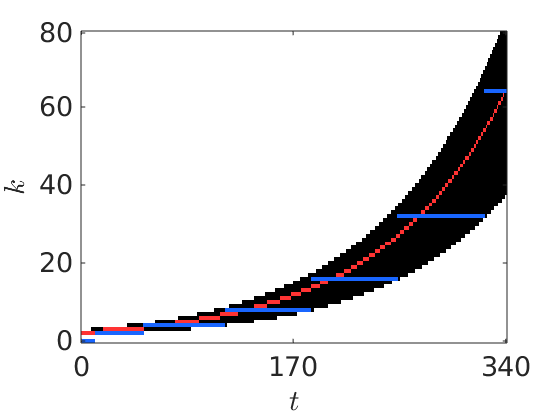}

(a)i \hspace{4cm} (a)ii \hspace{4cm} (a)iii

\includegraphics[width=0.3\textwidth]{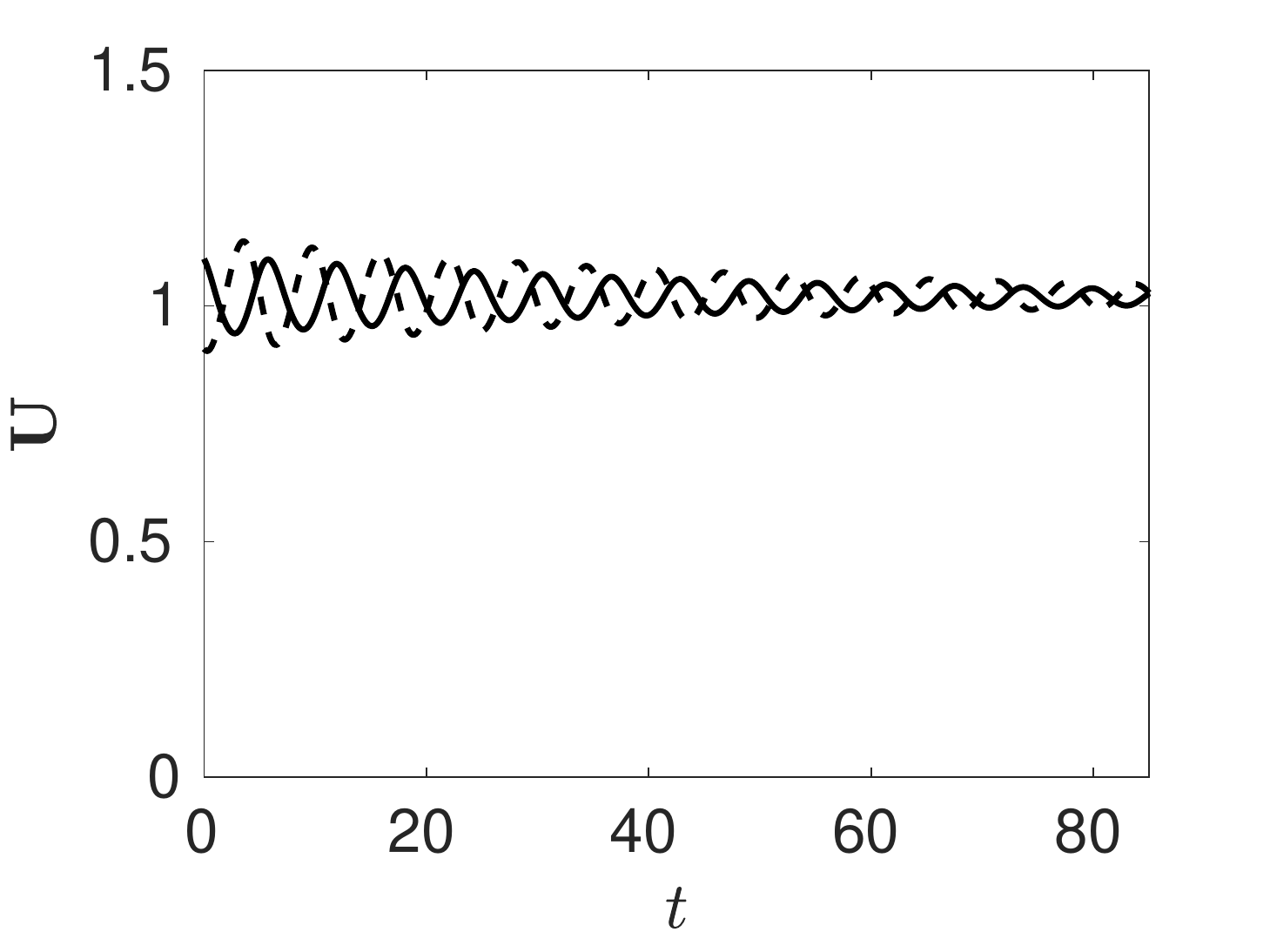}
\includegraphics[width=0.3\textwidth]{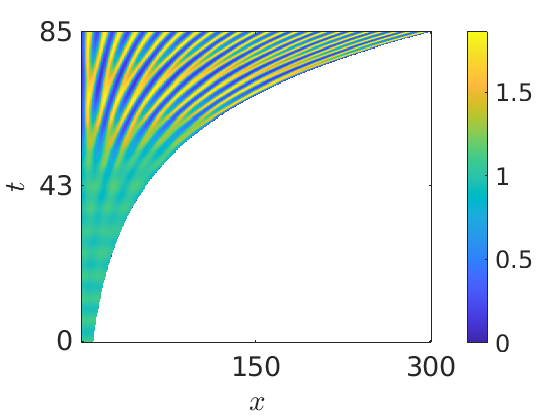}
\includegraphics[width=0.3\textwidth]{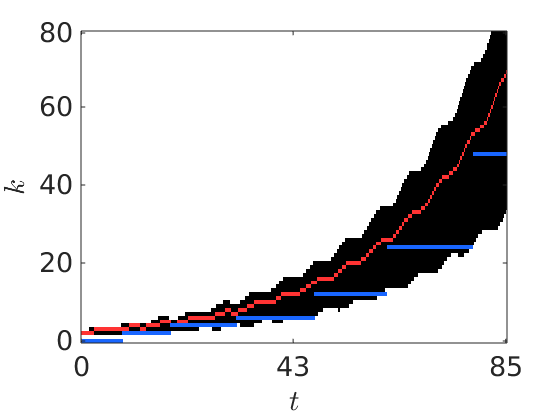}

(b)i \hspace{4cm} (b)ii \hspace{4cm} (b)iii 

\includegraphics[width=0.3\textwidth]{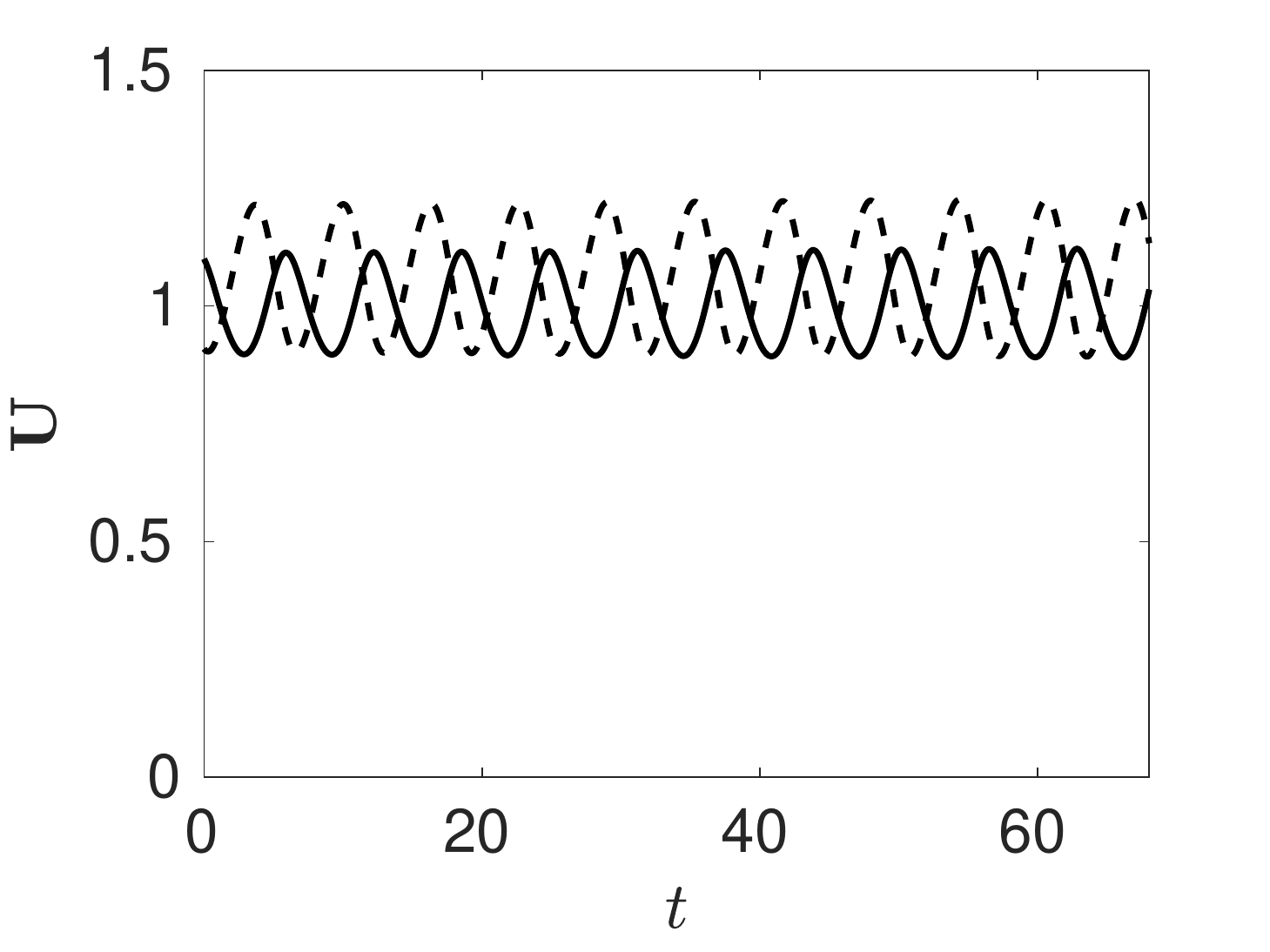}
\includegraphics[width=0.3\textwidth]{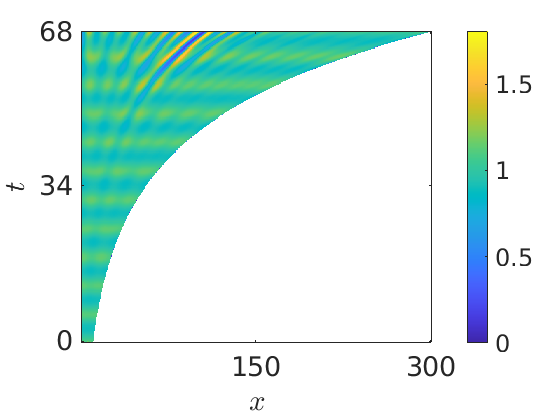}
\includegraphics[width=0.3\textwidth]{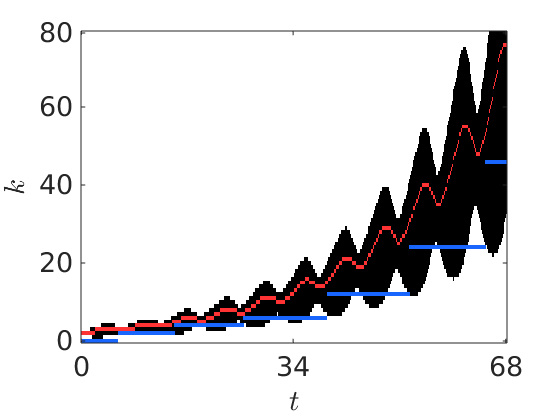}

(c)i \hspace{4cm} (c)ii \hspace{4cm} (c)iii 

\includegraphics[width=0.3\textwidth]{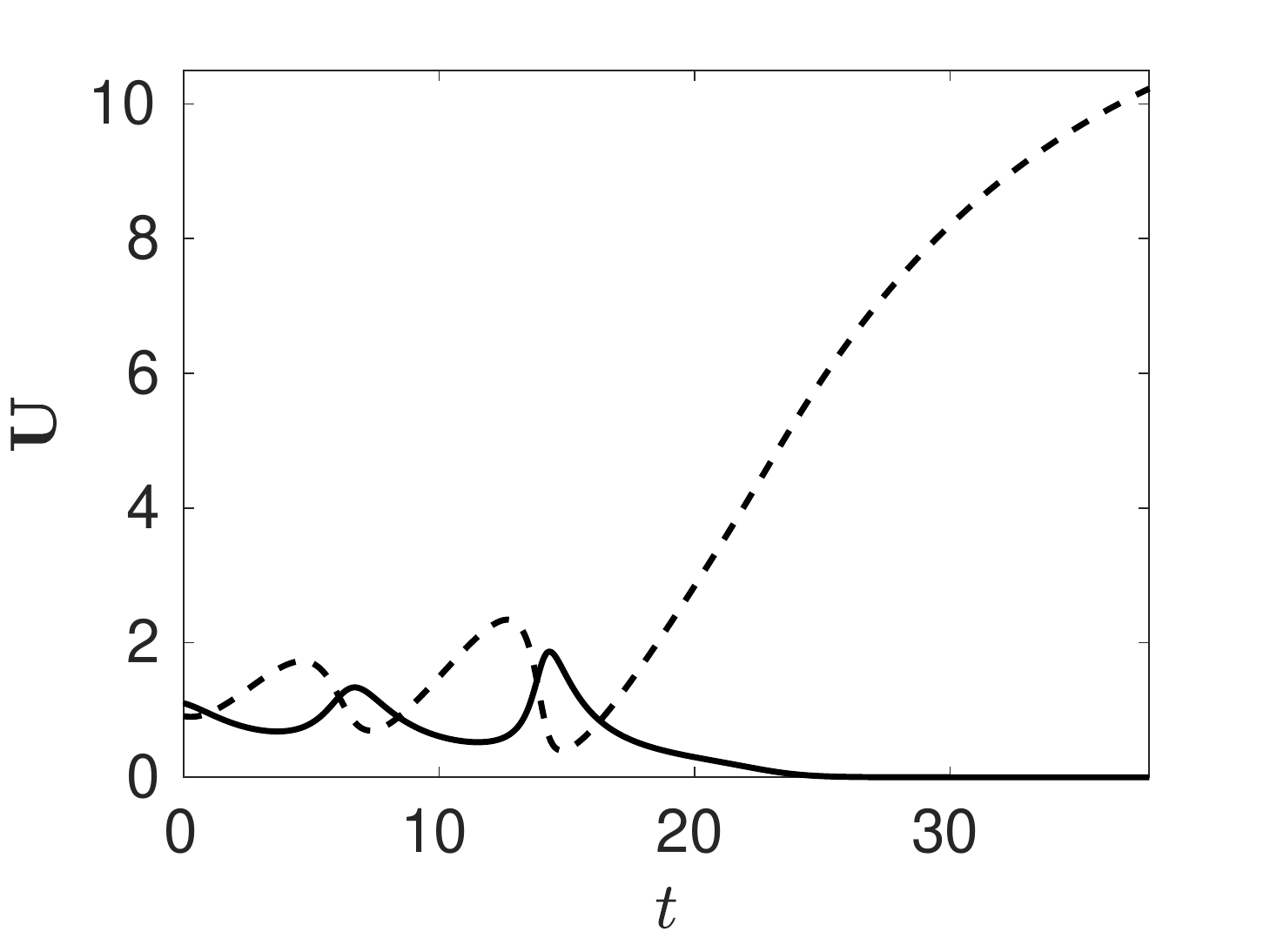}
\includegraphics[width=0.3\textwidth]{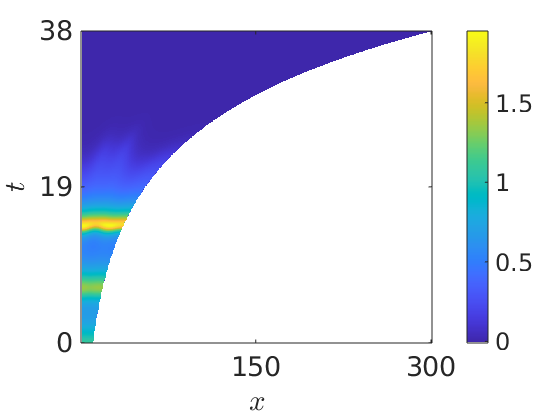}
\includegraphics[width=0.3\textwidth]{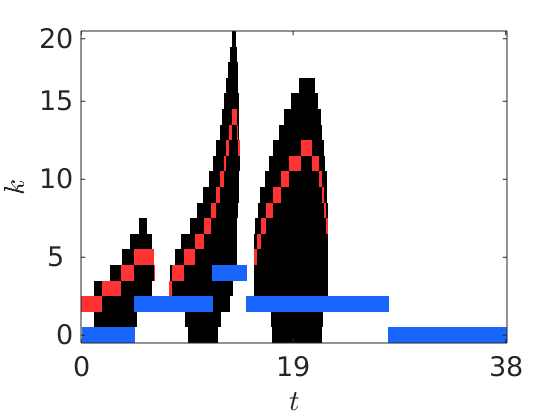}

(d)i \hspace{4cm} (d)ii \hspace{4cm} (d)iii

\vspace{-0.1in}
\caption{Plots corresponding to the kinetics \eqref{Schnack} with parameters $a=0$, $b=1.1$, $d_1=1$, and $d_2=10$. The domain is taken to grow as $r(t) = 10\exp(st)$ for growth rates $s=0.01, 0.04, 0.05$, and $0.09$ in rows (a)-(d) respectively. In all simulations we take the final time such that the domain has grown to $30$ times its initial size. In column (i) we plot solutions of the homogeneous base state solution of \eqref{uniform} over time, with $U_1$ given by the dark line and $U_2$ by the dashed line. In column (ii) we show plots of the PDE solution $u_1$ over space and time. In column (iii) we plot the dispersion set $\mathcal{K}_t$ in black, with the theoretically maximally growing mode in red and the largest frequency component of the FFT of $u_1(x,t)$ from the full numerical solution in blue. NB: The temporal and mode axes have different ranges for different growth rates.\label{expplots}}
\end{figure}

We choose parameters of the kinetics \eqref{Schnack} and an initial domain of size $r(0)=10$ for which the system would be on the boundary of the Turing space for a static domain, only admitting a single unstable wavenumber $k=1$. We then simulate \eqref{growthgen} until the domain has grown to $r(t_f)=30r(0)$. We show our results in Fig.~\ref{expplots}. In each row, we plot solutions to the uniform base state $\mathbf{U}$ from Equation \eqref{uniform} in the first column, the PDE solution $u_1$ in the second column, and the dispersion set $\mathcal{K}_t$ in the third column, with each row demonstrating an increasing growth rate. We observe that the dynamics of the uniform base state plays a substantial role in determining both $\mathcal{K}_t$, and consequently the evolution of the pattern. 

As exponential growth leads to an autonomous planar system, we observe that the decaying oscillations in Fig.~\ref{expplots}(b)i are due to a stable spiral, and that the oscillations in Fig.~\ref{expplots}(c)i are due to a Hopf bifurcation which has created a stable limit cycle. These decaying and persistent oscillations have an impact on the timescale over which a pattern can emerge, and we only see the onset of a pattern near the end of the simulation time in Fig.~\ref{expplots}(c)ii. The fastest growth rate results in a uniform base state which grows far from the original kinetic equilibrium, and pattern formation is no longer possible. As the set of unstable wavenumbers grows exponentially, there is a hysteresis effect such that if a perturbation has not left the base state sufficiently early on, then a pattern cannot form, whereas a developed pattern persists. The quasi-static Turing space is identical to that shown in Fig.~\ref{expplots}(a)iii, and due to the choice of the growth, is independent of the growth rate (up to relabelling time). Hence the qualitative differences in the third column are all manifestations of the non-autonomous nature of the growth. 

The largest modal components observed (in blue) roughly follow a peak-splitting mode doubling process, which is most apparent for the slowest growth case in Fig.~\ref{expplots}(a)ii, but breaks down for faster growth as in Fig.~\ref{expplots}(c)ii, as anticipated by \cite{ueda2012mathematical}. While the linear analysis does not precisely predict these observed modes, it does give a qualitative insight into the processes leading to these patterned states. Specifically, the numerically observed modes all follow a period of time wherein that specific mode has been unstable and allowed to grow away from the homogeneous base state. Additionally, the lower-frequency solutions seen in the faster growing domains can also be explained as, due to oscillations, the system does not remain in a state admitting a given unstable mode for nearly as long as it does for the slower growth cases.

\begin{figure}
\centering
\includegraphics[width=0.32\textwidth]{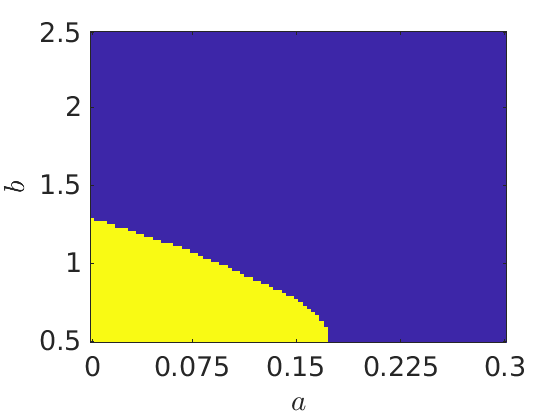}
\includegraphics[width=0.32\textwidth]{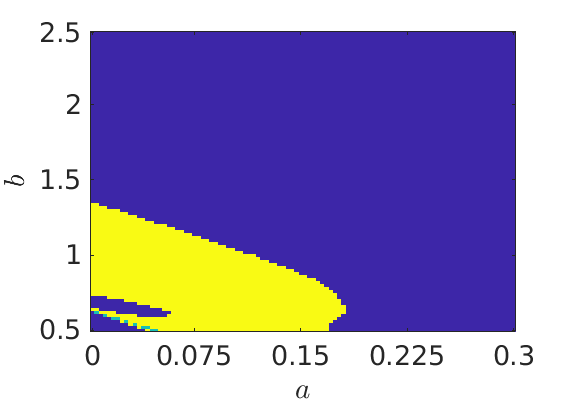}
\includegraphics[width=0.32\textwidth]{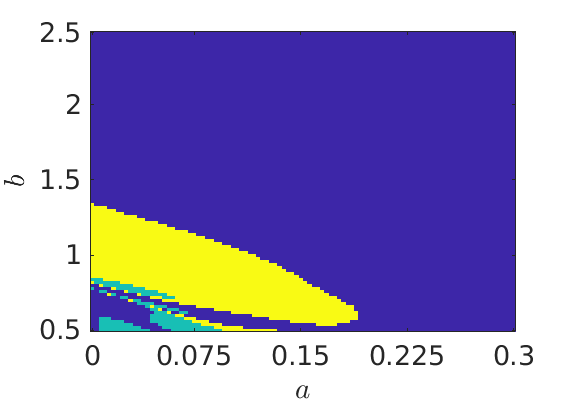}

\hspace{0.2cm} (i) \hspace{4.7cm} (ii) \hspace{4.7cm} (iii) 

\vspace{-0.1in}
\caption{Turing spaces $\mathcal{T}_t$ corresponding to the kinetic parameters in \eqref{Schnack} with parameters $a=0$, $b=1.1$, $d_1=1$, and $d_2=10$. The domain is taken to grow exponentially like $r(t) = 10\exp(0.01t)$, and the Turing spaces are computed at $t=0, 10, 20$ in columns (i)-(iii), respectively. A parameter set which has an unstable mode in $k=1,\dots,200$ at time $t$ is given in yellow (light), a point for which $t \in \mathcal{I}_0$ (i.e.~homogeneous instability) is in teal (medium), and all other points are colored blue (dark) which indicates stability of the homogeneous state.\label{turingplots}}
\end{figure}

Next we consider Turing spaces, $\mathcal{T}_t$, at different instances in time, in Fig.~\ref{turingplots}. The first column shows the initial Turing space, which is equivalent to the quasi-static space obtained by just incorporating the growth rate into the kinetics \cite{madzvamuse2010stability}, and specifically (i) is equivalent to the static Turing space without growth. As expected, we observe little change from a small kinetic addition at $t=0$, but for larger times we see previously-unstable regions become stable, and regions becoming unstable to homogeneous perturbations, as well as new regions becoming unstable as the Turing space expands around the edges. Such observations are in line with the results of \cite{klika2017history}, though we remark that these spaces are not equivalent as our approach accounts for discrete wavenumbers, and does not need the assumption of slow growth.

\begin{figure}
\centering
\includegraphics[width=0.3\textwidth]{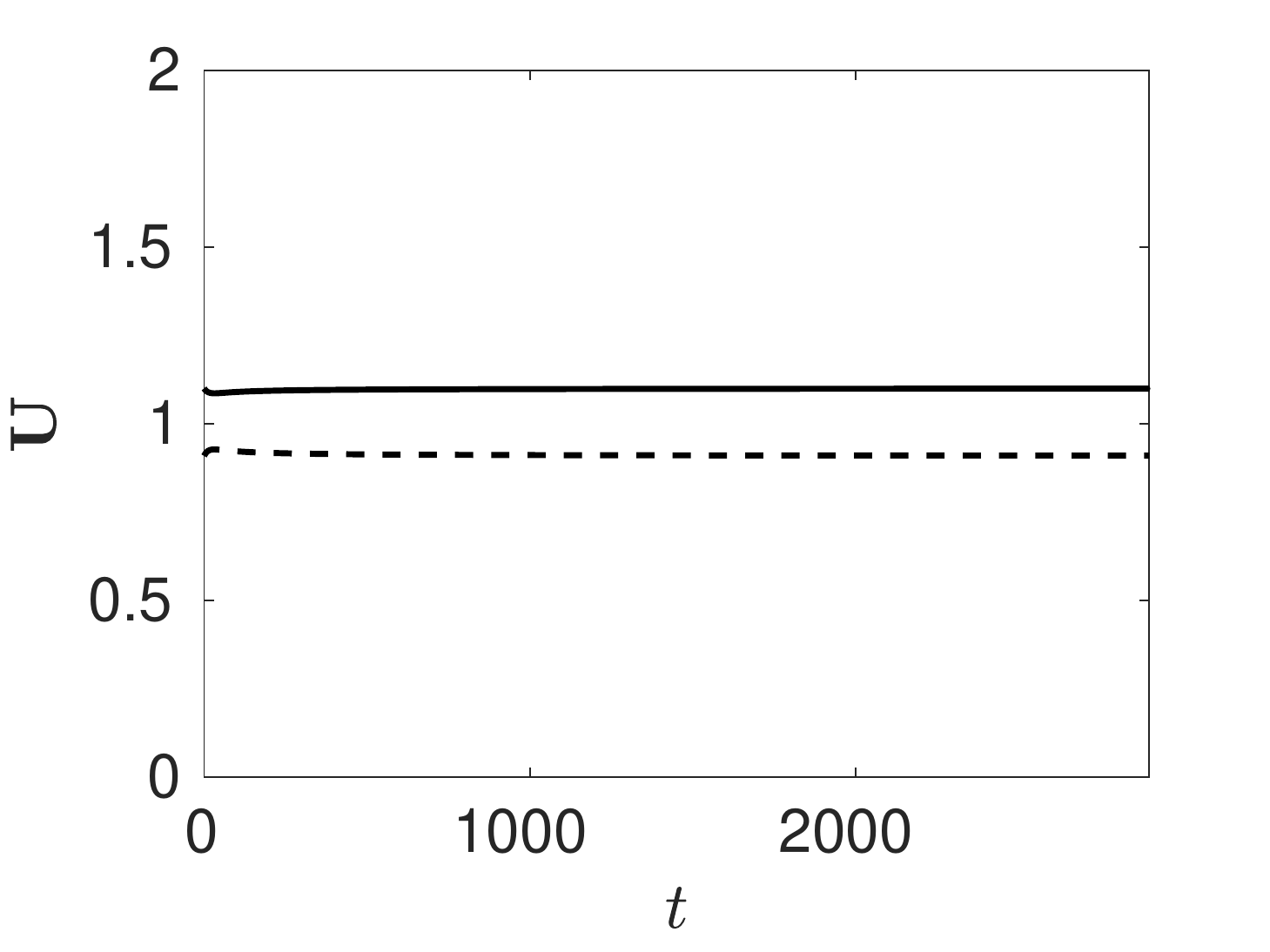}
\includegraphics[width=0.3\textwidth]{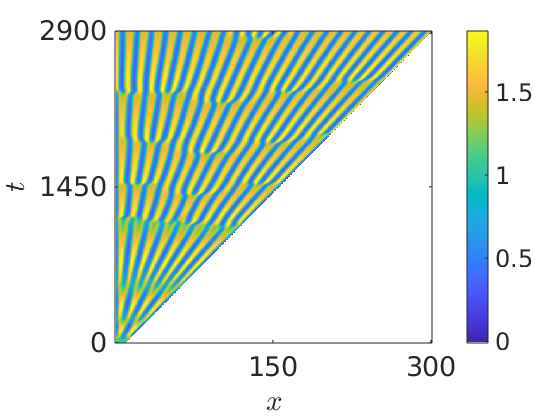}
\includegraphics[width=0.3\textwidth]{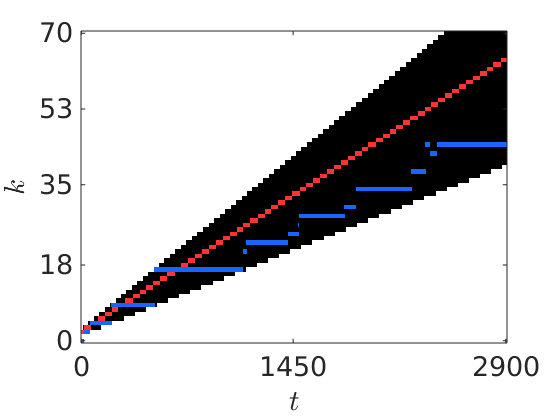}

(a)i \hspace{4cm} (a)ii \hspace{4cm} (a)iii 

\includegraphics[width=0.3\textwidth]{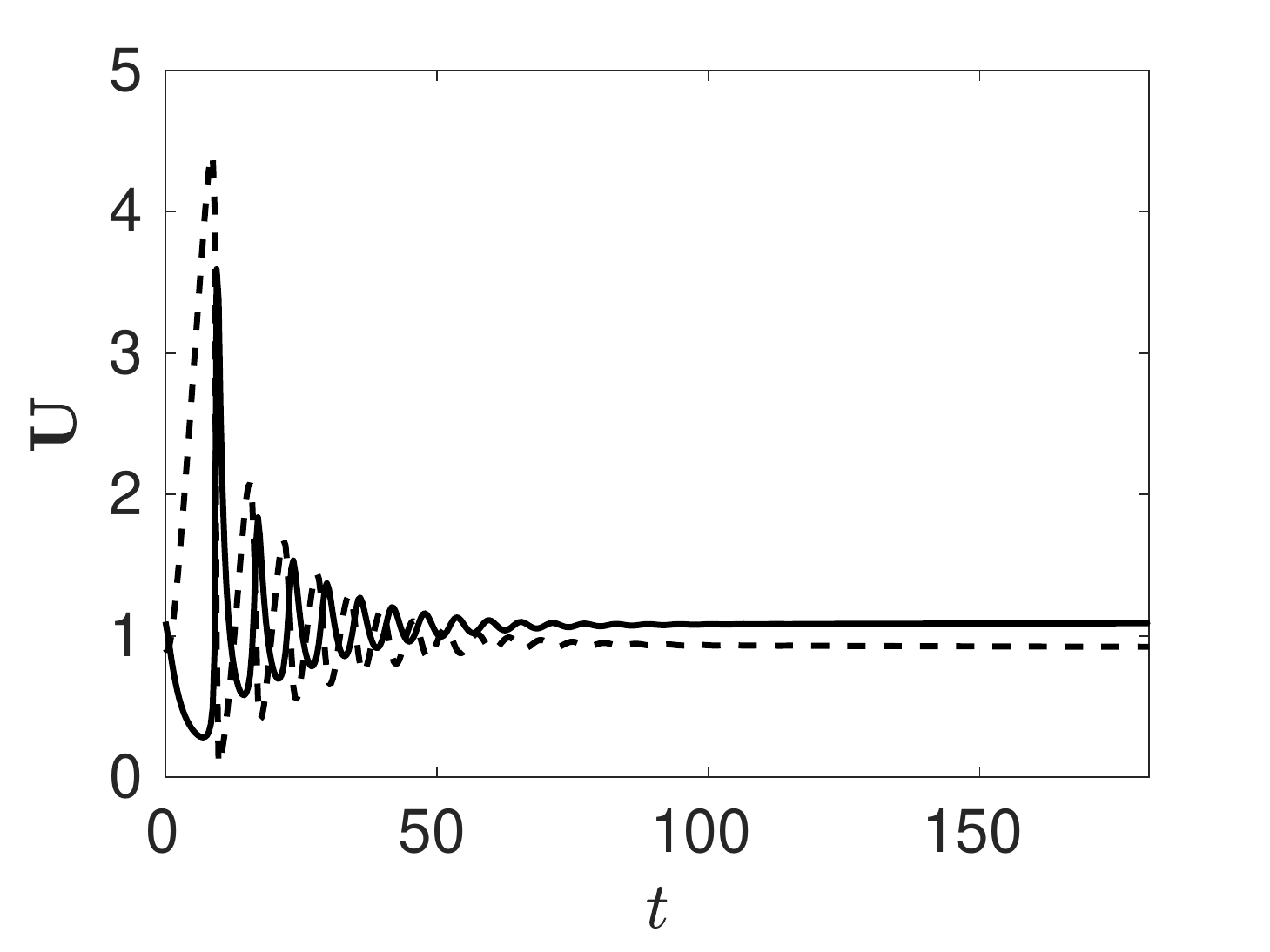}
\includegraphics[width=0.3\textwidth]{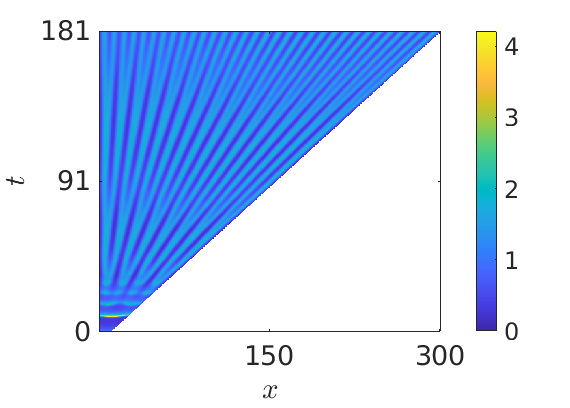}
\includegraphics[width=0.3\textwidth]{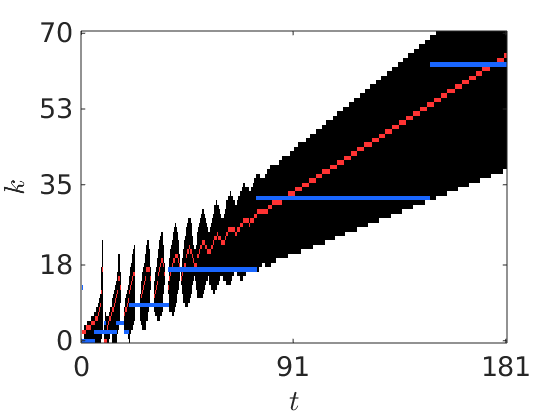}

(b)i \hspace{4cm} (b)ii \hspace{4cm} (b)iii 

\includegraphics[width=0.3\textwidth]{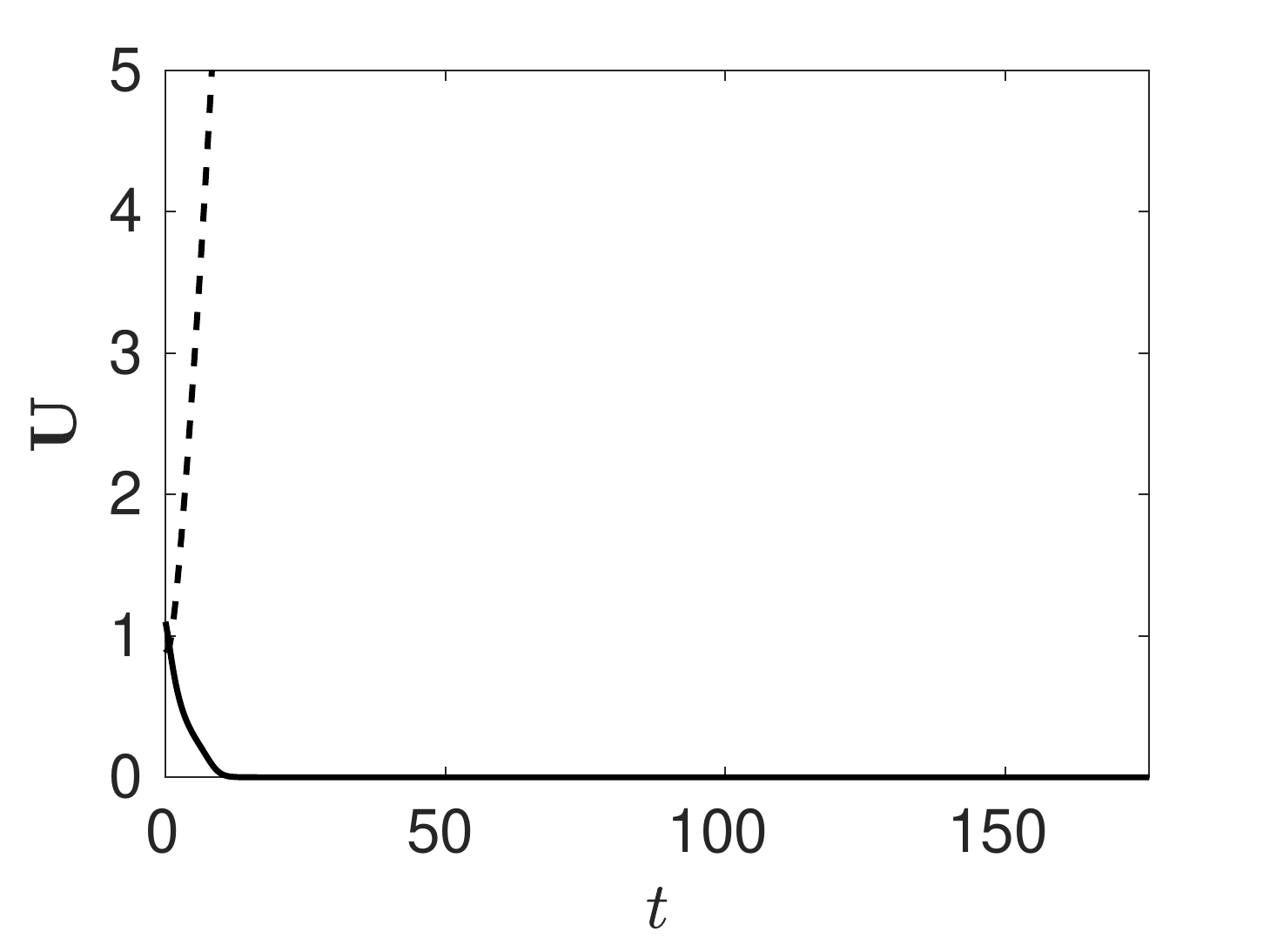}
\includegraphics[width=0.3\textwidth]{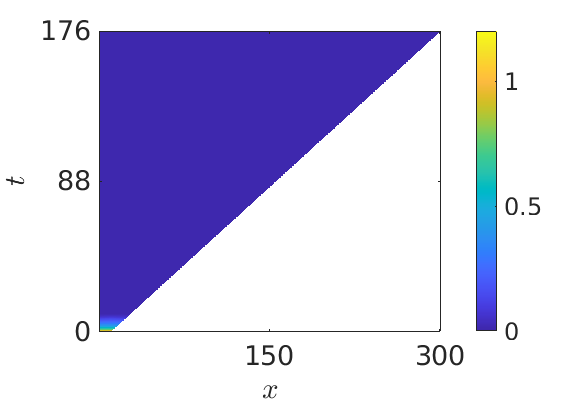}
\includegraphics[width=0.3\textwidth]{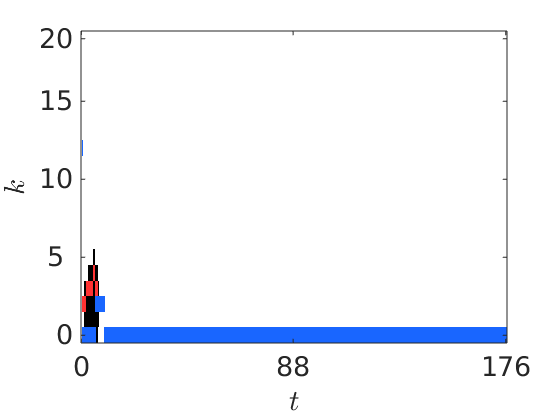}

(c)i \hspace{4cm} (c)ii \hspace{4cm} (c)iii 

\vspace{-0.1in}
\caption{Plots corresponding to the kinetics \eqref{Schnack} with parameters $a=0$, $b=1.1$, $d_1=1$, and $d_2=10$. The domain is taken to grow as $r(t) = 10(1+st)$ for growth rates $s=0.01, 0.16$, and $0.165$ in rows (a)-(c), respectively. In all simulations we take the final time such that the domain has grown to $30$ times its initial size. In column (i) we plot solutions of the homogeneous base state solution of \eqref{uniform} over time, with $U_1$ given by the dark line and $U_2$ by the dashed line. In column (ii) we show plots of the PDE solution $u_1$ over space and time. In column (iii) we plot the dispersion set $\mathcal{K}_t$ in black, with the theoretically maximally growing mode in red and the largest frequency component of the FFT of $u_1(x,t)$ from the full numerical solution in blue. NB: The temporal and mode axes have different ranges for different growth rates.\label{linplots}}
\end{figure}

We consider linear growth in Fig.~\ref{linplots}, with increasing growth rates in each subsequent row. Other than similar transient effects to before, the final modes observed are similar in each case except as the growth rate surpasses $s=0.16$. Slightly beyond this point, by $s=0.165$, the steady state of the uniform base states is no longer stable, and instead we see in Fig.~\ref{linplots}(c)i that $U_1$ tends toward $0$, and $U_2$ diverges to infinity. We remark that this destabilization of the uniform base state's long-time behavior can be observed in both the dispersion sets and space-time plots. The concentration of $u_2$ increases in time uniformly as the domain expands. This phenomenon is inherently non-autonomous, and depends strongly on the initial condition; for other choices of $\boldmath{U}(0)$ we observe different behaviors.  In Fig.~\ref{linplots}(b)ii,iii, we see sharp oscillations with increasing amplitudes before a pattern is allowed to form, suggesting a kind of excitability inherent in the transient dynamics.

\begin{figure}
\centering
\includegraphics[width=0.3\textwidth]{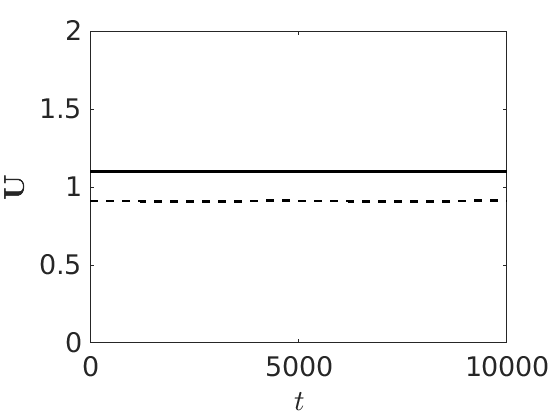}
\includegraphics[width=0.3\textwidth]{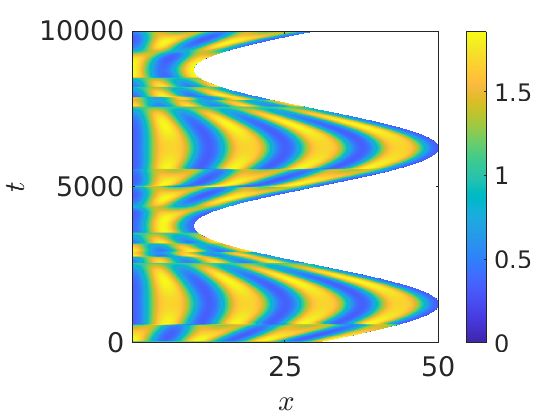}
\includegraphics[width=0.3\textwidth]{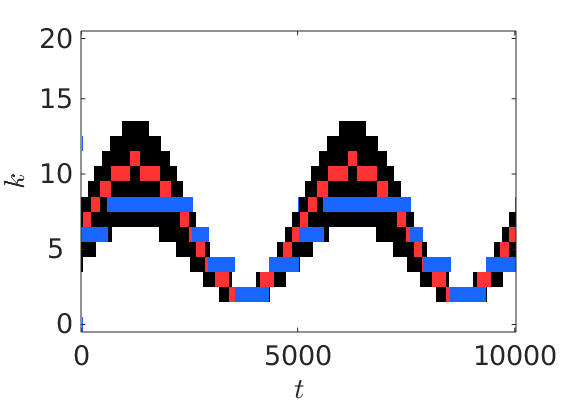}

(a)i \hspace{4cm} (a)ii \hspace{4cm} (a)iii 

\includegraphics[width=0.3\textwidth]{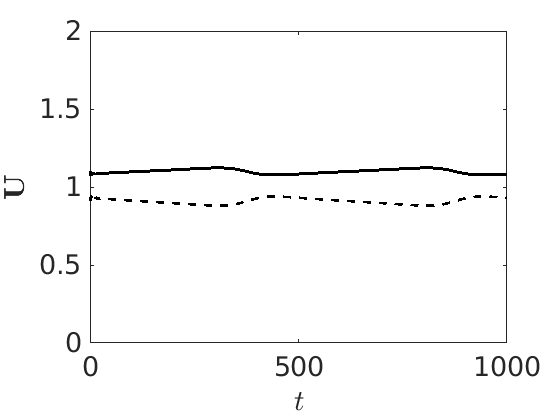}
\includegraphics[width=0.3\textwidth]{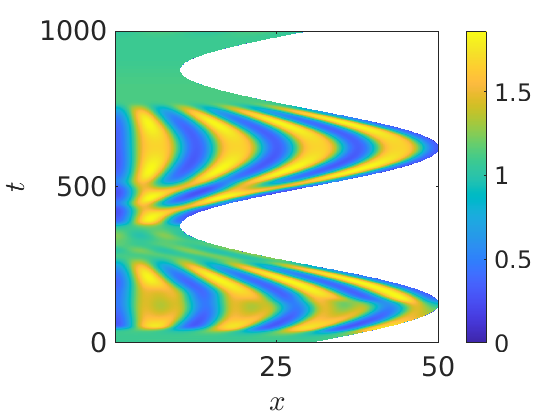}
\includegraphics[width=0.3\textwidth]{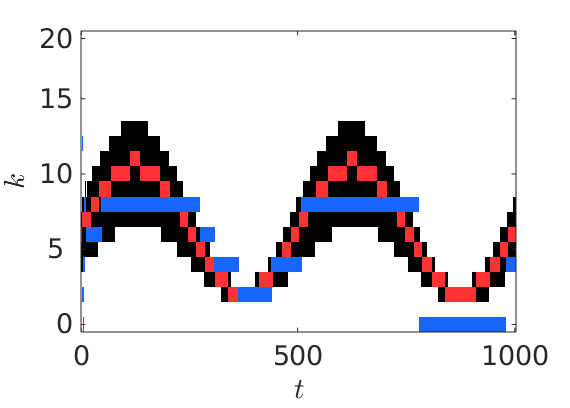}

(b)i \hspace{4cm} (b)ii \hspace{4cm} (b)iii

\includegraphics[width=0.3\textwidth]{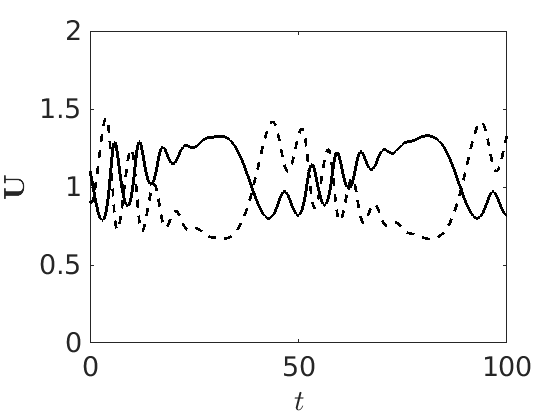}
\includegraphics[width=0.3\textwidth]{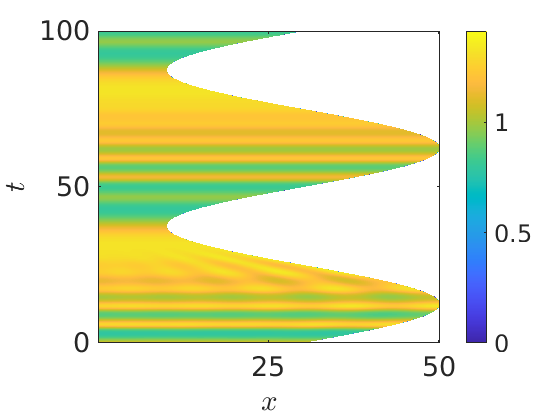}
\includegraphics[width=0.3\textwidth]{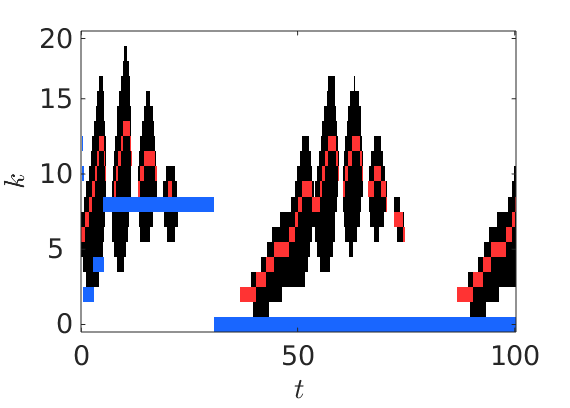}

(c)i \hspace{4cm} (c)ii \hspace{4cm} (c)iii 

\vspace{-0.1in}
\caption{Plots corresponding to the kinetics \eqref{Schnack} with parameters $a=0$, $b=1.1$, $d_1=1$, and $d_2=10$. The domain is taken to evolve as $r(t) = 30(1+(2/3)\sin(4\pi t /t_f))$ for different final times (and hence growth rates) $t_f=10^4, 10^3$, and $10^2$ in rows (a)-(c) respectively. In column (i) we plot solutions of the homogeneous base state solution of \eqref{uniform} over time, with $U_1$ given by the dark line and $U_2$ by the dashed line. In column (ii) we show plots of $u_1$ over space and time. In column (iii) we plot the dispersion set $\mathcal{K}_t$ in black, with the theoretically maximally growing mode in red and the largest frequency component of the FFT of $u_1(x,t)$ from the full numerical solution in blue. NB: The temporal axes have different ranges for different growth rates.\label{sinplots}}
\end{figure}

There are a wide variety of more complex kinds of domain evolution one could consider, especially if we allow expansion and contraction rather than monotonic growth. As a simple example of this, we consider periodic growth and contraction given by a sinusoidal function in Fig.~\ref{sinplots}. Again, the set $\mathcal{K}_t$ in the slow case (a)iii is identical to the quasi-static approximation, and the observed modes follow this reasonably well. While the dispersion set only has extremely small changes in (b)iii, we see that the base states in this case slowly oscillate in (b)i, and that the pattern seemingly disappears during the height of contractions in (b)ii, only to reappear later. In the case of more rapid oscillations, the uniform base states oscillate irregularly, and spatial pattern formation is only intermittent (see $t\in [25,40]$) and fails to persist. One key observation that is clear from the plots of $\mathcal{K}_t$, is that contraction of the domain is a highly stabilizing effect, as during the contracting period of Figs.~\ref{sinplots}(b,c)iii, we see substantially fewer unstable modes than during the expanding phase.

While growth has been heavily studied in the reaction-diffusion literature, contraction or other complex forms of domain evolution have not been as thoroughly explored. While these periodically expanding and contracting domains may be somewhat exaggerated from realistic examples, we remark that large contractions have been observed in the blastocyst stage of mice embryos \cite{shimoda2016time}. Such contractions can lead to a decrease in volume of as much as $20\%$, and potentially play important roles in morphogenesis, likely altering local chemical concentrations in addition to mechanical effects. While most biological media are undergoing expansion and growth, we anticipate that a more nuanced and accurate representation of morphogenesis will necessarily involve processes such as contraction. Indeed, if the domain contraction is strong enough, it forces the solution to be spatially heterogeneous (yet still oscillatory in time) and can suppress future patterning. We comment further on this point later.

In many of our plots and simulations, the blue line corresponding to the maximal mode in the FFT lies mostly within the shaded instability region. In other cases, particularly those where the evolution of the instability region is not strictly monotone in time, there is a slight lag in the full nonlinear system in responding to instabilities, which is why the maximal mode will sometimes extend outside of the black region. It is important to note that, at the onset of instability, the maximal mode resulting from the instability lies in the shaded region, since the new pattern is set by the instability. This maximal mode can then persist for a time even as the instability region shifts, due to nonlinear terms stabilizing the fully nonlinear simulations. However, the maximal mode gradually loses stability, and a new dominant unstable mode is selected within the shaded instability region which is valid at that time. This process continues. 

For more extreme cases, such as that shown in Fig. \ref{sinplots}(c), there is a strong contraction of the domain leading to all modes becoming stable. This stabilizes a uniform solution, and upon the later expansion of the domain, the concentration remains uniform yet oscillates. The reason for this is that upon the second expansion of the domain, the uniform solution is not acted on by a spatial perturbation (as it was going into the first expansion). Without a spatial perturbation, the higher spatial modes are never activated, despite the domain change, and there is hence no Turing instability. This is in particular seen in Fig. \ref{sinplots}(c)(ii), where after the first domain contraction the later dynamics are spatially uniform and simply oscillate in time as the density of the chemicals change along with the domain size. Therefore, it appears as those through solutions maintaining some spatial heterogeneity over time are susceptible to later bifurcations leading to new spatial patterns (such as peak splitting leading to a doubling of localized structures at each bifurcation), as the spatial variations provide enough noise to permit successive Turing instabilities as the domain grows. However, those solutions for which domain evolution suppresses spatial heterogeneity resulting in a spatially uniform state, subsequent dynamics associated to domain evolution do not appear sufficient to initiate later Turing instabilities. While our linear instability analysis compares well with full numerical simulations in this regard, a more rigorous nonlinear analysis focused specifically on this behaviour would possibly elucidate this suppression of pattern formation.

\subsection{Isotropic evolution of an excitable medium}\label{sec_FHN}
We now consider the reaction kinetics \eqref{FHN} with parameters corresponding to the Turing (but not Turing--Hopf) space for a static domain (see \cite{sanchez2018turing} for bifurcation diagrams). We consider linear and step wise growth functions to demonstrate the impact that an excitable system has on pattern formation. Theorem \ref{Thm3} is also useful in determining when spatial modes can destabilize a homogeneous but oscillating base state on a static domain, such as that which occurs generically when the kinetics have undergone a Hopf bifurcation. We remark that linear analysis is insufficient to completely characterize instabilities which involve competition between both unstable Turing and Hopf modes, and generally the behavior can depend on the initial perturbation in addition to the parameters. Nevertheless, we demonstrate here that Theorem \ref{Thm3} can give some insight into when these Hopf modes can occur, which is a prerequisite to both purely oscillatory or spatiotemporal dynamics involving the competition of modes from both kinds of instabilities. Additionally we demonstrate how the solution to Equation \eqref{uniform} precisely determines the possibility of oscillatory dynamics.

\begin{figure}
\centering
\includegraphics[width=0.3\textwidth]{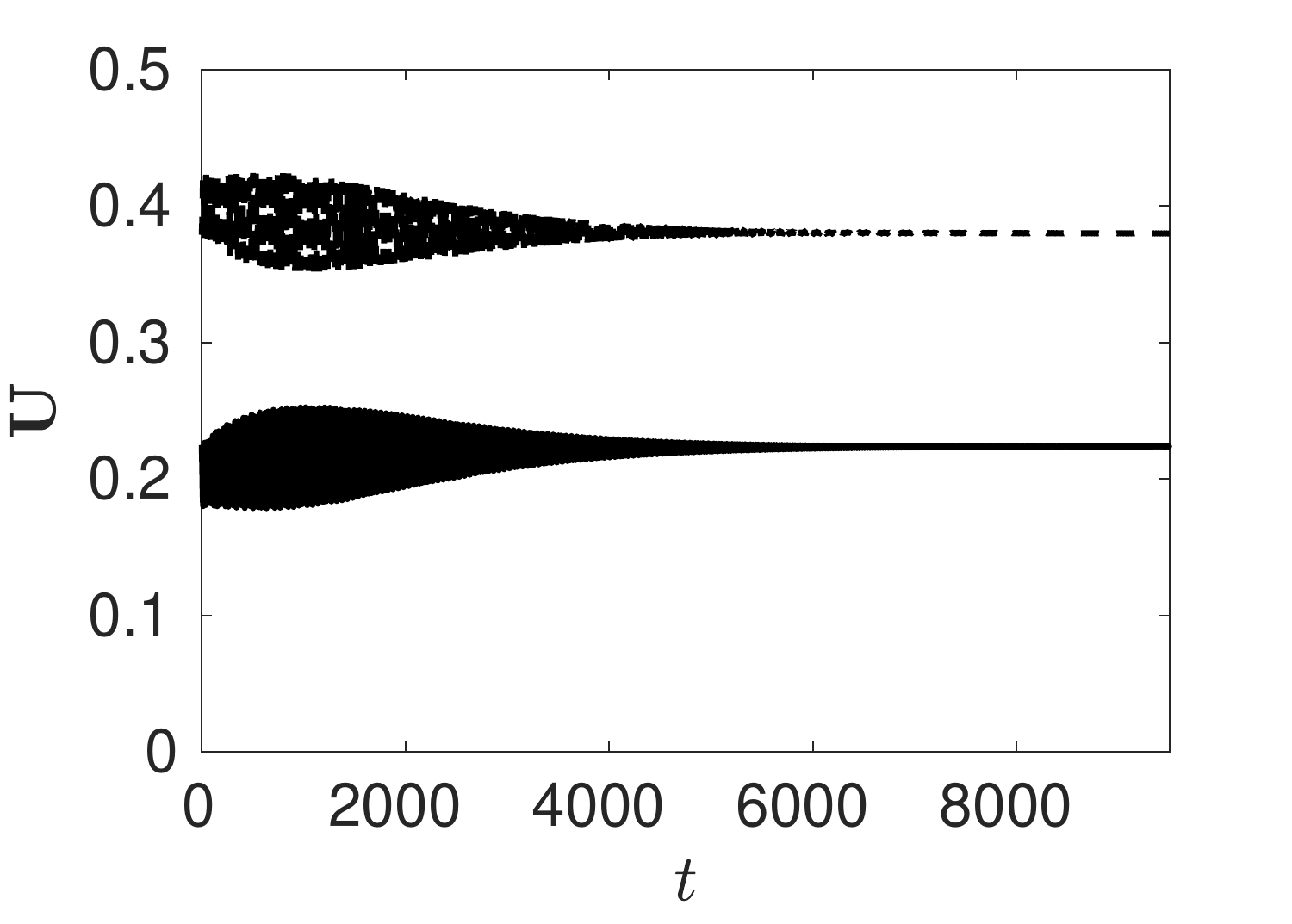}
\includegraphics[width=0.3\textwidth]{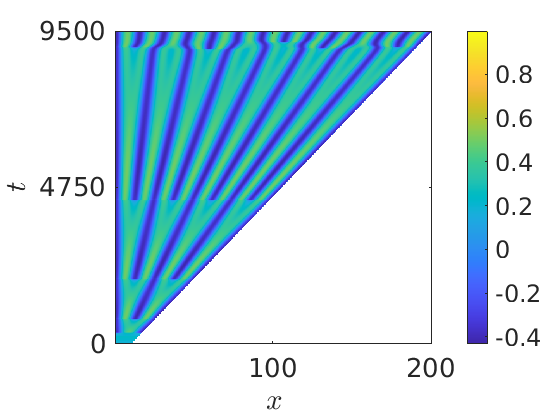}
\includegraphics[width=0.3\textwidth]{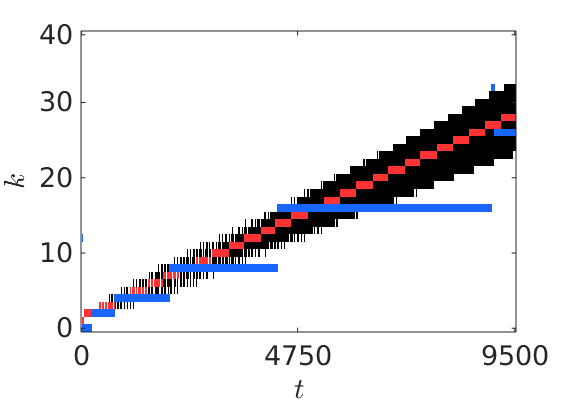}

(a)i \hspace{4cm} (a)ii \hspace{4cm} (a)iii

\includegraphics[width=0.3\textwidth]{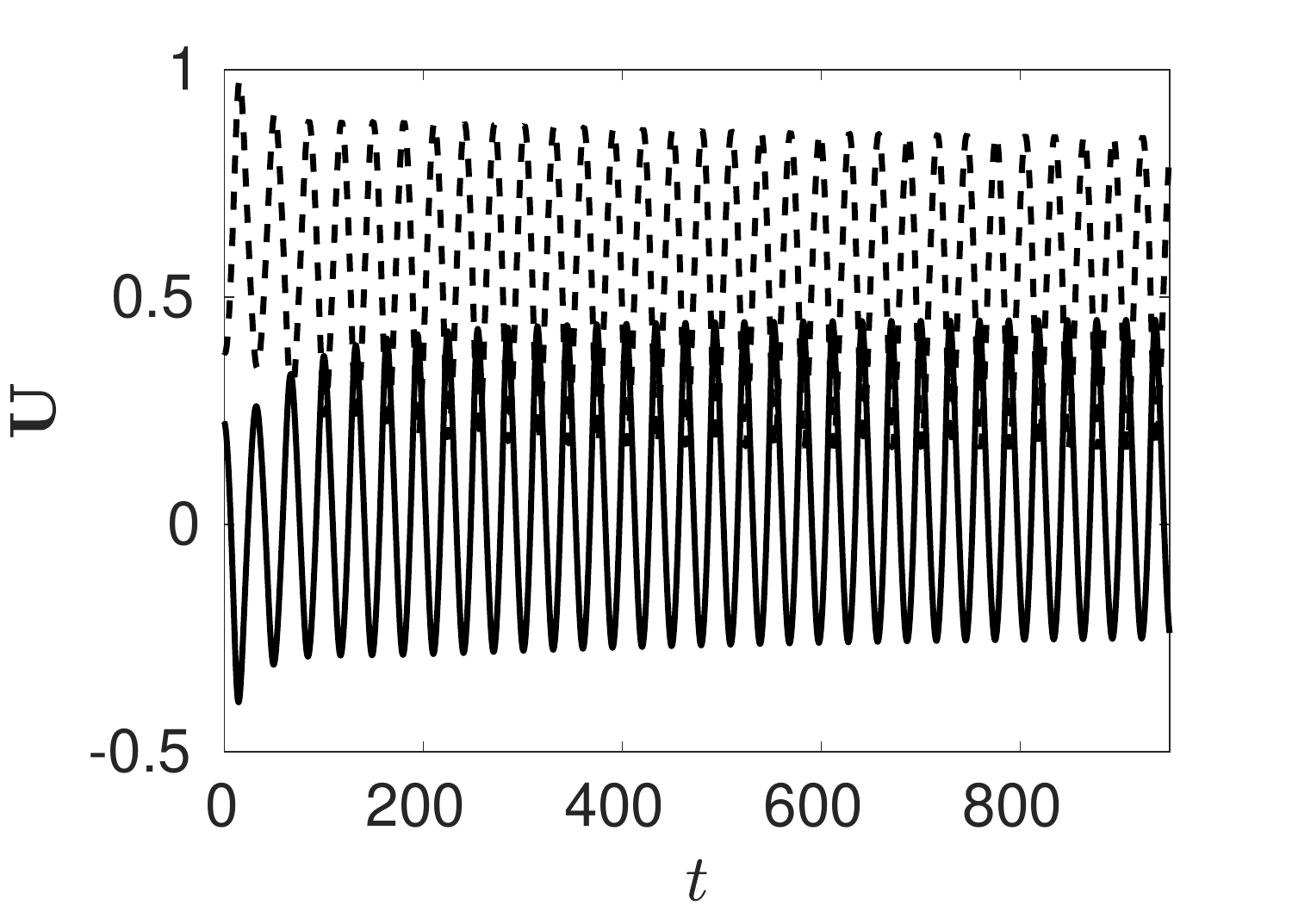}
\includegraphics[width=0.3\textwidth]{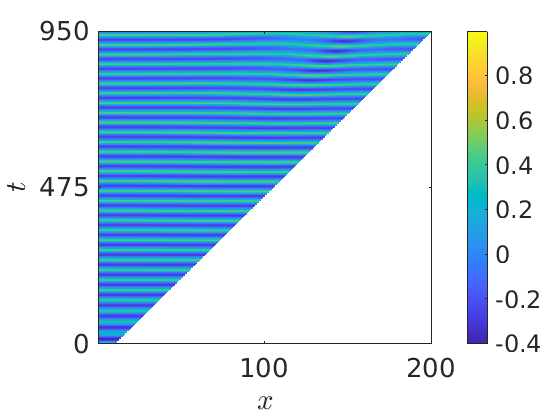}
\includegraphics[width=0.3\textwidth]{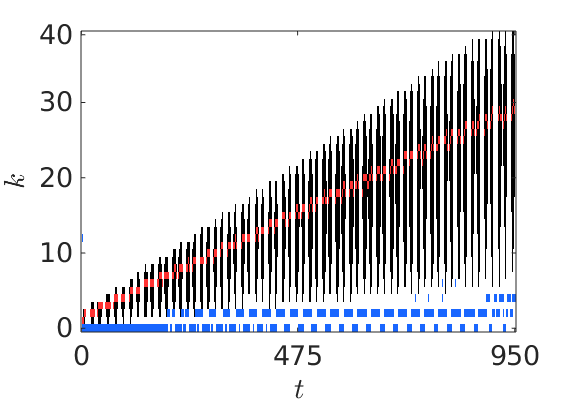}

(b)i \hspace{4cm} (b)ii \hspace{4cm} (b)iii 

\vspace{-0.1in}
\caption{Plots corresponding to the kinetics \eqref{FHN} with parameters $a=0.6$, $b=0.99$, $c=1.02$, $i_0=0.6$, $d_1=1$, and $d_2=1.7$. The domain is taken to grow as $r(t) = 10(1+st)$ up to a final size $r(t_f)=200$ for (a) $s=0.002$ and (b) $0.02$. In column (i) we plot solutions of the homogeneous base state solution of \eqref{uniform} over time, with $U_1$ given by the dark line and $U_2$ by the dashed line. In column (ii) we show plots of the PDE solution $u_1$ over space and time. In column (iii) we plot the dispersion set $\mathcal{K}_t$ in black, with the theoretically maximally growing mode in red and the largest frequency component of the FFT of $u_1(x,t)$ from the full numerical solution in blue. NB: The temporal axes have different ranges for different growth rates.\label{FHNlinplots}}
\end{figure}

In Fig.~\ref{FHNlinplots} we consider the linear growth case. For very small growth rates we recover the quasi-static dispersion relation (not shown), but as the growth rate is increased we observe transient oscillations as the base state slowly spirals back to its steady state value (Fig.~\ref{FHNlinplots}(a)). As the growth rate is increased further, the initial disturbance from the kinetic steady state leads to a sustained oscillation (Fig.~\ref{FHNlinplots}(b)i), which persists even when the growth is no longer substantially influencing the dynamics. The oscillatory base state leads to a dispersion set which is no longer a simply connected set, such that modes oscillate between growing for some time and decaying for others, which prevents the formation of spatial patterns. This occurs because even without growth, the base state dynamics are excitable such that both a stable steady state and a stable limit cycle coexist for these parameters, and growth provides the necessary perturbation to transition between the two attracting sets. We do note that over longer time periods, spatiotemporal patterns appear to form as indicated by the FFT results in Fig.~\ref{FHNlinplots}(b)iii. There structures do not persist for long, however, with the patterns forming, dissipating, and then forming again, due to the strongly oscillatory state and resulting non-monotone instability region.

\begin{figure}
\centering
\includegraphics[width=0.3\textwidth]{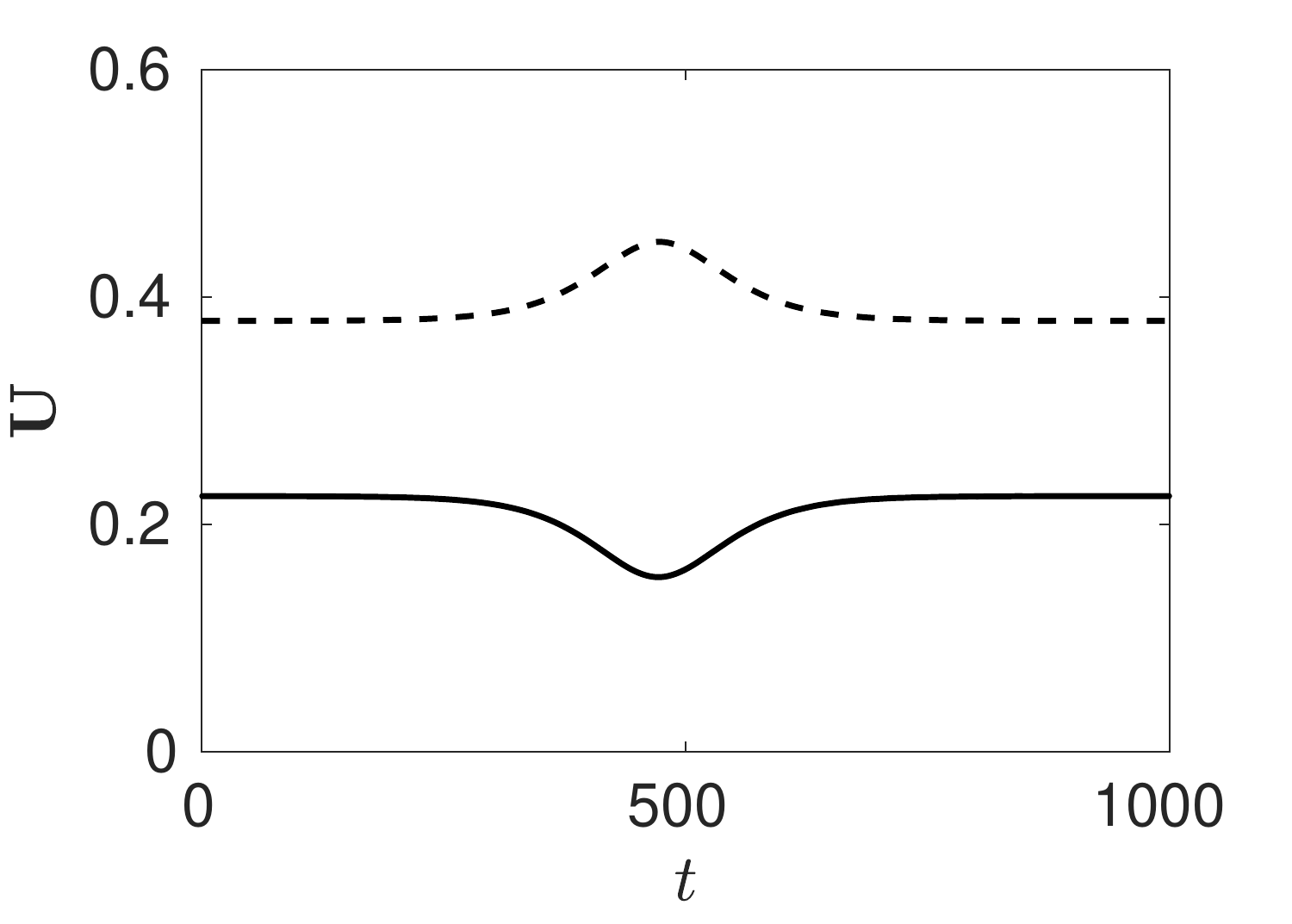}
\includegraphics[width=0.3\textwidth]{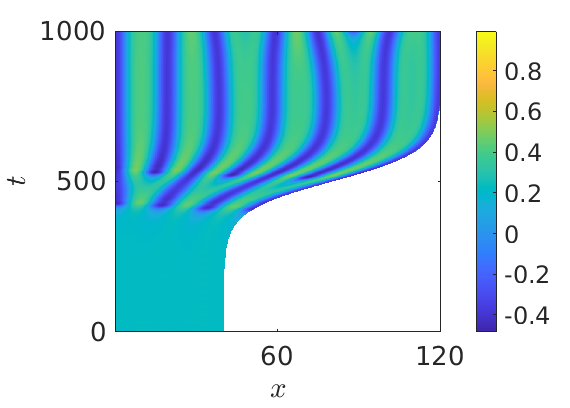}
\includegraphics[width=0.3\textwidth]{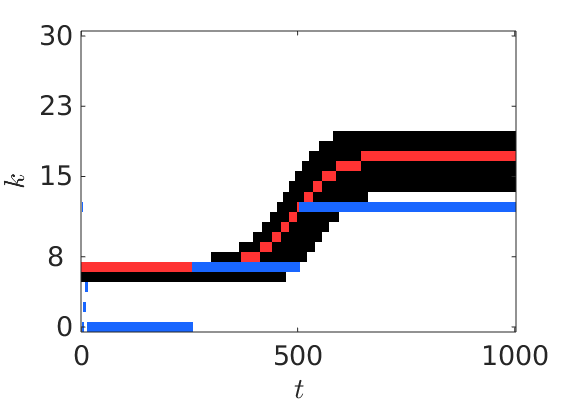}

(a)i \hspace{4cm} (a)ii \hspace{4cm} (a)iii 

\includegraphics[width=0.3\textwidth]{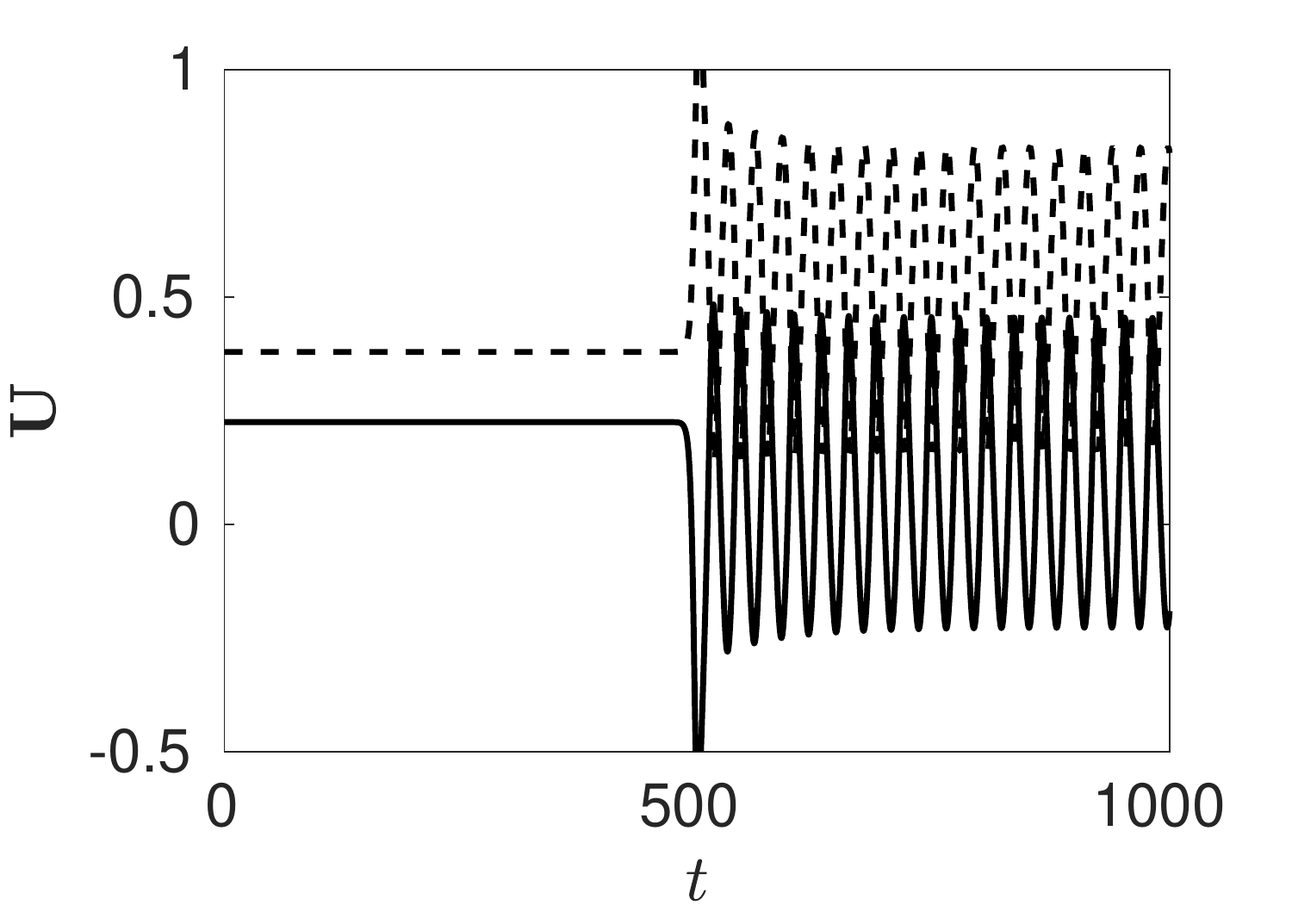}
\includegraphics[width=0.3\textwidth]{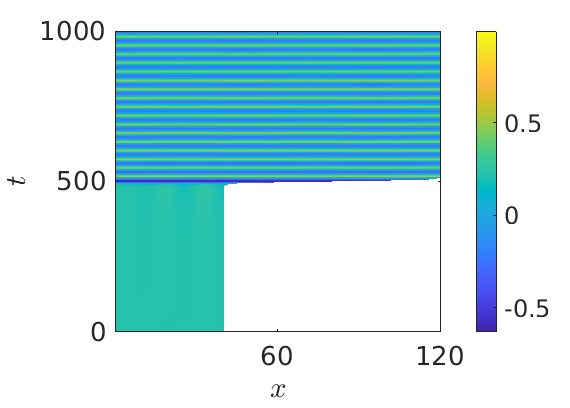}
\includegraphics[width=0.3\textwidth]{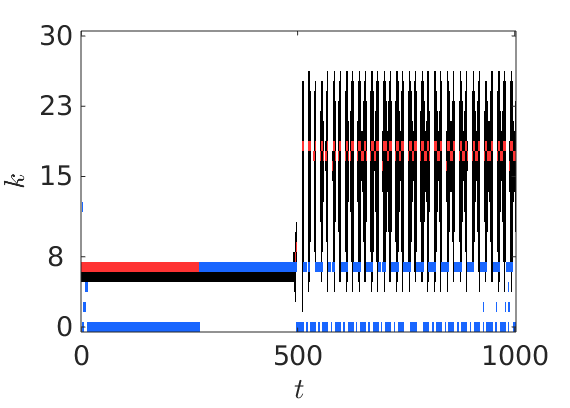}

(b)i \hspace{4cm} (b)ii \hspace{4cm} (b)iii

\vspace{-0.1in}
\caption{Plots corresponding to the kinetics \eqref{FHN} with parameters $a=0.6$, $b=0.99$, $c=1.02$, $i_0=0.6$, $d_1=1$, and $d_2=1.7$. The domain is taken to grow as $r(t) = 40(2+\tanh(s(t-t_f/2)))$ with $t_f=10^3$ for (a) $s=0.01$, (b) $s=0.2$. In column (i) we plot solutions of the homogeneous base state solution of \eqref{uniform} over time, with $U_1$ given by the dark line and $U_2$ by the dashed line. In column (ii) we show plots of $u_1$ over space and time. In column (iii) we plot the dispersion set $\mathcal{K}_t$ in black, with the theoretically maximally growing mode in red and the largest frequency component of the FFT of $u_1(x,t)$ from the full numerical solution in blue.\label{FHNtanhplots}}
\end{figure}

Similarly, in Fig.~\ref{FHNtanhplots} we observe that a short but rapid domain expansion can induce the same type of multistability. If the increase in the size of the domain is sufficiently slow, a connected dispersion set is recovered. In fact, the quasi-static approach would always generate such a continuous set, as it cannot account for the possibility of an oscillatory base state. While stepwise growth is less simple to analyze than that of linear or exponential growth, it has physiological significance in a number of organisms which exhibit pulsatile growth spurts between periods of slow or stagnant growth during development \cite{beloussov2003geometro, feijo2001cellular} which we model by a rapid smooth expansion. 

Fig.~\ref{FHNtanhplots}(b)iii again shows that an oscillatory base state can result in transient patterns rather than a single persistent patter, like what was seen in previous examples. Fig.~\ref{FHNtanhplots}(a)iii, however, shows something new and fairly interesting. The initial pattern (corresponding to a dominant mode of $k\sim 7$) destabilizes, with a new pattern selected during the short but rapid growth near $k \sim 12$. After this, this pattern is locked in, even though the instability region soon after becomes fixed between $14\leq k \leq 20$ once the growth of the domain finishes. Since the pattern was formed during this short growth period, it lies adjacent to but just outside of the instability region for $t\rightarrow \infty$. This highlights an interesting case where the Turing pattern would not have been detected in the asymptotic limit of $t\rightarrow \infty$, even though the dynamics of the perturbation become autonomous in this limit. This again highlights the strong role of hysteresis in forming patterns when domain evolution is involved.

\section{Applications to reaction-diffusion systems on isotropically growing manifolds}\label{mfds}
We next give examples of domains which evolve in more than one spatial dimension. 

\subsection{Isotropic evolution of a circular disk in $\mathbb{R}^2$}
Turing conditions for reaction-diffusion systems on a static disk were recently obtained in \cite{sarfaraz2018domain}, and taking growth and volume expansion terms to zero we recover their conditions as a special case. Numerical simulations and experimental results for a specific application of Lengyel-Epstein reaction kinetics on a growing disk were given in \cite{preska2014target}, although external forcing on the reaction-diffusion model was employed. More recent experiments in a radially expanding domain have been performed which show a crucial mode selection phenomenon induced by the speed of the growth \cite{konow2019turing}.

\begin{figure}
\centering
\includegraphics[width=0.32\textwidth]{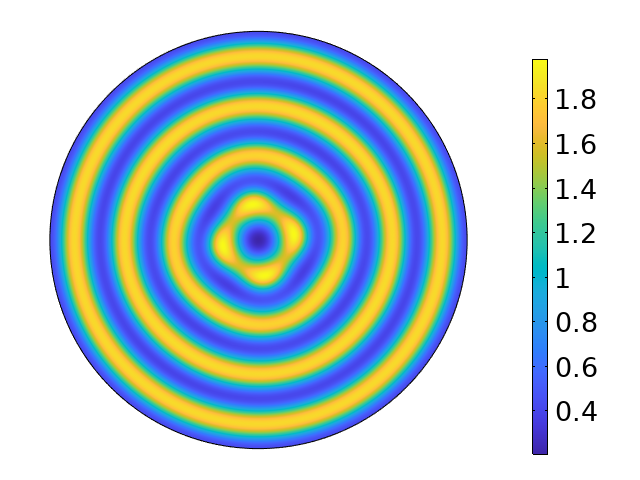}
\includegraphics[width=0.32\textwidth]{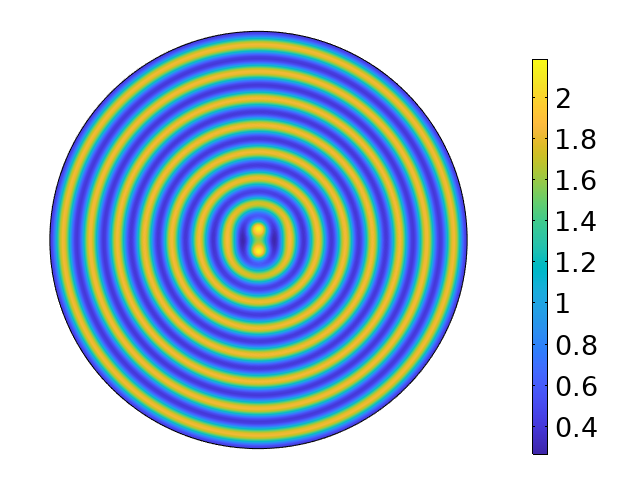}
\includegraphics[width=0.32\textwidth]{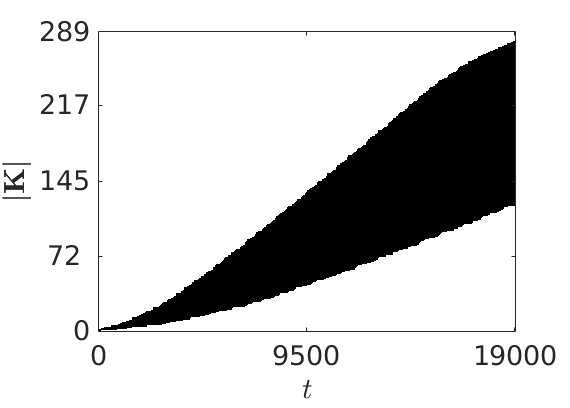}

(a)i \hspace{5cm} (a)ii \hspace{5cm} (a)iii 

\includegraphics[width=0.32\textwidth]{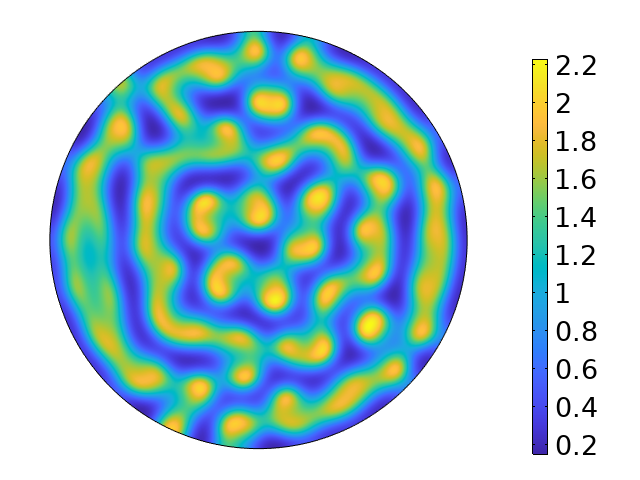}
\includegraphics[width=0.32\textwidth]{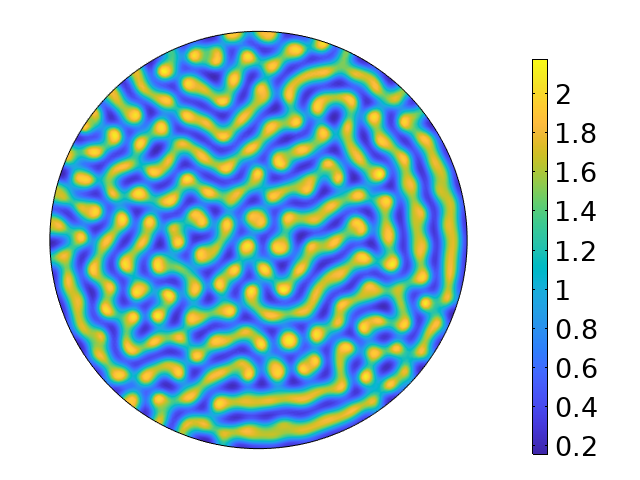}
\includegraphics[width=0.32\textwidth]{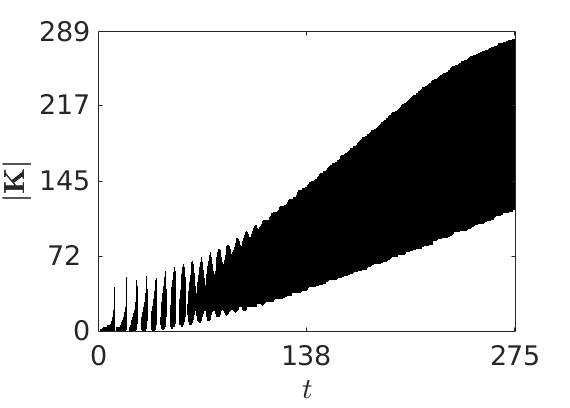}

(b)i \hspace{5cm} (b)ii \hspace{5cm} (b)iii 

\vspace{-0.1in}
\caption{Plots of $u_1$ corresponding to the kinetics \eqref{Schnack} with parameters $a=0$, $b=1.1$, $d_1=1$, and $d_2=10$ on a linearly isotropically growing disk with $r(t) = 4(1+st)$ for (a) $s=0.001$ and (b) $0.069$. We take $t=0.5t_f$ and $t=t_f$ in (i,ii), with dispersion plots $\mathcal{K}_t$ shown in (iii). For each case, we take $t_f$ so that that the domain has grown to $20$ times its initial size. For the dispersion plots, we order the $\tilde{j}_{\ell,k}$ by magnitude and plot dispersion sets in this order, where $|\mathbf{K}|$ denotes the index of this ordering.\label{circplots}}
\end{figure}

We consider $\mathbf{X}=(r(t)x_1\cos(2\pi x_2),r(t)x_1\sin(2\pi x_2))$, $x_1,x_2\in [0,1]$, for the isotropic evolution of a circular disk. Given Neumann data on the circle $|\mathbf{X}|=r(t)$, we have $\rho_{\ell,k}(t) = \frac{\tilde{j}_{\ell,k}^2}{r(t)^2}$, where $\tilde{j}_{\ell,k}$ denotes the $k$th positive root of the derivative of the Bessel function of the first kind $\mathcal{J}_\ell(x)$, $\ell = 0,1,2,\dots$, and $\mu(t) = r(t)^2$.

Under the criterion given in Theorem \ref{Thm3}, we find that the $(\ell,k)$th perturbation of the form \eqref{pert2} corresponding to $\rho_{\ell,k}(t) = \frac{\tilde{j}_{\ell,k}^2}{r(t)^2}$ is unstable over some interval $t\in \mathcal{I}_{\ell,k}$, provided that the inequality
\begin{equation}\begin{aligned}\label{exdisk}
& \det(J) - \left( d_2 J_{11} + d_1 J_{22} \right)\frac{\tilde{j}_{\ell,k}^2}{r^2} + d_1d_2\frac{\tilde{j}_{\ell,k}^4}{r^4} \\
& \qquad
 < - 2\frac{\ddot{r}}{r}  - 2\left( \frac{\dot{r}}{r}\right)^2 - 2\frac{\dot{r}}{r}\left( \left( d_1 + d_2\right)\frac{\tilde{j}_{\ell,k}^2}{r^2} - \mathrm{tr}(J) \right) \\
& \qquad\qquad + \max\left\lbrace 2\frac{\dot{r}}{r} \frac{\dot{J}_{12}}{J_{12}} - J_{12}\frac{d}{dt}\left( \frac{d_1 \tilde{j}_{\ell,k}^2}{r^2 J_{12}} -\frac{J_{11}}{J_{12}}\right) , 2\frac{\dot{r}}{r} \frac{\dot{J}_{21}}{J_{21}} - J_{21}\frac{d}{dt}\left( \frac{d_2 \tilde{j}_{\ell,k}^2}{r^2 J_{21}} - \frac{J_{22}}{J_{21}}\right)\right\rbrace 
\end{aligned}\end{equation}
holds for all $t\in \mathcal{I}_{\ell,k}$. In the special case where $r(t)=R$ for constant $R>0$, the condition \eqref{exdisk} reduces to 
\begin{equation}
\det(J) - \left( d_2 J_{11} + d_1 J_{22} \right)\frac{\tilde{j}_{\ell,k}^2}{R^2} + d_1d_2\frac{\tilde{j}_{\ell,k}^4}{R^4} < 0\,,
\end{equation}
which is exactly the classical Turing condition for a static circular disk of radius $R$.

We show simulations on a growing disk in Fig.~\ref{circplots}. We see that for slower growth rates, the successive instabilities lead to symmetric patterning, though as the growth rate is increased additional unstable modes lead to less-robust patterning, with additional irregularity in the pattern structure. This is analogous to the one-dimensional case where robustness is only attained for certain growth rates \cite{crampin2001reaction, ueda2012mathematical}. For larger growth rates, there is no Turing pattern and $u_2$ grows uniformly while $u_1$ decays to zero everywhere. We note that due to the difference in $\mu(t)$ in the setting of a two-dimensional manifold, the specific value at which this instability in the base state occurs is different from the one-dimensional setting shown in Fig.~\ref{linplots}. Regarding the dispersion plots, we select a suitably large subset of them and then sort the resulting eigenvalues $\tilde{j}_{\ell,k}$ by magnitude to obtain a one-dimensional dispersion set analogous to those shown in the previous section, though omit any Fourier analysis of the numerical simulations or maximal growth rates. As anticipated, we observe broadly similar curves for different growth rates, though the faster rate admits transient oscillations when the domain is small, as seen earlier for one-dimensional cases.

\subsection{Isotropic evolution of an equilateral triangle}
We found no studies of Turing patterns on equilaterial triangles (static or growing), yet this is an example for which our results can be easily applied. Consider an equilateral triangle defined by
$\Omega(t)=\left\lbrace (r(t)x_1,r(t)x_2) ~|~ 0 \leq x_1 \leq 1, ~ 0 \leq x_2 \leq \sqrt{3}x_1, ~  x_2 \leq \sqrt{3}(1-x_1)\right\rbrace$. Due to self-similarity of this domain, growth of a single face is equivalent to growth of all three faces, with the domain remaining an equilateral triangle for all $t\geq 0$. The spectrum for $\Omega(0)$ is given by $\rho_{k_1,k_2}(0) = \frac{16\pi^2}{9}\left( k_1^2 + k_1 k_2 + k_2^2\right)$, for $(k_1,k_2)\in \mathbb{N}^2$, with the spectrum on the evolving domain then given by $\rho_{k_1,k_2}(t) = \frac{16\pi^2}{9r(t)^2}\left( k_1^2 + k_1 k_2 + k_2^2\right)$. As the domain is flat and planar, we have $\mu(t) = r(t)^2$.

Under the criterion given in Theorem \ref{Thm3}, we find that the $(k_1,k_2)$th perturbation of the form \eqref{pert2} corresponding to $\rho_{k_1,k_2}(t)$ is unstable over some interval $t\in \mathcal{I}_{k_1,k_2}$, provided that the inequality
\begin{equation}\begin{aligned}\label{extriangle}
& \det(J) - \left( d_2 J_{11} + d_1 J_{22} \right)\frac{16\pi^2\left( k_1^2 + k_1 k_2 + k_2^2\right)}{9r^2} + \frac{256\pi^4d_1d_2\left( k_1^2 + k_1 k_2 + k_2^2\right)^2}{81r^4} \\
& \qquad
 < - 2\frac{\ddot{r}}{r} - 2\left( \frac{\dot{r}}{r}\right)^2 - 2\frac{\dot{r}}{r}\left( \frac{16\pi^2\left( d_1 + d_2\right)\left( k_1^2 + k_1 k_2 + k_2^2\right)}{9r^2} - \mathrm{tr}(J) \right) \\
& \qquad\qquad + \max\left\lbrace 2\frac{\dot{r}}{r} \frac{\dot{J}_{12}}{J_{12}} - J_{12}\frac{d}{dt}\left( \frac{16\pi^2d_1\left( k_1^2 + k_1 k_2 + k_2^2\right)}{9r^2 J_{12}} -\frac{J_{11}}{J_{12}}\right) ,\right.\\
& \qquad\qquad\qquad\qquad \left. 2\frac{\dot{r}}{r} \frac{\dot{J}_{21}}{J_{21}} - J_{21}\frac{d}{dt}\left( \frac{16\pi^2d_2\left( k_1^2 + k_1 k_2 + k_2^2\right)}{9r^2 J_{21}}  -\frac{J_{22}}{J_{21}}\right)\right\rbrace 
\end{aligned}\end{equation}
holds for all $t\in\mathcal{I}_k$. In the special case of a static equilateral triangle with constant area $A>0$, we take $r(t)=\sqrt{\frac{4\sqrt{3}}{3}A}$, and the condition \eqref{extriangle} reduces to 
\begin{equation}
\det(J) - \left( d_2 J_{11} + d_1 J_{22} \right)\frac{4\sqrt{3}\pi^2\left( k_1^2 + k_1 k_2 + k_2^2\right)}{9A} + \frac{16\pi^4d_1d_2\left( k_1^2 + k_1 k_2 + k_2^2\right)^2}{27A^2} < 0\,,
\end{equation}
which is the classical Turing condition for a static equilateral triangle of area $A$.

We consider linear isotropic growth in Fig.~\ref{triangplots}. We observe a more ordered formation of stripes in the case of slower growth, whereas spots and disordered connections appear for faster growth. We note that the quasi-static dispersion sets are identical to Fig.~\ref{triangplots}(a)iii, but the inset in Fig.~\ref{triangplots}(b)iii demonstrates the impact of a faster growth rate.

\begin{figure}
\centering
\includegraphics[width=0.32\textwidth]{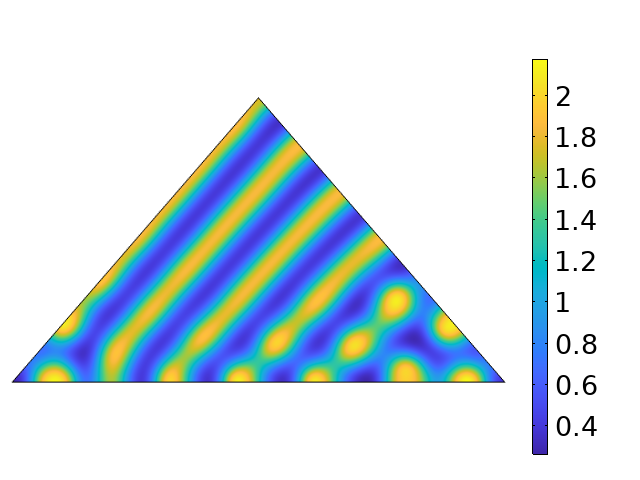}
\includegraphics[width=0.32\textwidth]{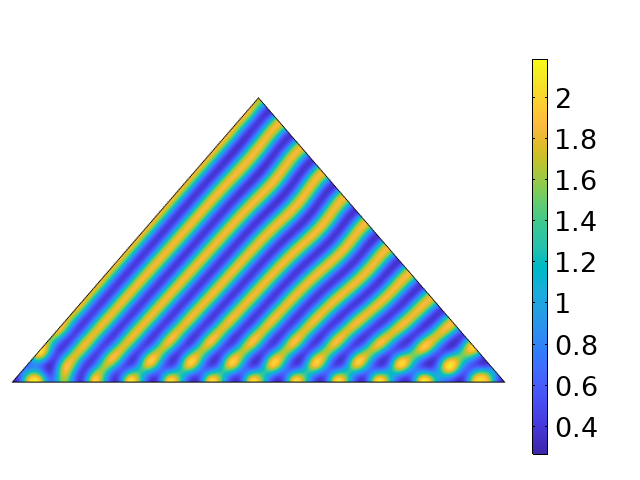}
\includegraphics[width=0.32\textwidth]{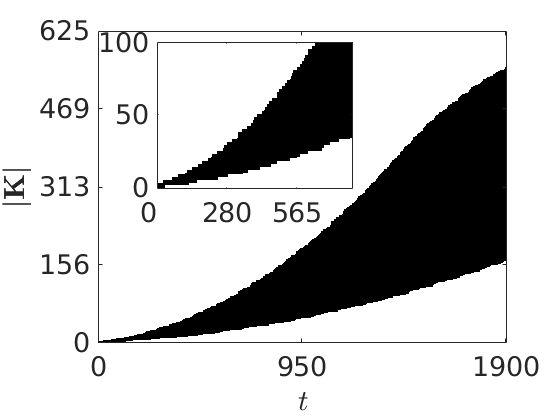}

(a)i \hspace{5cm} (a)ii \hspace{5cm} (a)iii 

\includegraphics[width=0.32\textwidth]{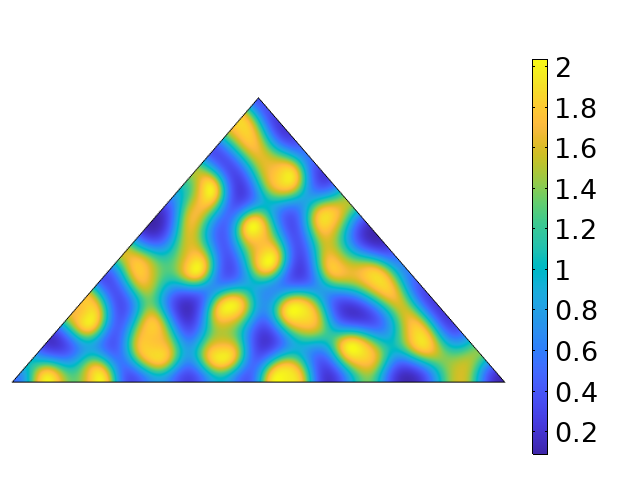}
\includegraphics[width=0.32\textwidth]{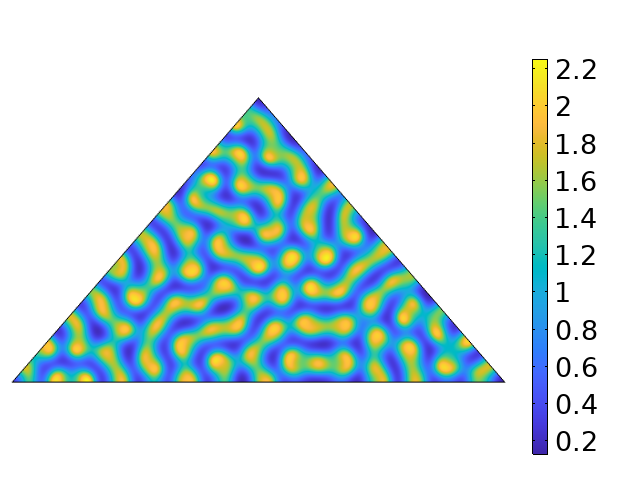}
\includegraphics[width=0.32\textwidth]{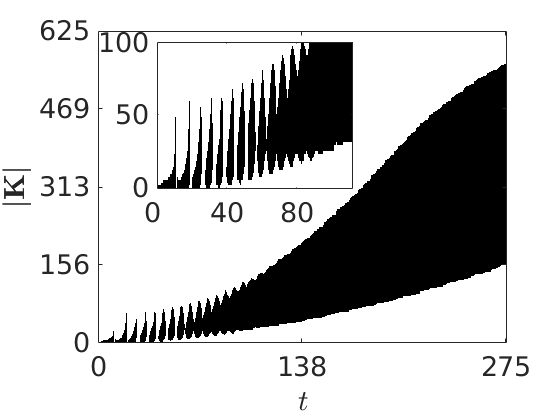}

(b)i \hspace{5cm} (b)ii \hspace{5cm} (b)iii

\vspace{-0.1in}
\caption{Plots of $u_1$ corresponding to the kinetics \eqref{Schnack} with parameters $a=0$, $b=1.1$, $d_1=1$, and $d_2=10$ on an isotropically growing equilateral triangle with $r(t) = 8(1+st)$ for (a) $s=0.01$ (b) $0.069$. In all simulations we take the final time such that the domain has grown to $20$ times its initial size. Figures are shown at times (i) $t=0.5t_f$ and (ii) $t=t_f$.  We use the notation $|\mathbf{K}|$ to denote a sequential numbering of these ordered states by magnitude.\label{triangplots} }
\end{figure}

\begin{figure}
\centering

\includegraphics[width=0.4\textwidth]{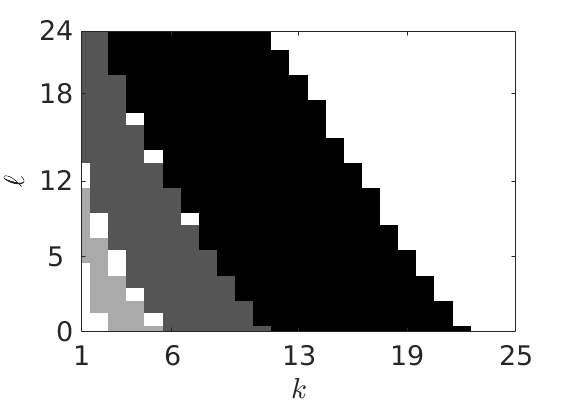}
\includegraphics[width=0.36\textwidth]{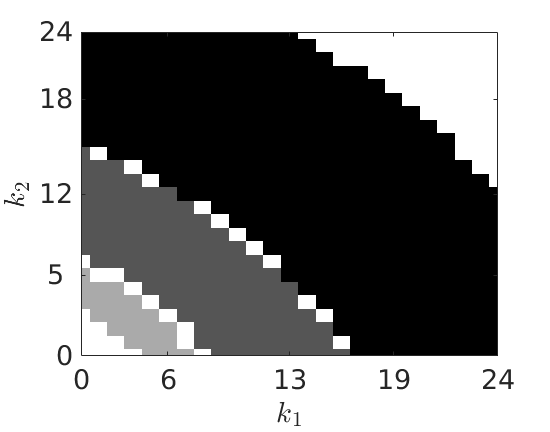}

\hspace{0.5cm} (a) \hspace{5.5cm} (b) 

\vspace{-0.1in}
\caption{Plots of unstable modes over both indices, corresponding to dispersion curves for (a) a disk (similar to the dynamics of Fig.~\ref{circplots}) and (b) an equilateral triangle (similar to the dynamics of Fig.~\ref{triangplots}), both with linear growth rate $s=0.01$. The different shaded regions are unstable modes for times $t=300, 850, 1900$, with lighter grays corresponding to earlier times. These reslts correspond to the dispersion sets $\mathcal{K}_{300}$, $\mathcal{K}_{850}$, and $\mathcal{K}_{1900}$, respectively.\label{Kscircplots}}
\end{figure}

We also demonstrate the time dependence of sets $\mathcal{K}_t$ for more than one index in Fig.~\ref{Kscircplots}, where we compare results for the disk and triangle. We observe an increasing band of unstable wavenumbers emanating from the origin. Although dispersion sets $\mathcal{K}_t$ agree between both domains in a qualitative sense, we see that the unstable modes in the triangular case are not bounded by lines as in the case of the disk, but instead by circular arcs, due to the form of the Laplace-Beltrami eigenvalues for each respective domain.

\subsection{Isotropic evolution of a sphere $S^N \subset \mathbb{R}^{N+1}$}
There have been various pattern formation studies on static 2-spheres \cite{chaplain2001spatio, krause2018emergent, rozada2014stability, varea1999turing}. An exponentially growing 2-sphere was considered by \cite{gjorgjieva2007turing}, although their analysis was quasi-static, thereby ignoring the role of transients in the dynamics of \eqref{uniform}. Similar assumptions were made in \cite{toole2013turing}. Numerical simulations of pattern formation on growing 2-spheres, as well as anisotropic growth of 2-spheres into ellipsoids, were obtained in \cite{krause2018influence}.

\begin{figure}
\centering
\includegraphics[width=0.32\textwidth]{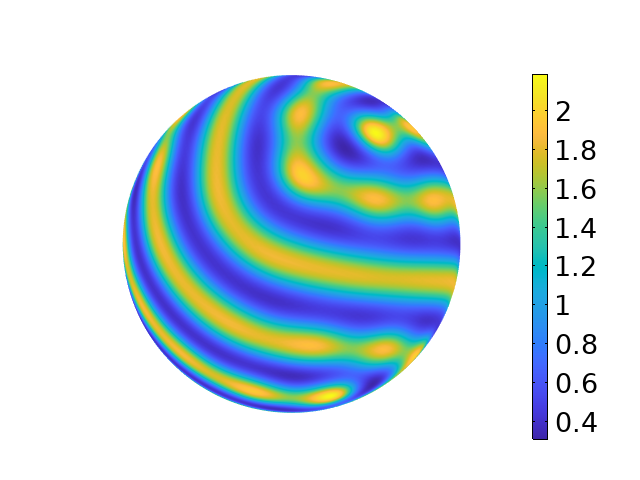}
\includegraphics[width=0.32\textwidth]{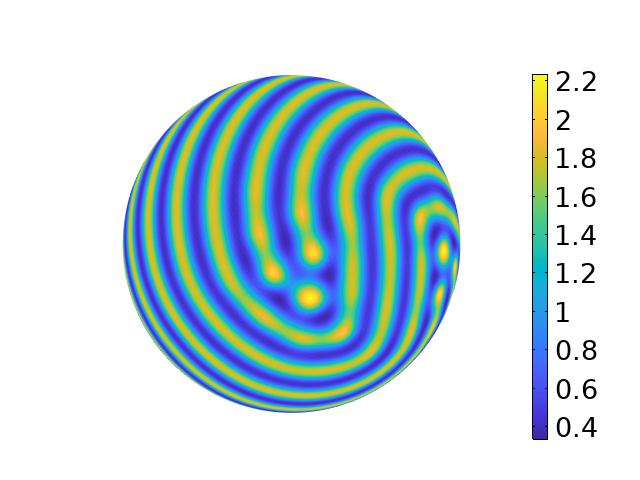}
\includegraphics[width=0.32\textwidth]{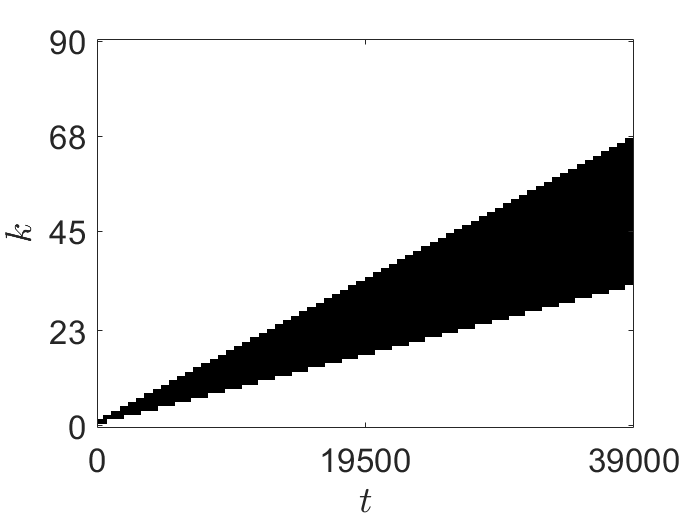}

(a)i \hspace{5cm} (a)ii \hspace{5cm} (a)iii 

\includegraphics[width=0.32\textwidth]{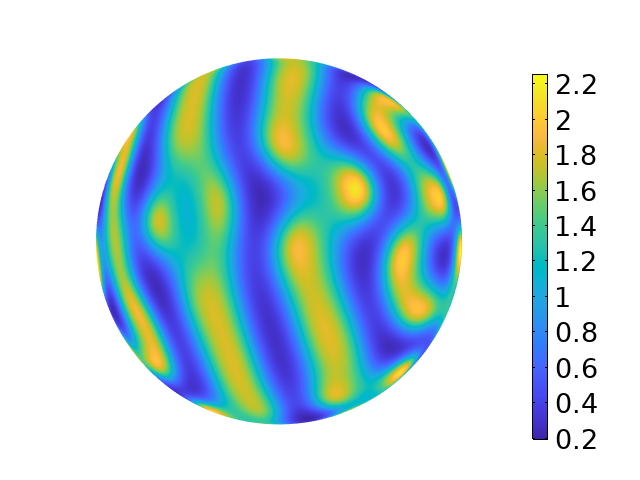}
\includegraphics[width=0.32\textwidth]{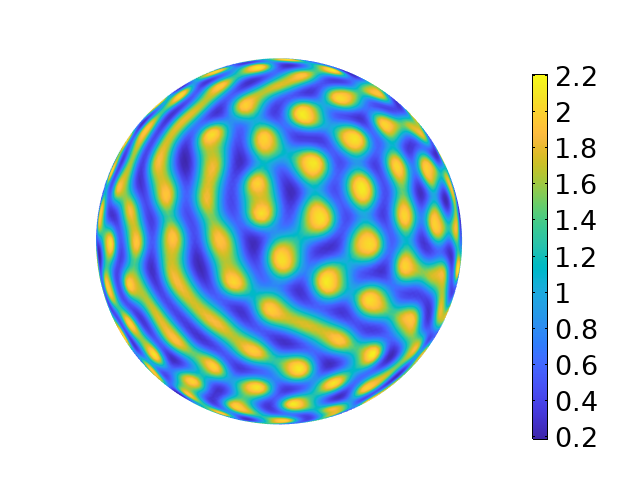}
\includegraphics[width=0.32\textwidth]{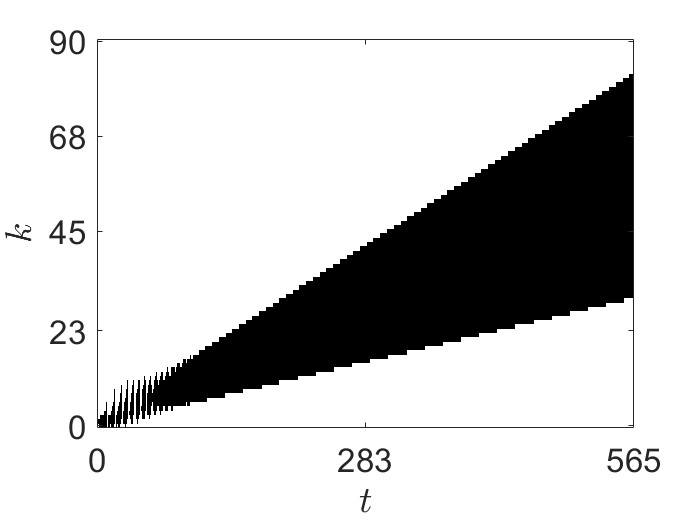}

(b)i \hspace{5cm} (b)ii \hspace{5cm} (b)iii 

\vspace{-0.1in}
\caption{Plots of $u_1$ corresponding to the kinetics \eqref{Schnack} with parameters $a=0$, $b=1.1$, $d_1=1$, and $d_2=10$ on an isotropically growing 2-sphere at different points in time. The domain is taken to grow with $r(t) = 2(1+st)$ for (a) $s=0.001$, (b) $0.069$. In all simulations we take the final time such that the domain has grown to $r(t_f)=80$; figures are shown at (i) $t = 0.5t_f$, $t=t_f$. Respective dispersion sets $\mathcal{K}_t$ are shown in (iii).\label{sphereplots}}
\end{figure}

The unit $N$-sphere ${S}^{N}\subset \mathbb{R}^{N+1}$ has spectrum $\rho_k(0)= k(k+N-1)$, $k=0,1,2,\dots$, hence $\rho_k(t) = \frac{k(k+N-1)}{r(t)^2}$ and $\mu(t) = r(t)^N$. These eigenvalues will have increasingly large multiplicity corresponding to different eigenfunctions on ${S}^{N}$. That is to say, for a fixed $k$, there can exist multiple distinct $\psi_k(\mathbf{x})$ in the general perturbation \eqref{pert2}. As such, each of these distinct spatial eigenfunctions can yield patterning when the perturbation corresponding to $\rho_k(t)$ is unstable.

Under the criterion given in Theorem \ref{Thm3}, we find that the $k$th perturbation of the form \eqref{pert2} corresponding to $\rho_k(t) = \frac{k(k+N-1)}{r(t)^2}$ is unstable over some interval $t\in \mathcal{I}_k$, provided that the inequality
\begin{equation}\begin{aligned}\label{exsphere}
& \det(J) - \left( d_2 J_{11} + d_1 J_{22} \right)\frac{k(k+N-1)}{r^2} + \frac{d_1d_2k^2(k+N-1)^2}{r^4} \\
& \qquad
 <  -N\frac{\ddot{r}}{r} - N(N-1)\left( \frac{\dot{r}}{r}\right)^2 - N \frac{\dot{r}}{r}\left( \frac{\left( d_1 + d_2\right)k(k+N-1)}{r^2} - \mathrm{tr}(J) \right) \\
& \qquad + \max\left\lbrace N \frac{\dot{r}}{r} \frac{\dot{J}_{12}}{J_{12}} - J_{12}\frac{d}{dt}\left( \frac{d_1k(k+N-1)}{r^2 J_{12}}  -\frac{J_{11}}{J_{12}}\right) , N \frac{\dot{r}}{r} \frac{\dot{J}_{21}}{J_{21}} - J_{21}\frac{d}{dt}\left( \frac{d_2k(k+N-1)}{r^2 J_{21}}  -\frac{J_{22}}{J_{21}}\right)\right\rbrace 
\end{aligned}\end{equation}
holds for all $t\in\mathcal{I}_k$. In the special case where $r(t)=R$ for constant $R>0$, hence the domain is a static $N$-sphere of radius $R$, the condition \eqref{exsphere} reduces to 
\begin{equation}
\det(J) - \left( d_2 J_{11} + d_1 J_{22} \right)\frac{k(k+N-1)}{R^2} + \frac{d_1d_2k^2(k+N-1)^2}{R^4} < 0\,,
\end{equation}
which is exactly the classical Turing condition for a static $N$-sphere.

We first obtain solutions on $S^2$ in Fig.~\ref{sphereplots} for different rates of linear growth. The unstable modes are qualitatively the same as in Fig.~\ref{linplots}, up to a rescaling due to domain size. This suggests that the difference in volume expansion in this case does not have a substantial effect on these dispersion sets. In order to better understand the role of volume expansion, in Fig.~\ref{KplotsNspherelinear} we compare dispersion sets for spheres of different dimension undergoing linear growth. The spectrum for each is similar, yet due to differences in the volume expansion term, we find that the dispersion sets collapse for large enough $N$, since for such cases volume expansion is far more rapid, resulting in more rapid dilution of the spatially homogeneous state which prevents spatial patterning. Hence, there are indeed differences in patterning due to volume expansion, yet depending on the growth function selected, these differences may manifest only for large $N$.

\begin{figure}
\centering
\includegraphics[width=0.32\textwidth]{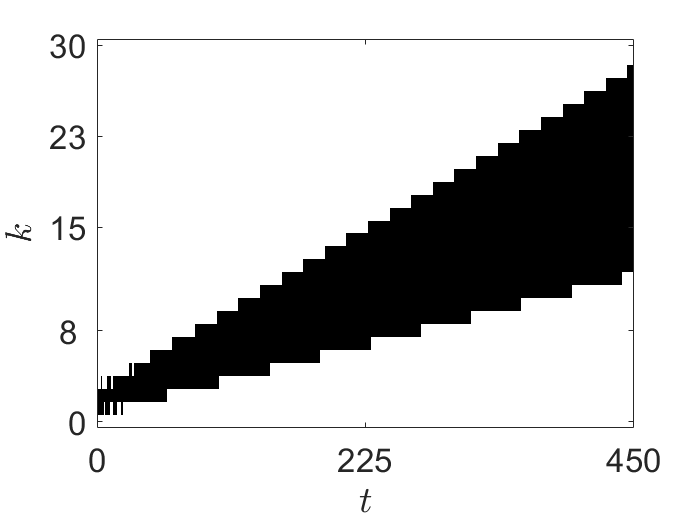}
\includegraphics[width=0.32\textwidth]{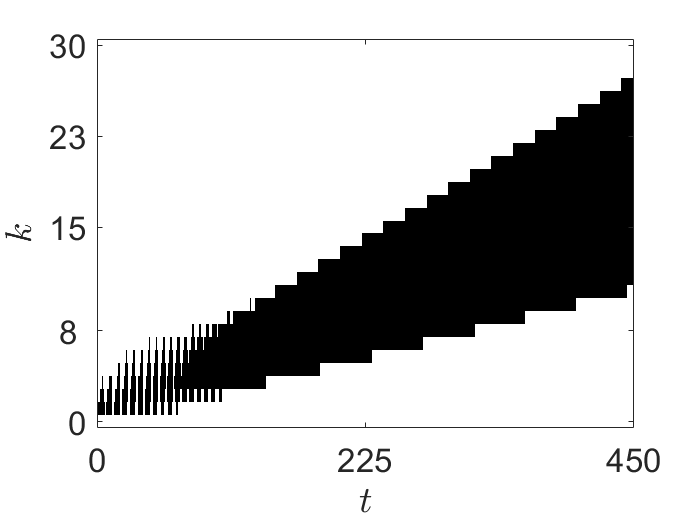}
\includegraphics[width=0.32\textwidth]{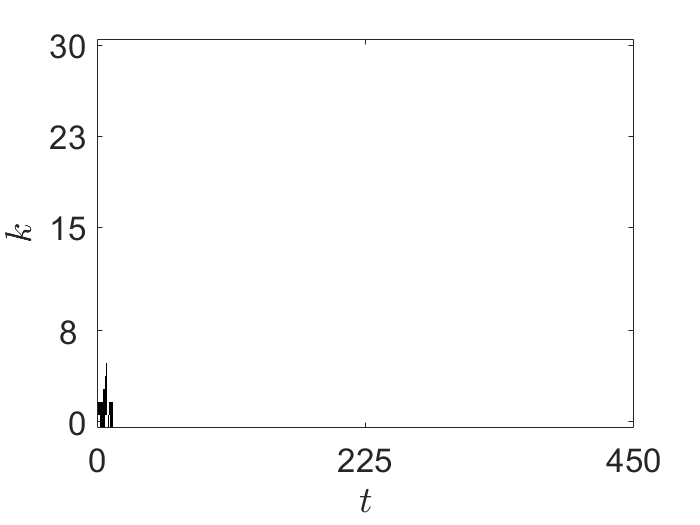}

\hspace{0.1cm} (a) \hspace{4.6cm} (b) \hspace{4.6cm} (c) 

\vspace{-0.1in}
\caption{Dispersion sets $\mathcal{K}_t$ for dynamics corresponding to the kinetics in Fig.~\ref{sphereplots} on isotropically growing $N$-spheres $S^N$ with (a) $N=2$, (b) $N=4$, (c) $N=6$. We consider an isotropic linear growth $r(t) = 3(1+st)$ with rate parameter $s=0.02$, and allow growth until $t_f = 450$, as which point the final radius is $r(t_f)=30$. Although the radius change is the same, the rate of volume expansion is considerably faster as the dimension increases.\label{KplotsNspherelinear}}
\end{figure}

\section{Domain evolution with area or volume conservation}\label{sec44}
Much literature on evolving domains considers strict growth or expansion of the domain ($\dot{\mu}(t)>0$). However, there are a variety of situations for which area or volume should be preserved while the underlying domain evolves such as in the buckling of intestinal crypts during development \cite{drasdo2001individual}, and cell shape changes \cite{hunding1985morphogen, thery2006cell} or stationary shapes of vesicles \cite{seifert1991shape}. For such a case, $\dot{\mu}(t) \equiv 0$, and \eqref{uniform} admits a constant exact solution $\mathbf{U}(t) \equiv \mathbf{U}^*$, akin to what is considered in the Turing conditions for static domains. For this case, $J$ is a constant matrix, and the condition from Theorem \ref{Thm3} reduces to 
\begin{equation}\label{reduced}
\det(J) - \left( d_1 J_{22} + d_2 J_{11} \right) \rho_k  + d_1 d_2 \rho_k^2 < \max\left\lbrace -d_1 \dot{\rho}_k , -d_2 \dot{\rho}_k\right\rbrace.
\end{equation}
This inequality is close to the static domain Turing condition, modified to account for the time dependence of the spectral parameter $\rho_k(t)$. We are unaware of any studies on volume-conserving domain evolution presently considered in the literature, and so give two examples.

\subsection{Area conserving evolution of a rectangular domain}\label{sec_rect}
Regarding asymmetric growth of a rectangular domain, Turing patterning when growth in only one direction with the other direction held fixed was considered in \cite{miguez2006effect}. Consider the evolution of a domain according to the coordinates $\mathbf{X} = (r_1(t)x_1,r_2(t)x_2)$, $x_1,x_2\in [0,1]$, where $r_1(t)r_2(t) =A$, and consider $r_1(0) = A_1$ and $r_2(0) = A_2$ (with $A_1A_2 =A$), along with $(r_1(t_f),r_2(t_f)) = (A_1R^{-1},A_2 R)$ for some constant $R>0$. The domain is then described by $\Omega(t) = [0,r_1(t)]\times [0,r_2(t)]$ with $\Omega(0) = [0,A_1]\times[0,A_2]$, $\Omega(t_f) = [0,A_1 R^{-1}]\times [0,A_2 R]$, and $|\Omega(t)|=A$ for all $t\geq 0$, so this manner of growth does indeed preserve area. The spectrum is $\rho_{k_1,k_2}(t) = \frac{\pi^2k_1^2}{r_1(t)^2}+\frac{\pi^2 k_2^2}{r_2(t)^2}$, $k_1,k_2\in \mathbb{N}$, which is an example were the spectrum is not necessarily monotone in time. Without loss of generality, we choose $r_1(t)=r(t)$ and $r_2(t)=\frac{A}{r(t)}$, where $0<r(t)<\infty$, and we then write $\rho_{k_1,k_2}(t) = \frac{\pi^2k_1^2}{r(t)^2}+\frac{\pi^2 k_2^2}{A^2}r(t)^2$ for $k_1,k_2\in \mathbb{N}$.

Under the criterion given by \eqref{reduced}, we find that the $(k_1,k_2)$th perturbation of the form \eqref{pert2} corresponding to $\rho_{k_1,k_2}(t) = \frac{\pi^2k_1^2}{r(t)^2}+\frac{\pi^2 k_2^2}{A^2}r(t)^2$ is unstable over some interval $t\in \mathcal{I}_{k_1,k_2}$, provided that the inequality
\begin{equation}\begin{aligned}\label{exarea}
& \det(J) - \pi^2\left( d_1 J_{22} + d_2 J_{11} \right) \left(\frac{k_1^2}{r^2}+\frac{k_2^2}{A^2}r^2\right) + \pi^4 d_1 d_2\left(\frac{k_1^2}{r^2}+\frac{k_2^2}{A^2}r^2\right)^2 \\
& \qquad < 2\pi^2\max\left\lbrace -d_1 \left( \frac{k_2^2}{A^2}r - \frac{k_1^2}{r^3}\right)\dot{r} , -d_2 \left( \frac{k_2^2}{A^2}r - \frac{k_1^2}{r^3}\right)\dot{r}\right\rbrace
\end{aligned}\end{equation}
holds for all $t\in \mathcal{I}_{k_1,k_2}$. In the special case of a static rectangle $[0,L_1]\times [0,L_2]$, the condition \eqref{exarea} reduces to 
\begin{equation}
\det(J) - \pi^2\left( d_1 J_{22} + d_2 J_{11} \right) \left(\frac{k_1^2}{L_1^2}+\frac{k_2^2}{L_2^2}\right) + \pi^4 d_1 d_2\left(\frac{k_1^2}{L_1^2}+\frac{k_2^2}{L_2^2}\right)^2 < 0\,,
\end{equation}
which is the classical Turing condition for a static rectangular domain.

\begin{figure}
\centering
\includegraphics[width=0.4\textwidth]{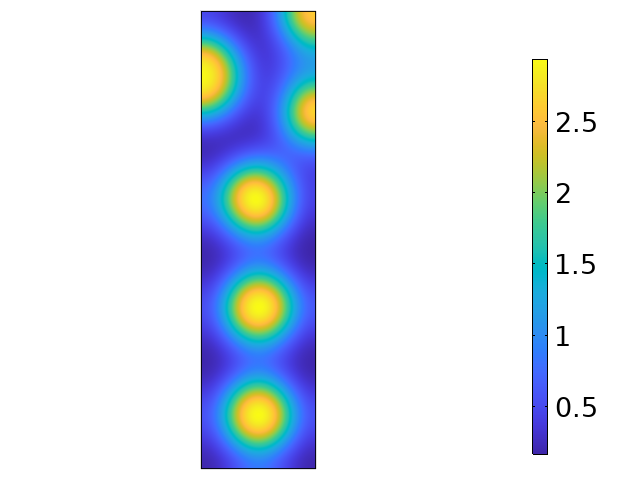}
\includegraphics[width=0.4\textwidth]{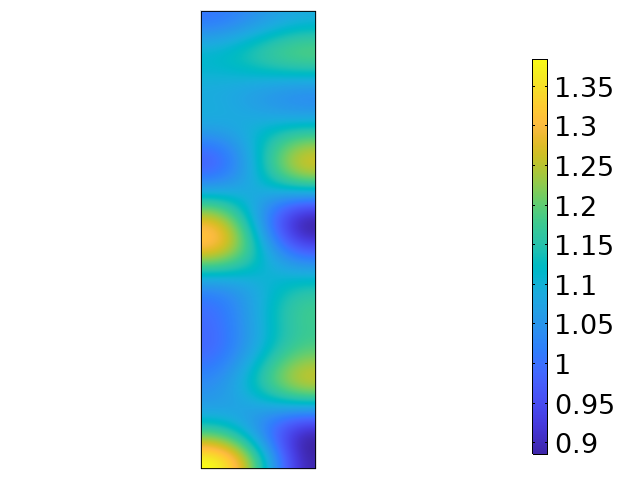}

(a)i \hspace{6cm} (b)i

\includegraphics[width=0.4\textwidth]{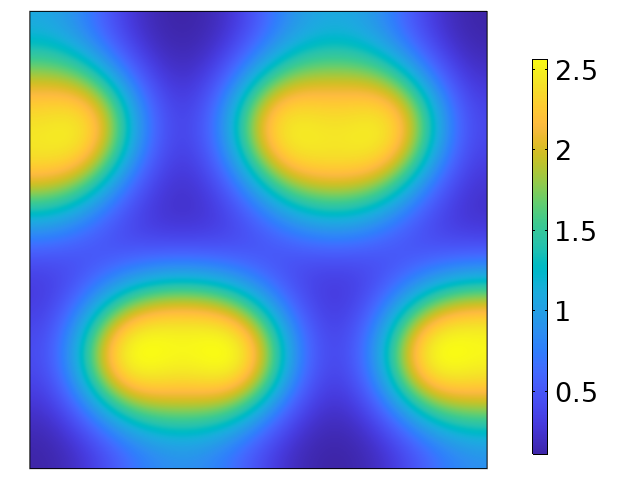}
\includegraphics[width=0.4\textwidth]{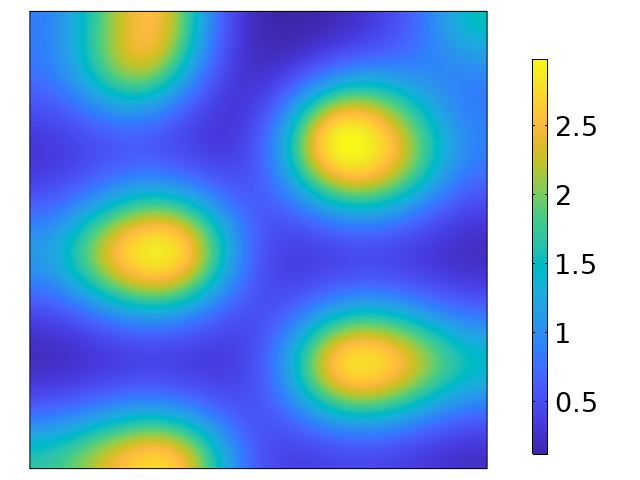}

(a)ii \hspace{6cm} (b)ii

\includegraphics[width=0.4\textwidth]{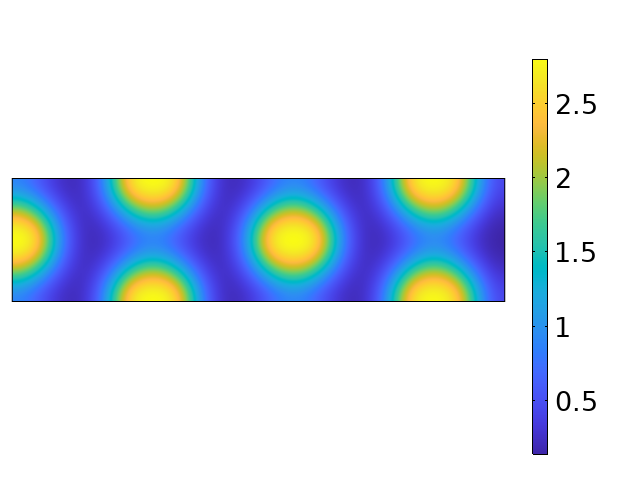}
\includegraphics[width=0.4\textwidth]{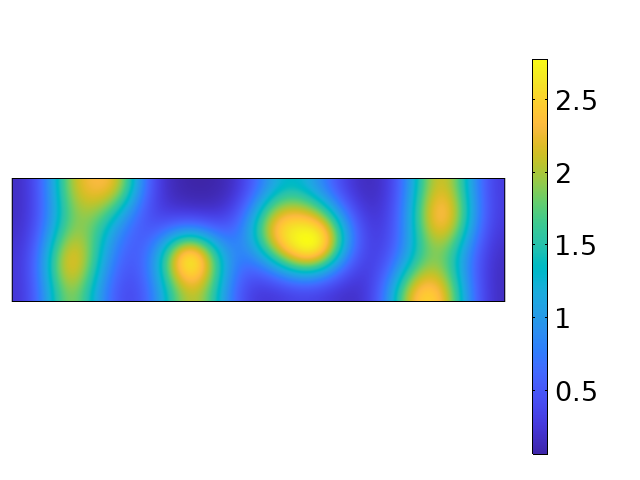}

(a)iii \hspace{6cm} (b)iii

\includegraphics[width=0.4\textwidth]{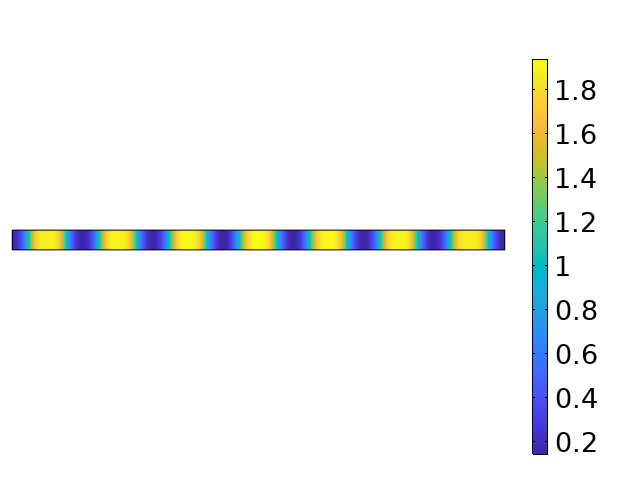}
\includegraphics[width=0.4\textwidth]{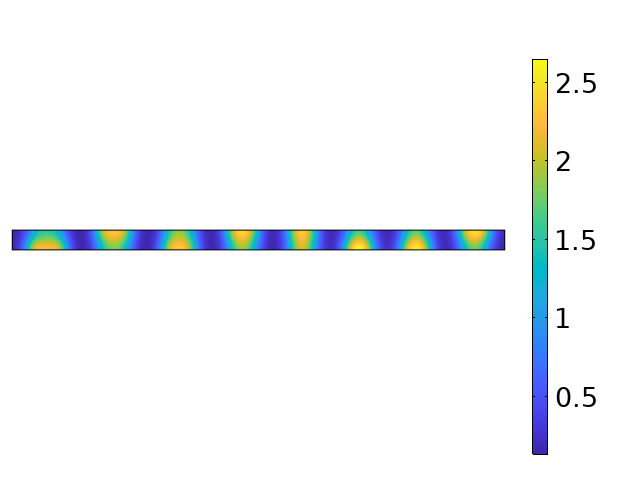}

(a)iv \hspace{6cm} (b)iv

\vspace{-0.1in}
\caption{Plots of $u_1$ corresponding to the kinetics \eqref{Schnack} with parameters $a=0$, $b=1.1$, $d_1=1$, and $d_2=15$ on a rectangular domain with $r_1(t) = 4(1+st)$ and $r_2(t) = 100/(1+st)$ for (a) $s=0.001$, (b) $s=0.1$. The final time $t_f$ is selected so that $r_1(t_f)=100$, $r_2(t_f)=4$. We give plots at $t=0.0625t_f, 0.1667t_f, 0.375t_f, t_f$ in (i)-(iv), respectively.\label{rectangplots} }
\end{figure}

\begin{figure}
\centering
\includegraphics[width=0.45\textwidth]{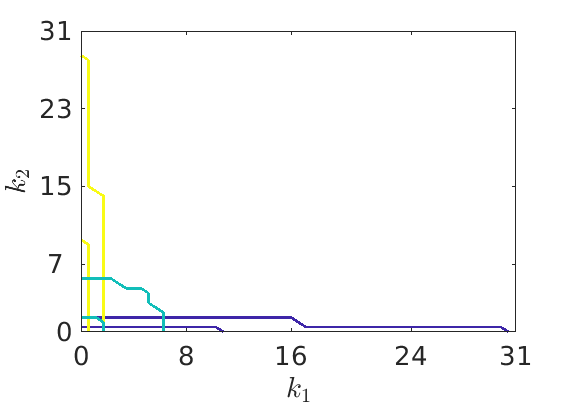}
\includegraphics[width=0.45\textwidth]{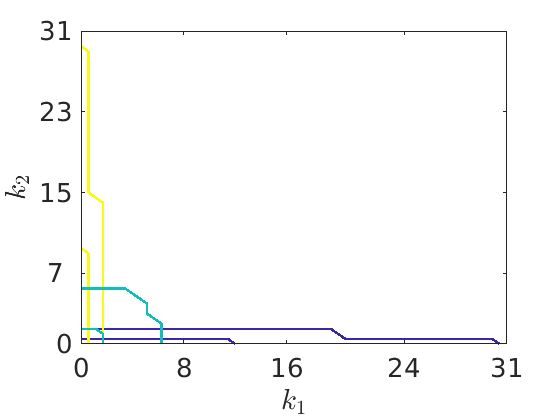}

\hspace{0.2cm} (c)i \hspace{6.6cm} (c)ii

\vspace{-0.1in}
\caption{Plots of dispersion sets $\mathcal{K}_t$ at fixed times, with (a) corresponding to the dynamics in Fig. \ref{rectangplots}(a), and (b) corresponding to the dynamics in Fig. \ref{rectangplots} (b). In particular, we plot $\mathcal{K}_0$ (corresponding to integer pairs $(k_1,k_2)$ bounded by yellow curves), $\mathcal{K}_{0.1667t_{f}}$ (bounded by teal curves), and $\mathcal{K}_{t_f}$ (bounded by purple curves).\label{rectangplotsStab}}
\end{figure}

We consider the evolution of such a domain in Fig. \ref{rectangplots}, showing both slow and fast linear evolution of the domain. For both cases, we find that the thinner rectangular domains admit solutions with more spots than when the rectangle passes through a transient square configuration. The final configuration in both cases similarly admits modes only in the $x$ direction, as one might expect from a quasi-static analysis of thin domains. The faster evolution leads to more disordered structures, as it is further from a true quasi-static picture, and the nonlinear reaction kinetics are unable to stabilize ordered spatial patterns in these transient time periods. We also plot dispersion sets $\mathcal{K}_t$ which characterize the unstable modes. The collection of unstable modes seen in Fig. \ref{rectangplotsStab} corresponds well with the patterns present at each respective time in Fig.\ref{rectangplots}. While the unstable modes are quite similar in each case (with only modest differences present on transient timescales - see $\mathcal{K}_{0.1667t_f}$), the combination of growth with nonlinear selection of patterns does play a role in which unstable modes are selected in the final pattern.

\subsection{Volume conserving evolution of a solid cylinder}\label{sec_cyl}
Three-dimensional Turing patterns have been explored in many systems and geometries, though a complete categorization of such structures and criteria for when they emerge does not yet exist as far as we are aware \cite{callahan1999pattern, de1997twist, leppanen2002new, shoji2007turing}. This is in contrast to the theory in two-dimensions, for which a reasonable classification of patterns exists, at least on rectangular domains \cite{ermentrout1991stripes, shoji2003stripes}. Such emergent structures have even been observed to be suitable for a variety of applications, such as the design of water filters \cite{tan2018polyamide}. Pattern formation on growing cylindrical domains is of strong relevance to plant growth \cite{meinhardt1998models}. Simulations and experimental observations of various Turing patterns in static cylindrical domains were shown in \cite{bansagi2011tomography}, and it was shown that three-dimensional Turing patterns exhibit an extraordinarily richer set of patterns than in one or two dimensions. Finally we remark that \cite{hunding1985morphogen} compared linear analysis and simulations on quasi-static cylinders and spheres to argue that the flattening of cells can have an impact on mitosis, and specifically during cytokinesis when cell shape changes regularly occur. 
 
Consider coordinates $\mathbf{X} = (r_1(t) x_1 \cos (2\pi x_2),r_1(t) x_1 \sin (2\pi x_2),r_2(t) x_3)$, where $x_1,x_2,x_3 \in [0,1]$, which defines a cylindrical domain $\Omega(t)$. We consider $r_1(t)^2r_2(t)=V$ so that the volume of the cylinder is conserved. Choose $r_1(0) = V_1$, $r_2(0) = V_2$ such that $V_1^2 V_2 = V$, and $(r_1(t_f),r_2(t_f)) = \left(V_1 R^{-1}, V_2 R^2\right)$ for some constant $R>0$. Then $\Omega(t) = D(r_1(t))\times [0,r_2(t)]$, where $D(r_1(t))$ denotes a disk of radius $r_1(t)$ centered at the origin, with $\Omega(0) = D(V_1)\times [0,V_2]$ and $\Omega(t_f) = D(V_1 R^{-1})\times \left[0,V_2 R^2\right]$. We have that $|\Omega(t)|=\pi r_1(t)^2r_2(t) = \pi V_1^2 V_2 = \pi V$ for all $t\geq 0$, hence volume is conserved. The spectrum of the Laplace-Beltrami operator over this domain will take the form $\rho_{\ell, k, m}(t) = \frac{\tilde{j}_{\ell ,k}^2}{r_1(t)^2}+ \frac{\pi^2 m^2}{r_2(t)^2}$, for $\ell, k,m\in \mathbb{N}$. Recalling $r_1(t)^2r_2(t)=V$, we write $r_1(t)=r(t)$ and $r_2(t)=\frac{V}{r(t)^2}$ for some function $r(t)$ satisfying $0<r(t)<\infty$. The time-dependent spectrum then becomes $\rho_{\ell, k, m}(t) = \frac{\tilde{j}_{\ell ,k}^2}{r(t)^2}+ \frac{\pi^2 m^2 r(t)^4}{V^2}$.

Under the criterion given in \eqref{reduced}, we find that the $(\ell,k,m)$th perturbation of the form \eqref{pert2} corresponding to $\rho_{\ell, k, m}(t)$ is unstable over some interval $t\in \mathcal{I}_{\ell,k,m}$, provided that the inequality
\begin{equation}\begin{aligned}\label{exvolume}
& \det(J) - \left( d_1 J_{22} + d_2 J_{11} \right) \left(\frac{\tilde{j}_{\ell ,k}^2}{r(t)^2}+ \frac{\pi^2 m^2 r(t)^4}{V^2}\right)  + d_1 d_2 \left(\frac{\tilde{j}_{\ell ,k}^2}{r(t)^2}+ \frac{\pi^2 m^2 r(t)^4}{V^2}\right)^2\\
& \qquad < 2\max\left\lbrace d_1\left(\frac{\tilde{j}_{\ell ,k}^2}{r(t)^3} - \frac{2\pi^2 m^2 r(t)^3}{V^2}\right) , d_2 \left(\frac{\tilde{j}_{\ell ,k}^2}{r(t)^3} - \frac{2\pi^2 m^2 r(t)^3}{V^2}\right)\right\rbrace
\end{aligned}\end{equation}
holds for all $t\in \mathcal{I}_{\ell,k,m}$. In the special case of a static cylinder of radius $R$ and height $H$, the Turing condition \eqref{exvolume} reduces to 
\begin{equation}
\det(J) - \left( d_2 J_{11} + d_1 J_{22} \right)\left(\frac{\tilde{j}_{\ell ,k}^2}{R^2}+ \frac{\pi^2 m^2}{H^2}\right)+ d_1d_2\left(\frac{\tilde{j}_{\ell ,k}^2}{R^2}+ \frac{\pi^2 m^2}{H^2}\right)^2 < 0\,.
\end{equation}

\begin{figure}
\centering
\includegraphics[width=0.4\textwidth]{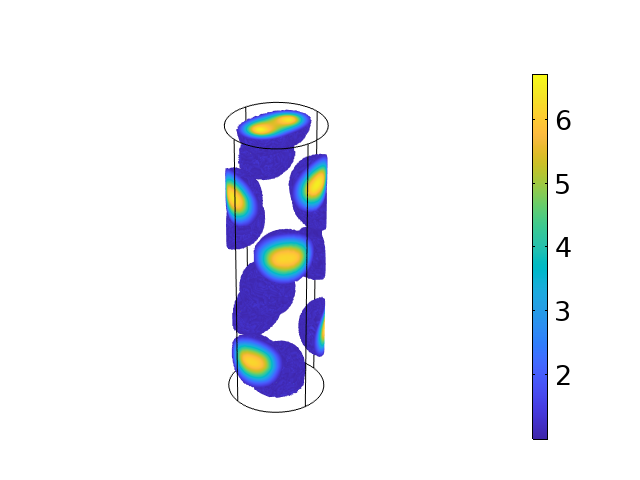}
\includegraphics[width=0.4\textwidth]{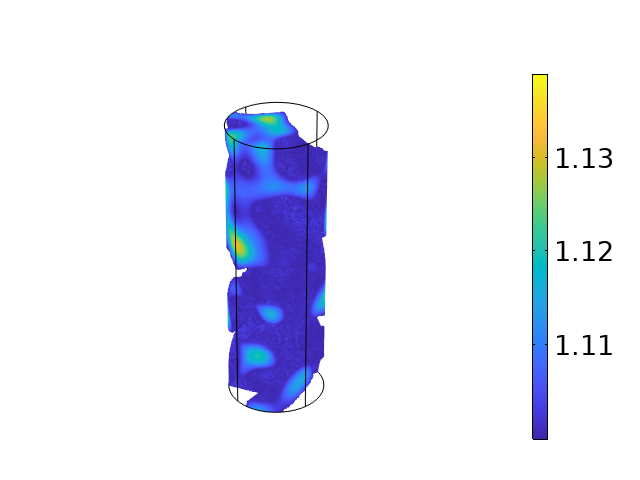}

(a)i \hspace{6cm} (b)i

\includegraphics[width=0.4\textwidth]{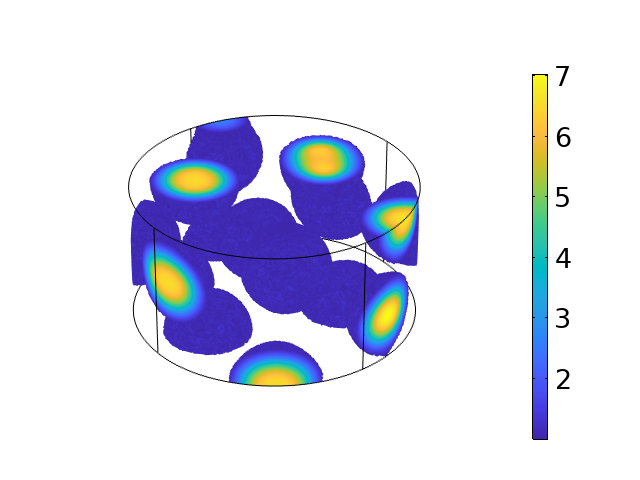}
\includegraphics[width=0.4\textwidth]{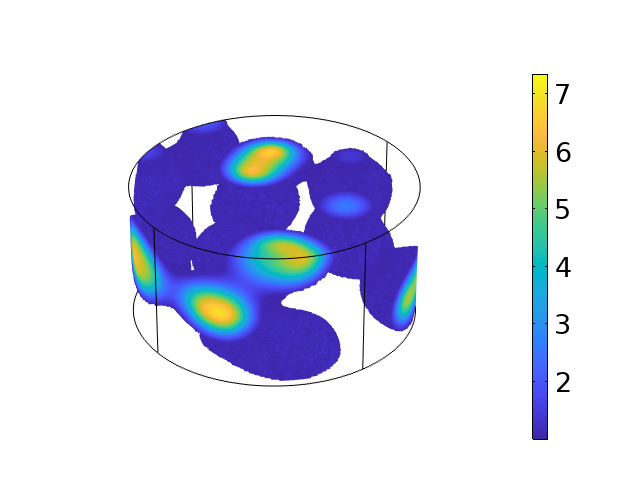}

(a)ii \hspace{6cm} (b)ii

\includegraphics[width=0.4\textwidth]{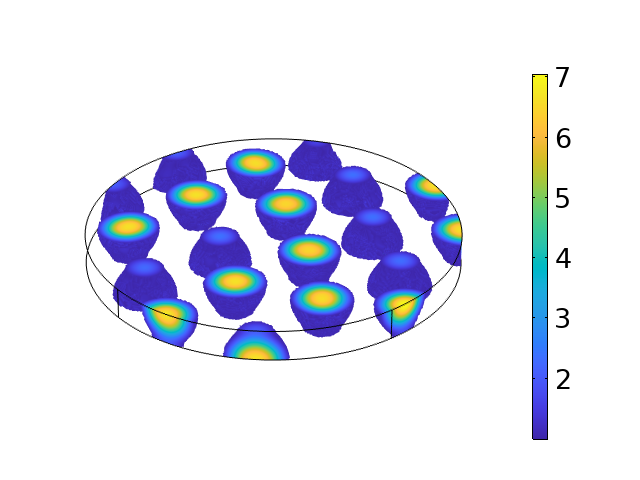}
\includegraphics[width=0.4\textwidth]{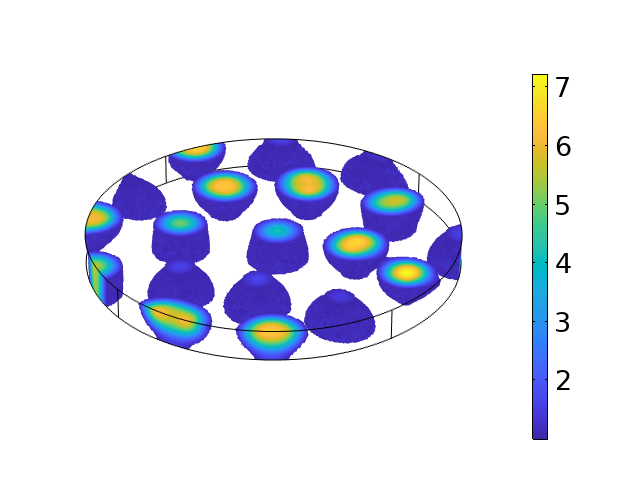}

(a)iii \hspace{6cm} (b)iii

\includegraphics[width=0.4\textwidth]{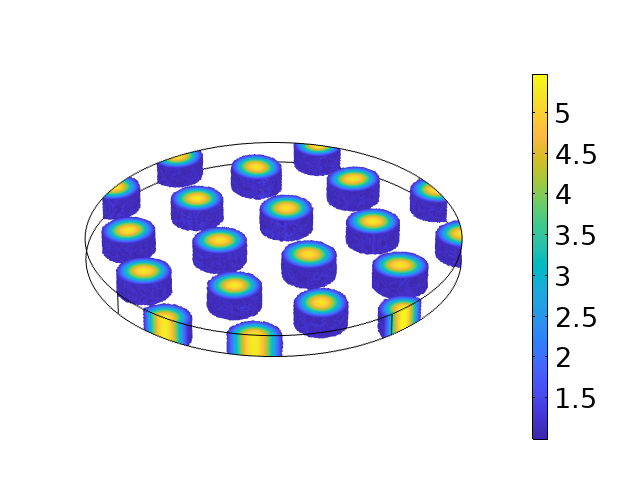}
\includegraphics[width=0.4\textwidth]{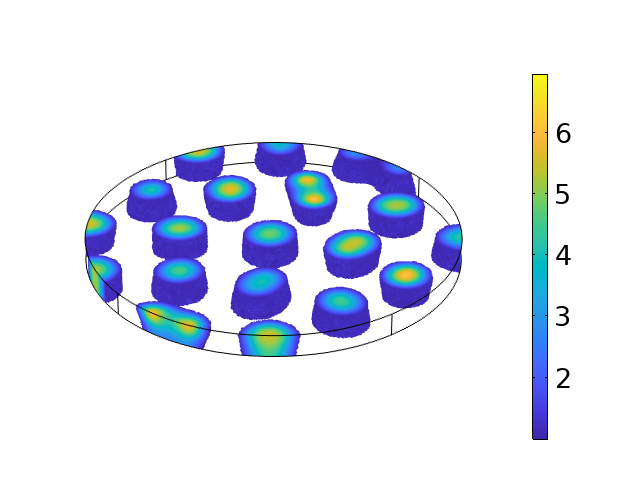}

(a)iv \hspace{6cm} (b)iv

\vspace{-0.1in}
\caption{Plots of $u_1$ corresponding to the kinetics \eqref{Schnack} with parameters $a=0$, $b=1.1$, $d_1=1$, and $d_2=50$ on a domain which is taken to evolve with radius as $r_1(t) = 7.5\sqrt{1+st}$ and height $r_2(t) = 60/(1+st)$ with (a) $s=0.001$, (b) $s=0.1$. The final time $t_f$ is selected so that $r_1(t_f)=30$, $r_2(t_f)=3.75$. Panels in (a,b) are shown at times $t=0.013335t_f, 0.2t_f, 0.8t_f, t_f$ in (i)-(iv), respectively. \label{cylinplots}}
\end{figure}

\begin{figure}
\centering

\includegraphics[width=0.45\textwidth]{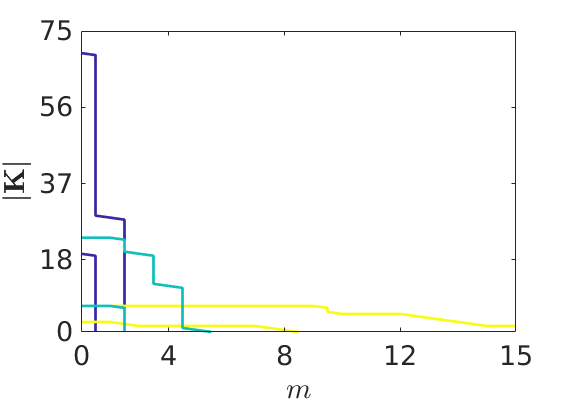}
\includegraphics[width=0.45\textwidth]{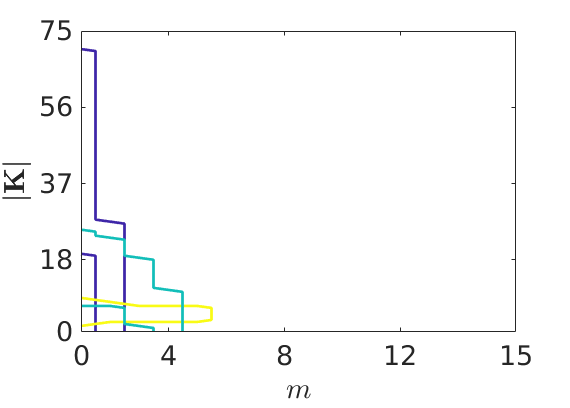}

\hspace{0.2cm} (a) \hspace{6.6cm} (b)

\vspace{-0.1in}
\caption{Plots of dispersion sets $\mathcal{K}_t$ at fixed times, with (c)i corresponding to the dynamics in (a), and (c)ii corresponding to the dynamics in (b). In particular, we plot $\mathcal{K}_0$ (corresponding to integer pairs $(|\mathbf{K}|,m)$ bounded by yellow curves), $\mathcal{K}_{0.2t_{f}}$ (bounded by teal curves), and $\mathcal{K}_{0.8t_f}$ (bounded by purple curves). We use the notation $|\mathbf{K}|$ to denote a sequential numbering of the eigenvalues $j_{\ell,k}$ ordered by magnitude. \label{cylinplotsStab}}
\end{figure}

We show simulations of such evolving cylinders in Fig.~\ref{cylinplots}, where we threshold the solutions to focus on the regions of high $u_1$ concentration. For this choice of parameters, the typical three--dimensional Turing structures are arrangements of spheres of high activator concentration, with partial spheres on the boundary. We see these spheres form quickly, and their number and placement change during slow growth. Eventually these structures flatten into quasi-two-dimensional cylinders near the end of the simulation shown in (a). In contrast, the structures in (b) do not have time to organize into spheres as the domain evolves more rapidly, and so we see multiple regions coalescing and mixing. The flattening and distortion of three-dimensional structures was discussed in \cite{hunding1985morphogen}, but the influence of rapid domain evolution was neglected, and it is clear that it plays a role in the emergent structures for such a domain. Dispersion sets are shown for each case in Fig.~\ref{cylinplotsStab}, and as in Fig.~\ref{rectangplotsStab}, we observe that modes of similar magnitudes are excited throughout the simulation times, but that modes corresponding to the vertical and radial directions are excited at different times. Note that rapid growth can especially influence transient mode selection (consider $\mathcal{K}_0$), and hence depending on the nonlinearities involved, the final patterned state.

\section{Applications to systems with higher-order spatial derivatives}\label{shsec}
As a specific example of a high-order spatial system, we focus on the scalar Swift-Hohenberg equation \cite{swift1977hydrodynamic}. In addition to being a canonical model for pattern formation on static domains, we remark that in the past few years this equation has been studied in evolving domains \cite{knobloch2014stability, krechetnikov2017stability}, in particular on one-dimensional intervals of the form $[0,r(t)]$. We choose to study the following form of the Swift-Hohenberg equation:
\begin{equation}\label{sh1}
\frac{\partial u}{\partial t} +\nabla_{\Omega(t)} \cdot (\mathbf{Q}u) = - d\left( q + \nabla_{\Omega(t)}^2 \right)^2 u + \left( -a + dq^2\right)u - u^3\,,
\end{equation}
where $d>0$ is the diffusion coefficient, while $q$ and $a$ are positive parameters. We choose this form as it admits a single stable steady state, $u=0$. We again mention that dynamics from the complex Swift-Hohenberg equation, with $u^3$ replaced by $|u|^2u$ and the sign of $a$ reversed, were considered on an evolving domain by \cite{knobloch2014stability} and \cite{krechetnikov2017stability}, with the domain being a time-dependent interval. Therefore, the Swift-Hohenberg equation is a natural canonical example of a fourth-order equation which permits pattern formation in the presence of an evolving domain. In order to match \eqref{sh1} with the theory of Sections \ref{hoformulation} and \ref{instabhigher}, we note that the choice of $\mathcal{P}(y)= - y^2 + 2qy$ corresponds to \eqref{sh1}.

As \eqref{sh1} is a good deal different from the reaction-diffusion equations \eqref{RD_system:eqn} we have studied thus far, we first consider the Turing instability condition for \eqref{sh1} on a static domain. In the static case, $\Omega(t)=\Omega(0)$. The zero steady state is always stable to homogeneous perturbations. A perturbation of the form \eqref{pert2} of this steady state is governed by
\begin{equation}
\frac{dV_k}{dt} = \left( -d\left( \rho_k^2 - 2q\rho_k\right)-a\right)V_k\,.
\end{equation}
The perturbation corresponding to the $k$th eigenfunction is unstable provided $-d\left( \rho_k^2 - 2q\rho_k\right)-a >0$, which is in turn possible provided that $\rho_k$ satisfies the inequality
\begin{equation}\label{sh2}
q - \sqrt{q^2 - \frac{a}{d}} < \rho_k < q + \sqrt{q^2 - \frac{a}{d}}\,.
\end{equation}
Clearly, we must have that $q > \sqrt{\frac{a}{d}}$, in order for there to be the possibility of a Turing instability. As $\rho_k \rightarrow \infty$ with $k\rightarrow \infty$, there are at most finitely many such $\rho_k$ satisfying \eqref{sh2}, as in the standard results on Turing instabilities in systems of two reaction-diffusion equations with second-order space derivatives. 

Returning to the dynamics of \eqref{sh1} on an evolving domain, an application of Theorem \ref{Thm5} gives that the condition for transient instability of the $k$th perturbation \eqref{pert2} is
\begin{equation}\label{sh3}
2dq \rho_k(t) - d\rho_k(t)^2 - a - \frac{\dot{\mu}(t)}{\mu(t)} >0 \,,
\end{equation}
where the base state $U(t)=0$ no longer depends on time. Therefore, there is a transient instability due to the $k$th spatial mode for $t\in \mathcal{I}_k$ provided that $\rho_k(t)$ satisfies
\begin{equation}
q - \sqrt{q^2 - \frac{1}{d}\left( a  + \frac{\dot{\mu}(t)}{\mu(t)}\right)} < \rho_k(t) < q + \sqrt{q^2 - \frac{1}{d}\left( a + \frac{\dot{\mu}(t)}{\mu(t)}\right)}
\end{equation}
for all $t\in \mathcal{I}_k$. 

We remark that this analysis is substantially simpler than that for the reaction-diffusion systems in the preceding sections for two reasons. Firstly it is simpler due to the scalar nature of the condition from Theorem \ref{Thm5}, and secondly because of the simple choice of kinetics which give a trivial base state (and hence no nonautonomous impacts from the base state's variation). Even in cases with kinetics admitting a nontrivial base state, we remark that such a function would evolve according to a scalar ODE, and hence essentially be slaved to the dynamics of $\frac{\dot{\mu}(t)}{\mu(t)}$.

We further remark that although we have used the spatial perturbation \eqref{pert2} corresponding to a Laplace-Beltrami eigenfunction, and hence invoked Theorem \ref{Thm5}, we shall later see the linear instability result is in good agreement with the patterns which emerge from numerical simulations. Therefore, the simplified perturbation and resulting instability criteria given in Theorem \ref{Thm5} did not discard any useful information about the onset of instability. As such, we did not have to consider the more complicated spectral problem for the fourth-order spatial operator in order to obtain a more general spectrum as used in the instability criteria of Theorem \ref{Thm6}. That said, for more complicated spatial operators $\mathcal{P}$ or more complicated spatial domains $\Omega(t)$, one should be aware that the instability criteria of Theorem \ref{Thm6} may be necessary.

\begin{figure}
\centering
\includegraphics[width=0.45\textwidth]{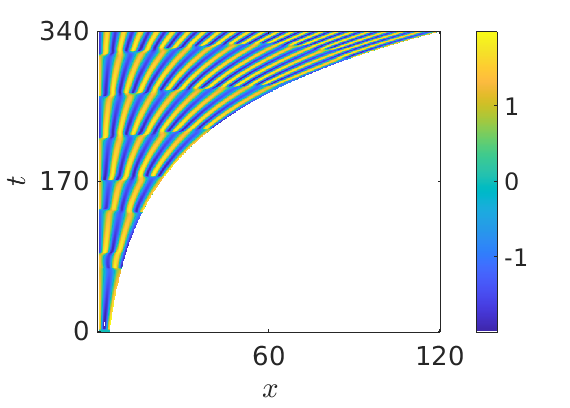}
\includegraphics[width=0.4\textwidth]{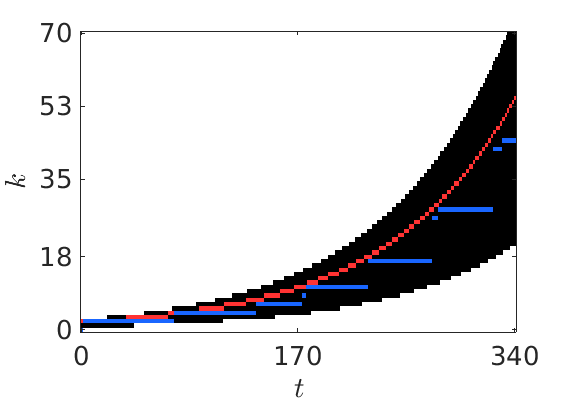}

a(i) \hspace{6cm} a(ii) 

\includegraphics[width=0.45\textwidth]{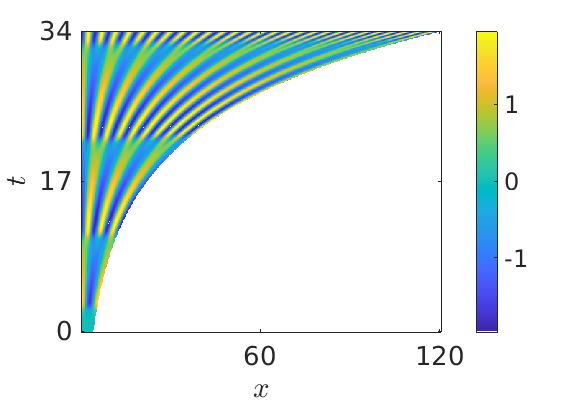}
\includegraphics[width=0.4\textwidth]{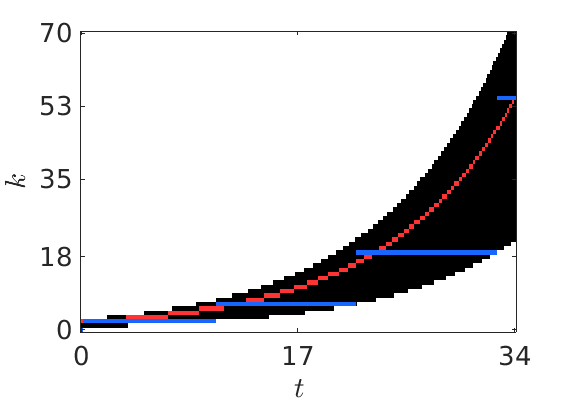}

b(i) \hspace{6cm} b(ii) 

\vspace{-0.1in}
\caption{Plots corresponding to the Swift-Hohenberg equation \eqref{sh1} on a growing interval with parameters $a=1$, $d=1$, and $q=2$. The domain is taken to grow as $r(t) = 4\exp(st)$ for growth rates (a) $s=0.01$ and (b) $s= 0.1$. In all simulations we take the final time such that the domain has grown to $30$ times its initial size. In column (i) we show plots of the PDE solution $u$ over space and time. In column (ii) we plot the dispersion set $\mathcal{K}_t$ in black, with the theoretically maximally growing mode in red and the largest frequency component of the FFT of $u(x,t)$ from the full numerical solution in blue. NB: The temporal and mode axes have different ranges for different growth rates.\label{SH1Dexp}}
\end{figure}

\begin{figure}
\centering
\includegraphics[width=0.45\textwidth]{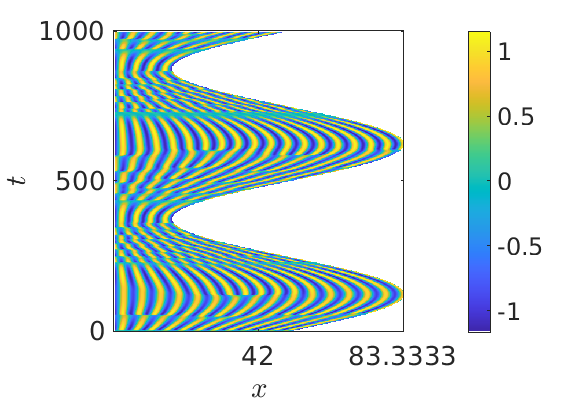}
\includegraphics[width=0.4\textwidth]{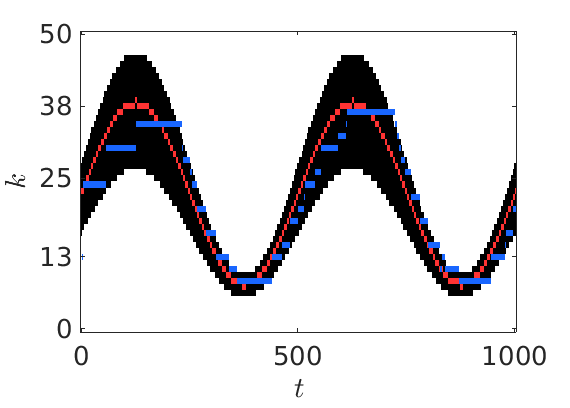}

a(i) \hspace{6cm} a(ii) 

\includegraphics[width=0.45\textwidth]{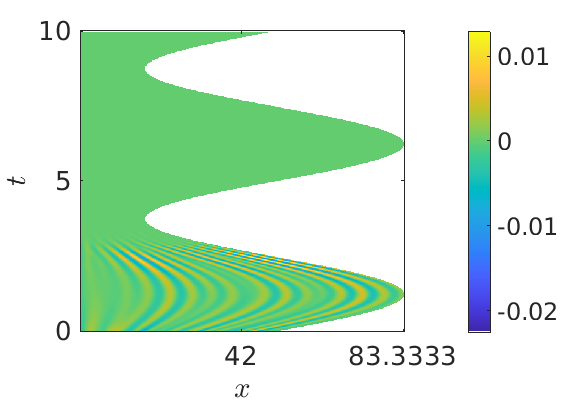}
\includegraphics[width=0.4\textwidth]{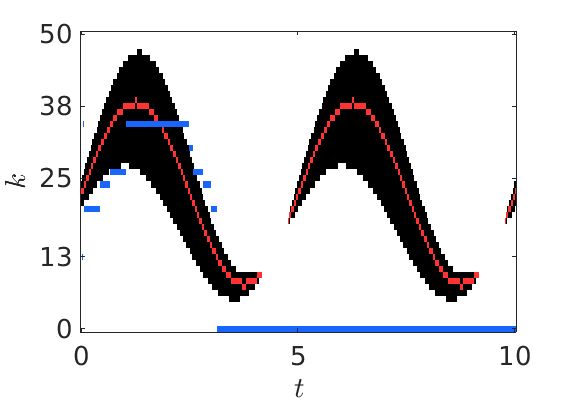}

b(i) \hspace{6cm} b(ii) 

\vspace{-0.1in}
\caption{Plots corresponding to the Swift-Hohenberg equation \eqref{sh1} on a growing interval with parameters $a=3$, $d=1$, and $q=2$. The domain is taken to grow as $r(t) = 50(1+(2/3)\sin(4\pi t/t_f))$ for time periods (a) $t_f=10^3$ and (b) $10$. In all simulations we take the final time such that the domain has grown to $30$ times its initial size. In column (i) we show plots of the PDE solution $u$ over space and time. In column (ii) we plot the dispersion set $\mathcal{K}_t$ in black, with the theoretically maximally growing mode in red and the largest frequency component of the FFT of $u(x,t)$ from the full numerical solution in blue. NB: The temporal and mode axes have different ranges for different growth rates.\label{SH1Dsin}}
\end{figure}

\subsection{Swift-Hohenberg on an evolving interval}

As in Section \ref{interval}, we first consider the dynamics of the Swift-Hohenberg equation \eqref{sh1} on an evolving line segment, $x \in [0, r(t)]$. As before, we consider the instability of modes with eigenvalues given by $\rho_k(t) = \frac{\pi^2 k^2}{r(t)^2}$, and volume expansion $\mu(t)=r(t)$. Broadly we find similar behaviours to that found in the reaction-diffusion setting, with the dispersion sets providing an approximation to the observed patterns. 

We first consider exponentially growing domains in Fig.~\ref{SH1Dexp}. As in Fig.~\ref{expplots}, we observe mode-doubling behaviour for some growth rates, and the observed modes are broadly characterized by the dispersion sets. However, we note that the dispersion sets and our instability analysis is much less sensitive to the rate of domain evolution than in the coupled reaction-diffusion system setting, which is likely due to the trivial base state. Nevertheless, full numerical simulations depend on the rate of domain evolution due to nonlinear pattern evolution, and hence we cannot capture the variations in unstable modes precisely.

Next we consider a periodically evolving interval in Fig.~\ref{SH1Dsin}. Qualitatively the dynamics are very similar to the Schnakenberg example on such a domain given in Fig.~\ref{sinplots}. We do remark that, again, the dispersion sets are more regular than in cases where the base state can exhibit complex dynamics. We also note that the linear analysis does not explain the loss of pattern in Fig.~\ref{sinplots}b(ii), where seemingly the rapid movement between unstable modes does not leave enough time for a net growth of any particular instability, and hence the transient pattern eventually falls back into the homogeneous equilibrium state. 

\subsection{Swift-Hohenberg on an evolving sphere}
Swift-Hohenberg equations and generalizations thereof have been studied on static spherical surfaces \cite{matthews2003pattern, sigrist2011symmetric}, with the model associated with the study of wrinkling and buckling of membranes \cite{stoop2015curvature}.

In Fig.~\ref{SH2Dspherelin} we give an example of solutions to \eqref{sh1} on the surface of a linearly growing 2-sphere. As anticipated from the one dimensional examples, the dispersion sets for increasing growth rate are very similar, having only small differences in the size of the unstable region at the beginning of the growth interval (this is due to the form of $\frac{\dot{\mu}(t)}{\mu(t)}$ for linear growth, which is much larger for small $t$). The simulations shown indicate that even this small variation can have a marked impact on the timescale at which patterns form; in the case of $s=0.01$, patterns are developed (e.g. $\max(u)\sim O(1)$) as early as $t= 7\approx 0.002t_f$, whereas for $s=0.15$, we see that the initial $O(10^{-1})$ perturbation has substantially decayed by $t=0.2t_f$, though a pattern does eventually form. As in the reaction-diffusion systems on manifolds shown before, these more rapidly grown manifolds lead to less regular patterns, though the wavelength is approximately the same.

\begin{figure}
\centering
\includegraphics[width=0.4\textwidth]{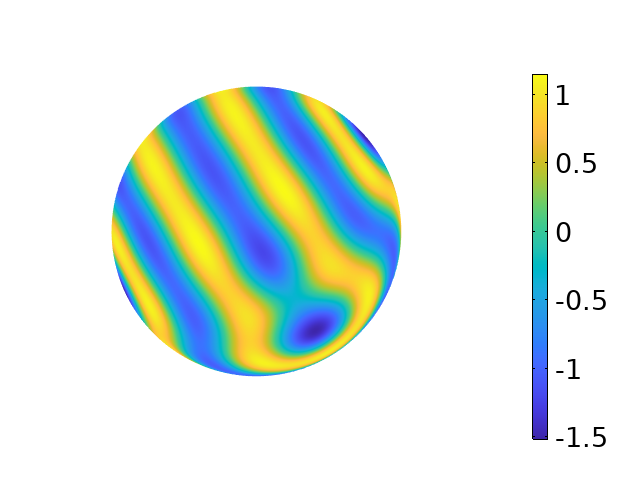}
\includegraphics[width=0.4\textwidth]{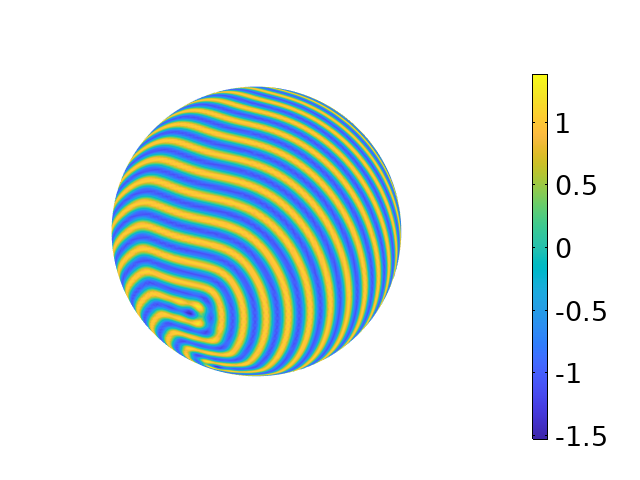}

a(i) \hspace{6cm} a(ii) 

\includegraphics[width=0.4\textwidth]{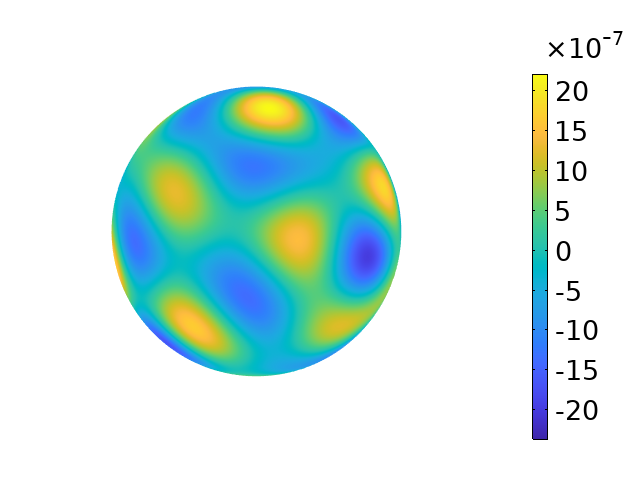}
\includegraphics[width=0.4\textwidth]{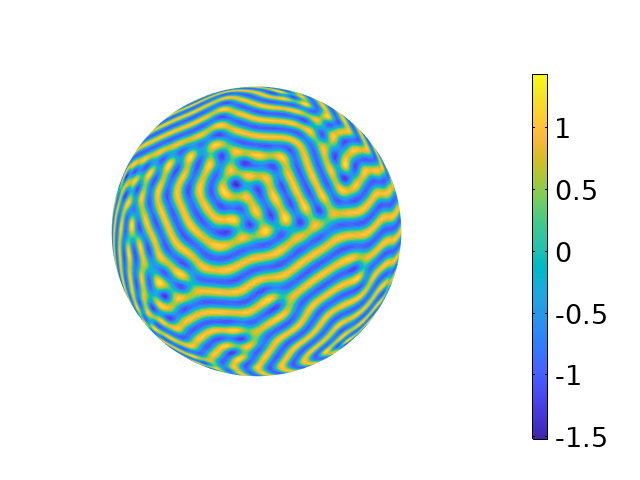}

b(i) \hspace{6cm} b(ii) 

\includegraphics[width=0.4\textwidth]{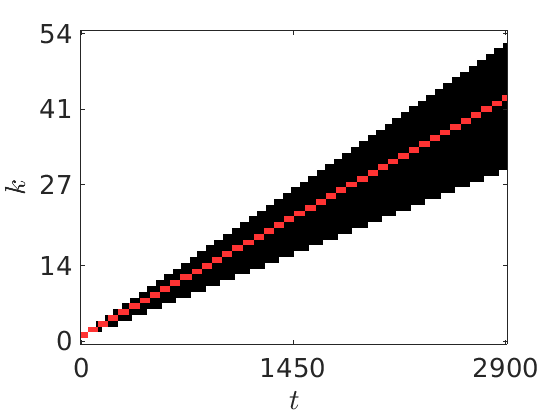}
\includegraphics[width=0.4\textwidth]{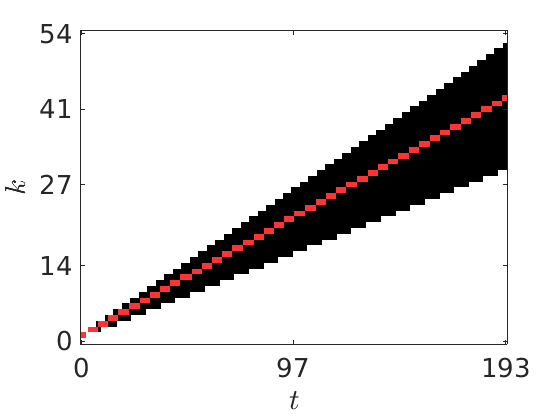}

c(i) \hspace{6cm} c(ii) 

\vspace{-0.1in}
\caption{Plots corresponding to the Swift-Hohenberg equation \eqref{sh1} on the surface of a growing sphere with parameters $a=3$, $d=1$, and $q=2$. The domain is taken to grow as $r(t) = (1+st)$ for growth rates (a) $s=0.01$ and (b) $s=0.15$, at times (i) $t=0.2t_f$ and (ii) $t=t_f$. For rows (a)-(b) we show plots of the PDE solution $u$ over space and time. Finally in row (c) we plot the dispersion set $\mathcal{K}_t$ in black for simulations corresponding to growth rates (i) $s=0.01$ and (ii) $s=0.15$. NB: The color scale in b(i) differs from the other plots, as the solution is still nearly zero.\label{SH2Dspherelin}}
\end{figure}

\section{Discussion}\label{sec5}
We have reviewed and generalized much of the contemporary work studying diffusion-driven instabilities in reaction-diffusion systems on growing domains. Our extensions include properly accounting for the non-autonomous nature of the base state of the system which is perturbed, allowing for general dilational evolution of the domain (of which a special case is the more commonly studied isotropic growth), and deriving a differential inequality (involving model parameters and the Laplace-Beltrami spectrum) to determine if a specific mode becomes unstable during a given time frame. Theorem \ref{Thm3} is a natural generalization of the Turing conditions on a static domain, yet explicitly accounts for the history-dependence due to the non-autonomous nature of the problem \cite{klika2017history}, and allows for arbitrary growth functions without the need to rely on slow growth or other simplifying assumptions. 

To summarize, the instability conditions given in Theorem \ref{Thm3} take the form
\begin{equation}\label{cond1}
\text{reaction kinetic terms} - \left( d_1 J_{22} + d_2 J_{11}\right) \rho_k + d_1 d_2 \rho_k^2 < \text{domain evolution terms}.
\end{equation}
Here ``reaction kinetic terms" include terms resulting from the dynamics of \eqref{uniform} which have been linearized (including terms involving the Jacobian matrix $J(\mathbf{U}(t))$), while ``domain evolution terms" involve terms which are specific to the manner of domain growth, such as terms involving the time derivatives of $\mu(t)$ and $\rho_k(t)$. If the instability is diffusion driven, then for the $k=0$ mode $\rho_k(t) \equiv 0$, we should have
\begin{equation}\label{cond2}
\text{reaction kinetic terms} > \text{domain evolution terms}.
\end{equation}
If \eqref{cond2} does not hold, then there is some homogeneous instability not due to diffusion. If \eqref{cond2} holds, while \eqref{cond1} holds for a particular index $k = k^* >0$ and a particular interval $\mathcal{I}_{k^*}$, then the spatial perturbation \eqref{pert2} corresponding to this particular $k^*$ inducing an instability for $t\in \mathcal{I}_{k^*}$. When the domain is static, the domain evolution terms vanish, and we are left with instability conditions of the form 
\begin{equation}\label{cond3}
\text{reaction kinetic terms} - \left( d_1 J_{22} + d_2 J_{11}\right) \rho_k + d_1 d_2 \rho_k^2 < 0,
\end{equation}
while the $k=0$ mode \eqref{cond2} is stable when reaction kinetic terms are positive, which is just the classical Turing condition on a static domain. Therefore, the condition given in Theorem \ref{Thm3} is a fairly natural generalization of the classical Turing conditions. Hence, although we have made few assumptions, and have avoided both asymptotic approximations of growth functions or assuming a constant base state, within our general framework we have captured the spirit of the original Turing conditions. In the limit of no growth, our results recover the Turing conditions for a static domain in a completely natural manner, without further effort or appeal to simplifications. Similarly, instability conditions for particular kinds of slow growth known in the literature (such as slow exponential growth), as well as quasi-static approximations for the slow growth regime, fall out of our results in the relevant limits.

Due to the time-dependent nature of growth terms and of the Laplace-Beltrami spectra, we have phrased our results in terms of instabilities present over a given time interval, rather than as $t\rightarrow \infty$ like in the classical Turing instability. This more general approach allows for the understanding of transient instabilities. This is a useful generalization, as in practice Turing patterns are selected in finite time, with patterns then remaining static once formed. As such, there is an interplay between the rate of growth of the domain and the rate at which this mode selection occurs. Hence, pattern formation will rely strongly on when certain modes result in instability on the timescale of the reaction kinetics, rather than simply if such modes ever induce instability at any arbitrary time. In contrast to the standard Turing theory for static domains, we conjecture that it may be the time duration for which a mode remains unstable that matters more than the degree or magnitude of instability at any instantaneous time (which is often considered by comparing the real part of eigenvalues in the static case in the limit $t\rightarrow \infty$). Of course, one would need nonlinear theory to address such issues (which is beyond the scope of the present paper), and even then, resolution is likely only on a case-by-case basis for given reaction kinetics, growth functions, and spatial patterns. 

Our numerical examples illustrate both the power of our analytical results (the dispersion sets obtained broadly match simulations, with some exception), and the large variety of phenomena one can expect when studying such problems. We remark that we intentionally only considered a small fraction of the parameter spaces for only two kinds of reaction kinetics, and already have observed behaviours which are qualitatively distinct from what has been commonly observed in related Turing systems in the past. In particular, we note that our main choice of the Schnackenberg kinetics \eqref{Schnack} gives a relatively simple Jacobian, and a base state with an uncomplicated evolution equation. More complicated reaction kinetics can likely lead to many new phenomena due to the complexity of non-autonomous phase spaces. Our results will immediately apply in such cases, providing insight into spatial instabilities around even a time-dependent base state, as shown for instance in Sec.~\ref{sec_FHN}.

Similarly, we have only considered a handful of growth functions and domain geometries, but unlike many results in the literature, the instability condition in Theorem \ref{Thm3} applies so long as one can compute derivatives of the growth functions and the time-dependent Laplace-Beltrami spectrum of the domain. We have used the cases of volume-preserving evolution and the Swift-Hohenberg example to show that even when the base state is static, domain evolution can substantially change the unstable modes observed, along with the qualitative behavior of spatial patterns altogether (cf.~Sec.~\ref{sec44}). Hence, changes in the structure of a domain are sufficient to modify the linear stability properties, as well as the patterns formed, even if the area or volume of the domain is fixed. Indeed, these examples show some of the largest discrepancies between quasi-static or asymptotically large-time approaches common in the literature (which tend to work best when evolution is slow and monotone) and pattern selection which actually occurs at transient timescales, as seen in the strong time dependence of not only the extent but also the shape of the dispersion sets $\mathcal{K}_t$ shown in Figs.~\ref{rectangplots}-\ref{cylinplots}.

We have also demonstrated the application of the same basic instability analysis on a canonical higher-order scalar spatial system, the Swift-Hohenberg equation. This demonstrates the generality of our approach, which is to be expected of a linear stability analysis. Additionally, this model is a much simpler test bed for exploring a variety of nonlinear kinetics and complicated evolution scenarios, as one can explore the influence of growth in the absence of a base state which can exhibit complex temporal dynamics. Extensions to planar domains could also be considered as generalizations of planar studies in the literature \cite{lloyd2008localized, m1995pattern}. These would be especially simple in the case of area-preserving evolution, as the nonautonomous dynamics would then be entirely embedded in the time-dependent spectral parameters.

The framework developed in this paper may also be applied to problems with time-dependent reaction kinetics, on either growing or static domains. In particular, temporal oscillations have been employed in photosensitive reactions to control Turing patterns, and in some cases eliminate them \cite{dolnik2001resonant, horvath1999control, wang2006resonant}. Spatiotemporal forcing has also been used to mimic domain growth in such systems \cite{konow2019turing, miguez2006effect, miguez2005turing, rudiger2003dynamics}. As our approach allows for non-autonomous Jacobian matrices, such applications will naturally benefit from our approach. Another natural extension is to consider the problem of domain evolution simultaneous to an imposed flow, to study problems for which the velocity due to dilution is augmented with a velocity field which transports the chemical species or morphogens in some manner. Recently, the flow properties and cross-section geometry for activator-inhibitor systems within a tube were shown to influence emergent Turing patterns \cite{van2019diffusive}. The extension of such results to scenarios where the tube dilates periodically in time would be one such extension incorporating evolving spatial domains which would have biochemical and physiological relevance. Additionally, spatial heterogeneity in reaction-diffusion systems is also of contemporary interest, and incorporating this alongside growth would lead to a much more biologically realistic extension of Turing's theory \cite{page2005complex, krause2019one}. We also mention that many other contemporary problems in a range of fields consist of systems on time-evolving domains \cite{knobloch2014stability, knobloch2015problems}. While a general theory of such systems does not exist, understanding the impact of growth for reaction-diffusion systems can provide an important example for these larger classes of systems. We anticipate that the results presented here can be readily generalized to other settings.

\end{document}